\pdfoutput=1
\RequirePackage[l2tabu,orthodox]{nag}
\documentclass[USenglish]{lipics-v2018}\hideLIPIcs\nolinenumbers

\usepackage{microtype}
\usepackage[T1]{fontenc}
\usepackage[utf8]{inputenc}

\usepackage[noend]{algorithmic}
\usepackage{xspace}
\usepackage{xcolor}
\usepackage{tikz}
\usetikzlibrary{shapes,arrows,positioning,decorations.markings,shapes.misc,calc,angles,intersections,quotes,decorations.pathmorphing}

\newcommand{\eps}{\varepsilon}

\newtheorem{fact}[theorem]{Fact}
\newtheorem{conj}[theorem]{Conjecture}

\newcommand{\N}{\mathbf N}
\newcommand{\Z}{\mathbf Z}

\newcommand{\poly}{\operatorname{poly}\xspace}
\newcommand{\polylog}{\operatorname{polylog}\xspace}

\newtheorem*{rep@theorem}{\rep@title}
\newcommand{\newreptheorem}[2]{%
	\newenvironment{rep#1}[1]{%
		\def\rep@title{#2 \ref{##1}}%
		\begin{rep@theorem}[restated]}%
		{\end{rep@theorem}}}

\newreptheorem{theorem}{Theorem}
\newreptheorem{lemma}{Lemma}
\newreptheorem{proposition}{Proposition}
\newreptheorem{corollary}{Corollary}

\newcommand{\gate}[1]{\ensuremath{\mathrm{#1}}\xspace}

\newcommand{\TC}[1]{\ensuremath{\mathrm{TC}^{#1}}\xspace}
\newcommand{\AC}[1]{\ensuremath{\mathrm{AC}^{#1}}\xspace}

\usepackage{mathtools}
\DeclarePairedDelimiter\paren{\lparen}{\rparen}
\DeclarePairedDelimiter\abs{\lvert}{\rvert}
\DeclarePairedDelimiter\set{\{}{\}}
\DeclarePairedDelimiterX\setc[2]{\{}{\}}{#1 \colon #2}
\DeclarePairedDelimiterX\parenc[2]{\lparen}{\rparen}{\,#1 \;\delimsize\vert\; #2\,}

\usepackage{enumitem}
\setlist[itemize]{label=$\circ$}

\newenvironment{algor}[3]{%
\bigskip
\noindent{\sffamily\bfseries Algorithm #1} ({\itshape#2\/}) {\itshape #3}
\begin{description}[labelindent=0pt,labelwidth=1.2em,labelsep=4pt,leftmargin=!]%
\vspace{-.8ex}}{%
\end{description}\medskip}

\author
{Amir Abboud}
{IBM Almaden Research Center, San Jose, CA, USA}
{amir.abboud@ibm.com}
{}{}

\author
{Karl Bringmann}
{Max Planck Institute for Informatics, Saarland Informatics Campus, Saarbrücken, Germany}
{kbringma@mpi-inf.mpg.de}
{}{}

\author
{Holger Dell}
{Saarland University and Cluster of Excellence (MMCI), Saarbrücken, Germany}
{hdell@mmci.uni-saarland.de}
{https://orcid.org/0000-0001-8955-0786}
{}

\author
{Jesper Nederlof}
{Eindhoven University of Technology, Eindhoven, The Netherlands}
{j.nederlof@tue.nl}
{}{}

\authorrunning{Amir Abboud, Karl Bringmann, Holger Dell, and Jesper Nederlof}
\title{More Consequences of Falsifying SETH\newline and the Orthogonal Vectors Conjecture}
\titlerunning{More Consequences of Falsifying SETH and the Orthogonal Vectors Conjecture}

\relatedversion{Proceedings version
  \href
  {http://dx.doi.org/10.1145/3188745.3188938}
  {doi:10.1145/3188745.3188938}
  to appear at the 50th Annual ACM SIGACT Symposium on the Theory of Computing, June 25--29, 2018, Los Angeles, CA, USA.
}
\subjclass{
\ccsdesc[500]{Theory of computation~Problems, reductions and completeness}
}
\keywords{fine-grained complexity, OV, clique, satisfiability, threshold circuits}
\category{}

\begin{document}

	\maketitle

  \begin{abstract}
    % !TEX root = ABDN07-making-SETH-great-again.tex

The Strong Exponential Time Hypothesis and the OV-conjecture are two popular
hardness assumptions used to prove a plethora of lower bounds, especially in the
realm of polynomial-time algorithms.
The OV-conjecture in moderate dimension states there is no $\eps>0$ for which an~$O(N^{2-\eps})\;\poly(D)$ time algorithm can decide whether there is a pair of
orthogonal vectors in a given set of size~$N$ that contains $D$-dimensional
binary vectors.

We strengthen the evidence for these hardness assumptions. In particular, we
show that if the OV-conjecture fails, then two problems for which we are far from
obtaining even tiny improvements over exhaustive search would have surprisingly
fast algorithms. If the OV conjecture is false, then there is a fixed~$\eps>0$ such
that:
\begin{enumerate}
  \item For all~$d$ and all large enough~$k$, there is a randomized algorithm that takes
  $O(n^{(1-\eps)k})$ time to solve the Zero-Weight-$k$-Clique
  and Min-Weight-$k$-Clique problems on $d$-hypergraphs with~$n$ vertices.
  As a consequence, the OV-conjecture is implied by the Weighted Clique conjecture.
  \item
    For all $c$, the satisfiability of sparse \TC1 circuits on $n$ inputs
    (that is, circuits with $cn$ wires, depth $c\log n$, and negation, AND, OR, and threshold gates) can be computed in time~${O((2-\eps)^n)}$.
\end{enumerate}

  \end{abstract}
	
  % !TEX root = ABDN07-making-SETH-great-again.tex

\section{Introduction}
The Strong Exponential Time Hypothesis (SETH) is a cornerstone of contemporary algorithm design that was formulated by Impagliazzo and Paturi~\cite{DBLP:journals/jcss/ImpagliazzoP01} and recently gained extensive popularity. 
It postulates that exhaustive search is essentially the fastest possible method to decide the satisfiability of bounded-width CNF formulas.
SETH is used in the study of exact and fixed parameter tractable algorithms, see
e.g~\cite{DBLP:journals/talg/CyganDLMNOPSW16, DBLP:conf/soda/PatrascuW10} or the
book by Cygan et al.~\cite{DBLP:books/sp/CyganFKLMPPS15}.
In this area, it implies, among other things, tight lower bounds for problems on graphs that have small treewidth or pathwidth~\cite{DBLP:conf/soda/LokshtanovMS11a,DBLP:conf/focs/CyganNPPRW11, DBLP:conf/stoc/CyganKN13}.

Closely related to SETH, 
the orthogonal vectors problem (OV) is, given two sets~$A$ and~$B$ of~$N$ vectors from
$\set{0,1}^D$, to decide whether there are vectors~$a\in A$ and $b\in B$ such
that $a$ and $b$ are orthogonal in~$\Z^D$.
If $D\le O(N^{0.3})$ holds, the problem can be solved in time $\tilde{O}(N^2)$ using an algorithm based on fast rectangular matrix multiplication (see e.g.~\cite{DBLP:conf/focs/Gall12}).
SETH implies~\cite{Wil05} that this algorithm is essentially as fast as possible; in particular, SETH implies the following hardness conjecture, which was given its name by Gao et al.~\cite{GIKW17}.

\begin{conj}[Moderate-dimension OV Conjecture]
  \label{conj OVC}
  \hspace{-1mm}
  There are no reals $\eps,\delta>0$ such that~OV for $D=N^{\delta}$ can be
  solved\footnote{In this work we hardly distinguish between randomized and deterministic algorithms, as even randomized algorithms with the desired running times would constitute an important breakthrough.}
  in time $O(N^{2-\eps})$.
\end{conj}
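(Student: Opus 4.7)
The statement is a hardness conjecture, so a direct proof must rule out every algorithm solving OV on $N$ vectors in $D = N^\delta$ dimensions within time $O(N^{2-\eps})$. My plan proceeds in three stages, and I flag upfront that the last is where I expect everything to collapse.

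First, I would set up a communication-complexity reduction: any $T$-time OV algorithm whose input is split between two players can be simulated by a protocol of cost $O(T \log N)$ bits, via the standard accounting that charges $O(\log N)$ bits per input access. Coupled with the randomized communication lower bound for set disjointness on $D$-bit strings, this yields $T = \tilde\Omega(D)$ for a single pair, hence $\tilde\Omega(N^{1+\delta})$ when the algorithm must locate an orthogonal pair somewhere among $N$ disjoint candidates — still far short of the quadratic barrier. The natural next step is to lift this to the $N^2$-fold setting via a direct-product theorem: I would attempt to show that deciding "is there any orthogonal pair among $N^2$ candidate pairs" forces the protocol to spend essentially $N^{2-o(1)}$ independent work units, each of cost $\tilde\Omega(D)$. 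Together these two ingredients would ideally yield $T = \tilde\Omega(N^{2-o(1)} D^{1-o(1)})$ and prove the conjecture.

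The third stage, and the decisive obstacle, is that the above argument inherently models the algorithm as performing pair-by-pair work, which algebraic algorithms circumvent. Fast rectangular matrix multiplication already solves OV in $\tilde O(N^2)$ time for $D \le N^{0.3}$, as the excerpt itself notes, and the polynomial-method algorithm of Abboud--Williams--Yu goes genuinely subquadratic for small $D$ by exploiting Strassen-style cancellations across many pairs. Any unconditional proof of the conjecture must therefore explain why such cancellations cannot be extended to $D = N^\delta$, and no technique is known for ruling this out in generality — the closest we can come is the conditional SETH-based route, which the present request excludes. What my plan really demonstrates is that the conjecture sits above a genuine circuit-complexity barrier, which is why the paper itself treats it as a hypothesis rather than attempting a direct proof and instead studies its consequences and its reductions to and from other fine-grained assumptions.
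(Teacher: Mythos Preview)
Your assessment is correct: the statement is a \emph{conjecture}, and the paper does not attempt to prove it. It is introduced as a hardness hypothesis (attributed to Gao et al.\ and implied by SETH via Williams's reduction), and the rest of the paper is devoted to deriving consequences of its \emph{failure}---in particular, that faster algorithms for Weighted $k$-Clique in hypergraphs and for sparse threshold-circuit satisfiability would follow. There is therefore no ``paper's own proof'' to compare against.

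Your meta-analysis of why an unconditional proof is out of reach is also on target. The communication-complexity and direct-product route you sketch would at best establish lower bounds in restricted models that charge per pair, and as you note, the known $\tilde O(N^2)$ algorithm via rectangular matrix multiplication for $D \le N^{0.3}$ (which the paper itself cites) and the polynomial-method results already demonstrate that algebraic cancellation defeats any such per-pair accounting. Proving the conjecture unconditionally would require ruling out all such algebraic speedups for $D = N^\delta$, which is beyond current techniques; this is precisely why the paper, like the rest of the fine-grained complexity literature, treats it as an assumption to be strengthened by reductions rather than a theorem to be proved.
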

The moderate-dimension OV conjecture is used to study the fine-grained complexity of problems in~P, for which it has remarkably strong and diverse implications.
If the conjecture is true, then dozens of important problems from all across computer science exhibit running time lower bounds that match existing upper bounds up to subpolynomial factors.
These include pattern matching and other problems in bioinformatics~\cite{AVW14,BI15,KPS17,ABBK17}, graph algorithms~\cite{RV13,AV14,GIKW17}, computational geometry~\cite{Bring14}, formal languages \cite{BI16,BGL16}, time-series analysis \cite{ABV15a,BK15}, and even economics~\cite{MPS16} (see~\cite{Vass15} for a more comprehensive list).

Gao et al.~\cite{GIKW17} also named the \emph{low-dimension OV conjecture}, which asserts that OV does not have subquadratic algorithms whenever $D=\omega(\log N)$ holds.
The low-dimension implies the moderate-dimension variant of the OV conjecture, and both are implied by~SETH~\cite{Wil05}.
Recent results on the hardness of approximation problems, such as Maximum Inner Product~\cite{ARW17}, rely on the stronger conjecture (perhaps also \cite{BIL17,BRSV17a}).
However, for the vast majority of OV-based hardness results, reducing the dimension only affects lower-order terms in the lower bounds and so it often suffices to assume the moderate-dimension variant.
Doing so makes results stronger, and it is this variant of the OV conjecture that we strengthen further in the present work.

\tikzstyle{vecArrow} = [thick,
decoration={markings,mark=at position 1 with {\arrow[semithick]{open triangle 60}}},
double distance=1.4pt, shorten >= 5.5pt,
preaction = {decorate},
postaction = {draw,line width=1.4pt, white,shorten >= 4.5pt}]%
\tikzstyle{innerWhite} = [semithick, white,line width=1.4pt, shorten >= 4.5pt]%
\begin{figure*}[pt]
  \centering
  \resizebox{\columnwidth}{!}{%
    \begin{tikzpicture}[thick, every node/.style={scale=0.75},node distance=2cm,inner sep = 7pt,every node/.style=on grid]
      \begin{scope}
        \node[draw,rounded rectangle,align=center] (a) {Min-Weight\\$k$-Clique};
        \node[draw,rounded rectangle,below of=a,align=center] (b) {Min-Weight\\3-Clique};
        \node[draw,rounded rectangle,right of=b,align=center,xshift=.7cm] (c) {APSP};
        \node[draw,rounded rectangle,right of=c,align=center,xshift=1cm] (d) {OV in dimension $N^\delta$};
        \node[draw,rounded rectangle,above of=d,align=center] (e) {CNF-SAT};
        \node[cloud,cloud puffs=20,cloud puff arc=110,aspect=2,inner sep=0mm,draw,below of=b,align=center,yshift=-0.3cm] (f) {Various Problems\\on Graphs};
        \node[cloud,cloud puffs=20,cloud puff arc=110,aspect=2,inner sep=-1mm,draw,below of=d,align=center,yshift=-0.3cm] (g) {Various Problems on\\Strings, Graphs, etc.};
        \node[cloud,cloud puffs=20,cloud puff arc=110,aspect=2,inner sep=-4mm,draw,left of=b,align=center,xshift=-3cm] (h) {Maximum Weight Rectangle \\ Improving Viterbi's Algorithm\\ Tree Edit Distance};

        \draw[->] (a) to (b);
        \draw[<->] (b) to (c);
        \draw[->] (e) to (d);
        \draw[->] (d) to (g);
        \draw[->] (b) to (f);
        \draw[->] (a) to (h);
        \draw[->,line width=1mm,shorten <=0.2cm,shorten >=0.3cm] (a) -- (d) node[midway,sloped,above] {Theorem~\ref{theorem: OV main}};
      \end{scope}
    \end{tikzpicture}
  }
	\caption{\label{fig1}%
  Illustration of the landscape of Hardness in P.
  An arrow from problem A to problem B indicates that improving the runtime of problem B from $T_B$ to $T_B^{1-\eps}$ implies an improvement for problem A from $T_A$ to $T_A^{1-\eps'}$.
  Our contribution is the bold black arrow (Theorem~\ref{theorem: OV main}).
  }
\end{figure*}
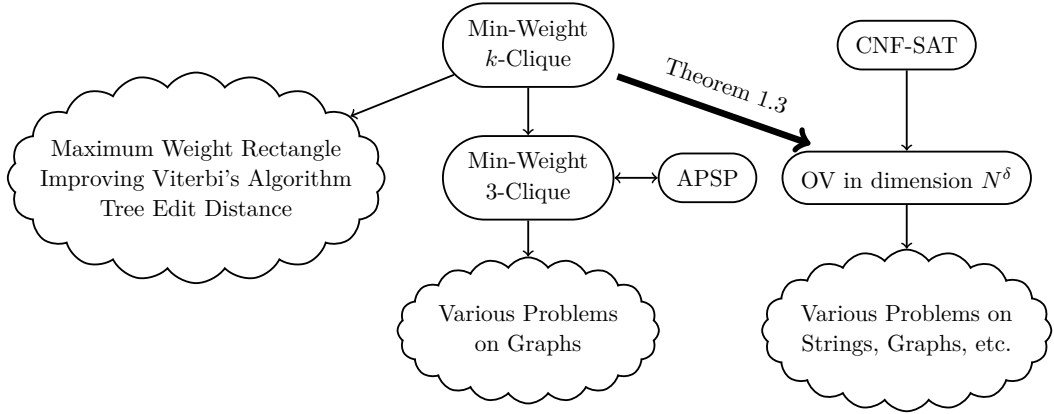
\subsection{Other conjectures}
Two other popular conjectures in fine-grained complexity make assertions for the All-Pairs-Shortest-Path Problem~(APSP)%
\footnote{The APSP problem is to compute all pairwise distances in a graph (given by its adjacency matrix) on $n$ nodes and with edges weights in some polynomial range. It is conjectured to require $n^{3-o(1)}$ time, and many problems, especially on graphs, are known to be equivalent to or at least as hard as APSP, see e.g.\ \cite{RZ04,VW09,AV14,AGV15,Saha15,AVY15,Dahlgaard16}.}
and the $3$-SUM problem%
\footnote{The $3$-SUM problem is to decide if a given set of $n$ integers contains three that sum to zero. It is conjectured that the problem requires $n^{2-o(1)}$ time. Many problems, especially in computational geometry, are known to be $3$-SUM-hard, see \cite{DBLP:journals/comgeo/GajentaanO95}.}.
It is an important and longstanding open question to determine the relationship between APSP, $3$-SUM, and OV.
In particular, it is open whether the APSP conjecture or the $3$-SUM conjecture imply the moderate-dimension OV conjecture.

Closely related to APSP is the Min-Weight-$k$-Clique problem:
Given a graph on $n$ nodes with integer edge-weights in some polynomial range, the goal is to find a $k$-clique of minimum weight.
The exhaustive search algorithm solves this problem in $O(n^k)$ time,
and for $k=3$, the problem is subcubically-equivalent to the APSP conjecture~\cite{DBLP:conf/focs/WilliamsW10}.
That is, either both APSP and Min-Weight-$3$-Clique have algorithms running in time $O(n^{3-\eps})$ for some $\eps > 0$, or neither of them do.

For all integers $k \geq 3$, there is a simple reduction from Min-Weight-$k$-Clique to Min-Weight-$3$-Clique, and by combining it with the fastest known APSP algorithm~\cite{Wil14,CW16}, we can solve Min-Weight-$k$-Clique in time $n^{k}/\exp\paren[\big]{\Omega(\sqrt{\log{n}})}$.
The improvement of this algorithm over exhaustive search is subpolynomial.
Due to the equivalence with APSP for $k=3$, it is natural to conjecture that a truly polynomial advantage in the running time is impossible.

\begin{conj}[Weighted Clique Conjecture]\label{conj weighted clique}
  There is no real $\eps>0$ and integer $k\ge 3$ such that the Min-Weight-$k$-Clique problem on $n$-vertex graphs and with edge-weights in $\{-M,\ldots,M\}$ can be solved in time $O(n^{(1-\eps)k}) \polylog{M}$.%
\end{conj}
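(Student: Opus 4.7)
The statement above is a \emph{conjecture}, not a theorem the authors intend to establish, so strictly speaking there is no proof to sketch: the Weighted Clique Conjecture is posited as a hardness hypothesis that the paper will \emph{connect} to OV via Theorem~\ref{theorem: OV main}, rather than \emph{derive} from more primitive assumptions. Any unconditional proof would resolve one of the central open questions in fine-grained complexity, since for $k=3$ the Min-Weight-$k$-Clique problem is subcubically equivalent to the APSP conjecture by Williams--Williams~\cite{DBLP:conf/focs/WilliamsW10}, and no super-subcubic lower bound is currently known for APSP.

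What a proposal can usefully offer is the plausibility scaffolding that motivates the conjecture, and the structural reductions that tie its statement for general $k$ to its statement for $k=3$. First, a standard reduction shows that Min-Weight-$k$-Clique reduces to Min-Weight-$3$-Clique: partition the vertex set into three groups and replace each group by the $\Theta(n^{k/3})$ ``super-vertices'' representing all $(k/3)$-subsets of vertices, defining the weight between two super-vertices as the total weight of the induced bipartite subgraph; a minimum triangle in the contracted graph corresponds to a minimum $k$-clique in the original. Combined with the Williams--Chan APSP algorithm, this yields the current best $n^k / \exp(\Omega(\sqrt{\log n}))$ upper bound and confirms that exhaustive search is beaten only subpolynomially. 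Second, the same reduction is weight-preserving and of only polynomial blow-up, so an $O(n^{(1-\eps)k})$ algorithm for Min-Weight-$k$-Clique would induce an $O(n^{3-3\eps})$ algorithm for Min-Weight-$3$-Clique and hence refute the APSP conjecture; this interlinking is what makes the linear-in-$k$ form of the conjectured lower bound natural.

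The essential obstacle to turning this plausibility into an actual proof is that no current technique produces unconditional polynomial lower bounds for problems of this form: circuit lower bounds, communication complexity, and algebraic degree arguments all fall well short. Fine-grained complexity accordingly treats conjectures such as this one as axiomatic and extracts consequences from them. In the present paper, the role of the Weighted Clique Conjecture is exactly to serve as such an axiom: Theorem~\ref{theorem: OV main} will show that this assumption, which is a priori incomparable with SETH, nevertheless suffices to imply the moderate-dimension OV conjecture.
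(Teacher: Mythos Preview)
Your core observation is correct: this is stated as a conjecture, the paper offers no proof of it, and none is expected. The paper's role for this statement is purely to name a hardness hypothesis that Theorem~\ref{theorem: OV main} will later show implies the moderate-dimension OV conjecture.

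One correction to your plausibility discussion, however: the super-vertex reduction you describe goes \emph{from} Min-Weight-$k$-Clique \emph{to} Min-Weight-$3$-Clique on $\Theta(n^{k/3})$ vertices, so it shows that a fast algorithm for the $k=3$ case yields a fast algorithm for general $k$, not the reverse. Consequently the Weighted Clique Conjecture implies the APSP conjecture for the trivial reason that the $k=3$ case already \emph{is} APSP (up to the subcubic equivalence of~\cite{DBLP:conf/focs/WilliamsW10}); but your claim that an $O(n^{(1-\eps)k})$ algorithm for some fixed $k$ would, via ``the same reduction'', produce an $O(n^{3-3\eps})$ algorithm for Min-Weight-$3$-Clique does not follow. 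The quantifier structure of the conjecture---``there is no $\eps>0$ and integer $k\ge 3$''---means it is refuted by a fast algorithm for \emph{any single} $k$, and such a refutation for some large $k$ is not currently known to refute APSP. This is precisely why the Weighted Clique Conjecture is a stronger hypothesis than the APSP conjecture, not an equivalent one.
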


This conjecture implies the APSP conjecture, and thus implies lower bounds for all problems that are known to be APSP-hard~\cite{RZ04,VW09,AV14,AGV15,Saha15,AVY15,Dahlgaard16}.
In addition, the Weighted Clique conjecture implies lower bounds for a variety of problems that are not known to be APSP-hard:
the Local Alignment problem
\cite{AVW14}
from bioinformatics, 
the Maximum Rectangle problem
\cite{BDT16}
from computational geometry,
the Viterbi problem
\cite{BT16}
from machine learning and,
the Tree Edit Distance problem
\cite{BGMW17}.

\subsection{Our Results for OV}

We prove that the Weighted Clique conjecture (Conjecture~\ref{conj weighted clique}) implies the moderate-dimension OV conjecture (Conjecture~\ref{conj OVC}).
To this end, we design a tight randomized reduction from Min-Weight-$k$-Clique to OV.
The impact of this result on fine-grained complexity in P is depicted in Figure~\ref{fig1}.
As can be seen, we identify Min-Weight-$k$-Clique as a core problem in this landscape, since it tightly reduces to most problems that have known conditional lower bounds; the main exceptions are 3-SUM-hard problems and the few problems that do require the low-dimension version of the OV conjecture. 

In fact, our result is even stronger: We show that improved algorithms for moderate-dimension OV leads to improved algorithms for finding weighted cliques \emph{even in hypergraphs}.
A $d$-hypergraph is a hypergraph in which all edges are of size at most $d$.
A \emph{clique} of a $d$-hypergraph $G$ is a subset $X \subseteq V(G)$ such that for every $e \subseteq X$ of size at most $d$ we have $e \in E(G)$.
The \mbox{\emph{$k$-Clique}} problem is given~$G$ as input to find such a clique~$X$ of size~$k$.
We also study weighted versions, where we are additionally given an edge-weight function $w:E(G)\to\Z$ and a target integer~${t\in\Z}$.
The weight of a clique~$X$ in~$G$ is the sum $\sum_{e} w(e)$ over all edges~$e\in E(G)$ with $e\subseteq X$.
In the \emph{Exact-Weight-$k$-Clique} problem, we need to find a $k$-clique~$X$ of weight exactly~$t$, and in \emph{Min-Weight-$k$-Clique} we need one of weight at most~$t$.
We are ready to formally state our first theorem.
\def\theoremOVmain{%
  If the moderate-dimension OV conjecture is false, then
  there exists an $\eps>0$ such that for every integer~$d$ there is a (large) integer $k=k(d,\eps)$ satisfying the following statements:
	\begin{itemize}
		\item $k$-Clique can be solved on $d$-hypergraphs in time $O(n^{(1-\eps)k})$.
		\item Exact-Weight-$k$-Clique and Min-Weight-$k$-Clique can be solved on $d$-hypergraphs with weights in
      $\set{-M,\dots,M} $ in randomized time~$O(n^{(1-\eps)k})\cdot\polylog M$.
	\end{itemize}%
}
\begin{theorem}\label{theorem: OV main}
  \theoremOVmain
\end{theorem}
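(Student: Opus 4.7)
The plan is a randomized reduction from $k$-Clique on $d$-hypergraphs (and its weighted variants) to moderate-dimension OV. The framework is a standard split-and-enumerate: assuming an $O(N^{2-\eps_0})\poly(D)$-time OV algorithm at dimension $D=N^\delta$, set $k=2\ell$ and choose $\ell$ large in terms of $d$, $\eps_0$, $\delta$ to absorb the overheads of the reduction.

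For the unweighted case, I would partition the vertex set into $k$ groups $V_1,\dots,V_k$ of size $n/k$ and enumerate \emph{half-cliques}, namely $\ell$-tuples $L\in V_1\times\cdots\times V_\ell$ satisfying every within-$L$ hyperedge constraint (and right half-cliques $R$ analogously). This produces two families of at most $n^\ell$ candidates. OV coordinates are indexed by pairs $(T,\alpha)$, where $T\subseteq\{1,\dots,k\}$ ranges over the $O(k^d)$ crossing slots (size at most $d$, meeting both halves) and $\alpha$ ranges over the at most $O(n^{d-1})$ vertex-assignments to the right-positions of $T$. Set $a_L[(T,\alpha)]=1$ iff the would-be hyperedge $L(T_L)\cup\alpha$ is missing from $E(G)$, and $b_R[(T,\alpha)]=1$ iff $R$'s right-positions in $T$ match $\alpha$; then $\langle a_L,b_R\rangle$ counts missing crossing hyperedges and vanishes exactly when $(L,R)$ is a $k$-clique. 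With $N\le n^\ell$ and $D=O(n^{d-1})=N^{(d-1)/\ell}$, taking $\ell$ large makes $(d-1)/\ell<\delta$ and absorbs the $\poly(D)$ factors, giving an $O(n^{k(1-\eps)})$ clique algorithm for some $\eps>0$.

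For Exact- and Min-Weight, the clique weight decomposes as $W(L)+W(R)+W_{LR}$, with the first two determined by each half and $W_{LR}$ being a sum over crossing hyperedges. The key additional ingredient is encoding the additive equation $W_{LR}=T^*$ (where $T^*=t-W(L)-W(R)$) into OV. I would: (i) use random modular fingerprinting, picking a random prime $p$ of size $\poly(n)$ so that $W_{LR}=T^*$ reduces with high probability to a small collection of tests $W_{LR}\equiv T^*\pmod p$; (ii) enumerate over residue classes of $W(L)\bmod p$ and $W(R)\bmod p$ to pin down $T^*\bmod p$; (iii) lift the unweighted OV construction by extending each coordinate $(T,\alpha)$ so that the inner product computes $W_{LR}\bmod p$ rather than merely counting missing hyperedges, and by adding one extra coordinate shifted by $-T^*$ so that orthogonality becomes exactly $W_{LR}\equiv T^*\pmod p$; (iv) implement the resulting $\mathbb{F}_p$-valued orthogonality test via $O(\log p)$ standard binary OV instances using bitwise decompositions of the entries. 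Min-Weight reduces to Exact-Weight by binary search over $t$ with $\polylog M$ overhead.

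The main obstacle is the weighted case: OV natively expresses conjunctions of binary equalities, whereas the weight condition is a global additive equality. Embedding this into OV without inflating $N$ or $D$ beyond the moderate-dimension regime (keeping $D\le N^\delta$ for arbitrarily small $\delta$) is what forces the use of randomization and bitwise/modular decompositions, and is precisely why the weighted statements in Theorem~\ref{theorem: OV main} are randomized. The unweighted reduction, by contrast, is a clean Williams-style split-and-enumerate, with the only extra care being the handling of hyperedges of size up to $d$.
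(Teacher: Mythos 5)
Your unweighted reduction is essentially the paper's split-and-enumerate idea (the paper passes through a $k$-OV intermediate in Lemma~\ref{lemma: kclique to kOV} before collapsing two halves in Lemma~\ref{lemma: kOV to 2OV}, but this is cosmetic); that case is sound. The weighted case has a genuine gap in step~(iv). After steps (i)--(iii) you hold, for each pair $(L,R)$, vectors with $a_L[(T,\alpha)]\in\{0,\dots,p-1\}$ and $b_R$ binary, and you must test the arithmetic condition $\sum_{T,\alpha}a_L b_R\equiv T^*\pmod p$. This is not an orthogonality condition. Binary OV tests whether an inner product of $\{0,1\}$-vectors is zero over $\mathbb Z$, which is special: all summands are non-negative, so the sum is zero exactly when every term is, i.e., orthogonality is a conjunction of local disjointness constraints. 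A modular sum with non-boolean entries has no such local characterization: your proposed bitwise decomposition of the $a_i$ does not split the congruence into bit-plane disjointness tests, because binary addition carries across positions; and ``adding one extra coordinate shifted by $-T^*$'' zeroes the target mod $p$ but not over $\mathbb Z$, and you do not know which multiple of $p$ you are looking at without further guessing. One could guess the weight contribution of every crossing slot $T$ and filter $L,R$ accordingly, which \emph{does} collapse the test to plain OV -- but with $p=\poly(n)$ and $\Theta(k^d)$ crossing slots that is $\poly(n)^{\Theta(k^d)}$ guesses, which defeats the reduction.

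The paper escapes this by reducing the weights of the clique instance \emph{before} going to OV, rather than embedding arithmetic into OV. After the random-prime hashing shrinks weights to $n^{O(k)}$ (your step (i) is Lemma~\ref{lemma: random primes}), Lemmas~\ref{lem: clique weight qexpansion} and~\ref{lemma: clique weight reduction} expand the weight constraint $q$-adically with explicit carry variables and then \emph{square} the sum of the per-digit residuals; the binomial identity $(a+b)^2=a^2+2ab+b^2$ distributes the weight across \emph{pairs} of hyperedges, yielding a $2d$-hypergraph whose maximum weight is roughly $M^{2/p}$ for an expansion-length parameter $p$. Choosing $p\approx\sqrt{\log n}$ drives the weights to $n^{o(1)}$, and only then does Lemma~\ref{lemma: clique weight loss} exhaustively guess the per-color-type contributions -- now only $n^{o(1)}$ guesses in total -- reaching \emph{unweighted} $k$-Clique on $2d$-hypergraphs. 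The split-and-enumerate to $2$-OV is applied last, on an unweighted instance. This squaring-and-arity-doubling step is the ingredient your plan is missing: without it the exhaustive guessing of weight contributions is unaffordable, and without that exhaustive guessing the arithmetic does not fit into binary OV.
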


Clique problems on hypergraphs appear to be harder than on graphs.
For example, in the unweighted case, $k$-Clique on graphs can be solved in $O(n^{0.79 k})$ time \cite{NP85,EG04}, whereas on $3$-hypergraphs no polynomial improvement over the $O(n^k)$ exhaustive search algorithm is known.
If the moderate-dimension OV conjecture is false, then Theorem~\ref{theorem: OV main} implies that $k$-Clique on $3$-hy\-per\-graphs does have such an improvement for some large enough integer~$k$.
This strengthens the OV conjecture.

Significantly improved algorithms for $k$-Clique on $d$-hypergraphs are not only unknown, they are in fact known to imply breakthrough algorithms for Max-$d$-SAT, the optimization version of the $d$-CNF-SAT problem that needs to find an assignment that satisfies as many clauses as possible.
Using this known reduction, we obtain the following corollary to Theorem~\ref{theorem: OV main}.

\def\corollarymaxsat{%
  If the moderate-dimension OV conjecture is false, then there exists an $\eps>0$ such that, for all integers~$d$, there is an algorithm for Max-$d$-SAT that runs in time $O^*\paren[\big]{(2-\eps)^n}$.%
}
\begin{corollary}\label{cor: maxsat}
  \corollarymaxsat
\end{corollary}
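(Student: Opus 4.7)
The plan is to derive Corollary~\ref{cor: maxsat} from Theorem~\ref{theorem: OV main} via the classical split-and-list reduction from Max-$d$-SAT to Min-Weight-$k$-Clique on $d$-hypergraphs. Let $\eps>0$ be the constant supplied by Theorem~\ref{theorem: OV main}, and for each fixed $d$ let $k=k(d,\eps)$ be the corresponding integer. Given a Max-$d$-SAT instance $\varphi$ on $n$ variables and $m$ clauses, I would partition the variables into $k$ blocks of size at most $\lceil n/k\rceil$ and create, for each block, one vertex per partial assignment to that block's variables. This yields a $k$-partite vertex set of total size $n':=k\cdot 2^{\lceil n/k\rceil}$.

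Next I would build a $d$-hypergraph $G$ on these vertices by declaring every $\leq d$-subset that picks at most one vertex from each block to be an edge. Under this definition a $k$-clique is forced to pick exactly one vertex from each block (any two vertices in the same block would form a non-transversal $2$-subset, hence a non-edge), so each $k$-clique encodes a full assignment $\alpha$ of $\varphi$. To encode satisfaction as weight, for each clause $C$ let $S_C\subseteq[k]$ denote the set of blocks touched by the variables of $C$; since $\varphi$ has width at most $d$ we have $|S_C|\leq d$. For the hyperedge indexed by the blocks in $S_C$ together with a particular tuple $\beta$ of partial assignments on $S_C$, I would add $1$ to its weight whenever $\beta$ does \emph{not} satisfy $C$, and give every other edge weight $0$. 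With this bookkeeping each clause contributes to exactly one hyperedge of any $k$-clique, namely the restriction of the clique to the blocks in $S_C$, so the weight of a $k$-clique equals the number of clauses unsatisfied by the corresponding assignment. Minimising this weight thus solves Max-$d$-SAT, and the weights are integers in $\{0,\dots,m\}$.

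Finally, I would invoke the Min-Weight-$k$-Clique algorithm from Theorem~\ref{theorem: OV main} with $M\leq m$, which runs in randomised time $O\bigl((n')^{(1-\eps)k}\bigr)\cdot\polylog m$. Since $n'=k\cdot 2^{\lceil n/k\rceil}$ and $k$ depends only on $d$ and $\eps$, the first factor is at most $k^{(1-\eps)k}\cdot 2^{(1-\eps)(n+k)}=O^*\bigl((2^{1-\eps})^n\bigr)$, and setting $\eps':=2-2^{1-\eps}>0$ gives the claimed bound $O^*\bigl((2-\eps')^n\bigr)$. Crucially $\eps'$ depends only on $\eps$ and not on $d$, matching the quantifier order in the corollary. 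The one subtlety in the construction is arranging the hyperedge weights so that each clause contributes exactly once and no double-counting occurs across nested subsets of a clique; this is handled routinely by the $S_C$ indexing above, so the real content of the corollary sits inside Theorem~\ref{theorem: OV main} itself.
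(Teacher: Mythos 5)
Your proposal is correct and follows essentially the same route as the paper: Lemma~\ref{lem:cliquemaxdsat} is precisely the split-and-list reduction you describe, partitioning the variables into $k$ blocks and assigning each clause's contribution to the unique hyperedge whose block-set matches the blocks touched by that clause, then invoking Theorem~\ref{theorem: OV main} for Min-Weight-$k$-Clique on $d$-hypergraphs. The only cosmetic difference is that you encode the number of \emph{unsatisfied} clauses (yielding non-negative weights in $\{0,\dots,m\}$), whereas the paper assigns $-1$ per \emph{satisfied} clause and asks for a $k$-clique of weight at most $-t$; both variants are minimized identically and the conclusion, including the reduction-time bound $O^*(2^{dn/k})$ being dominated once $k$ is taken large relative to $d$, is the same.
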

Due to this corollary, the moderate-dimension OV conjecture is not only implied by SETH, but even by its stronger cousin for the Max-SAT problem on bounded-width CNF.
As a consequence, all known lower bounds based on the moderate-dimension OV conjecture are now automatically based on the hardness of bounded-width Max-SAT rather than just bounded-width CNF-SAT.
Corollary~\ref{cor: maxsat} subsumes some results~\cite{ABV15a,AVY15,KT17} where this was done in special cases.

The implications of Theorem~\ref{theorem: OV main} for the weighted problems strengthen the moderate-dimension OV conjecture further.
Any algorithm that solves Exact-Weight-$k$-Clique can in particular solve Zero-Weight-$k$-Clique where the target weight satisfies $t=0$.
In turn, any algorithm for the latter problem can also be used to solve Min-Weight-$k$-Clique without significant running time overhead~\cite{DBLP:conf/mfcs/NederlofLZ12}.
While the best known algorithms for Min-Weight-$k$-Clique run in time
$n^{k}/\exp\paren[\big]{\Omega(\sqrt{\log{n}})}$ \cite{Wil14,CW16},  such superpolylogarithmic shavings are open for Zero-Weight-$k$-Clique.
The $k=3$ case is particularly interesting, since solving Zero-Weight-$3$-Clique in $O(n^{3-\eps})$ time refutes not only the APSP conjecture but \emph{also} the $3$-SUM conjecture \cite{VW09,Pat10,KPP14}.

\paragraph*{Proof ideas}
We prove Theorem~\ref{theorem: OV main} by designing a tight reduction from Min-Weight-$k$-Clique on $d$-hy\-per\-graphs to~OV.
We sketch the reduction for~$d=2$.
It has two main \emph{stages}.
In the first, we reduce Min-Weight-$k$-Clique on graphs to unweighted $k$-Clique on 4-hy\-per\-graphs, where each hyperedge has cardinality at most~$4$.
We achieve this in a sequence of weight reduction \emph{steps}:
We start with a standard hashing trick to reduce the weights to a polynomial range. Then, to reduce the weights further, we chop the bits of the numbers into vectors and use a squaring trick to combine all the coordinates.
This trick is borrowed from~\cite{AbboudLW14}, where it was used to reduce node weights in graphs. We instead use it  to reduce edge weights, which however we only achieve by transforming the graph into a $4$-hypergraph.
Finally, once the weights are small enough, we remove them completely via an exhaustive search.

In the second stage, we reduce the unweighted $k$-Clique problem on $4$-hypergraphs to a $k$-wise variant of OV.
The reduction maps each node to a Boolean vector by encoding the incident hyperedges into the coordinates such that a disjointness check among $k$ vectors corresponds to checking that $k$ nodes form a hyperclique.
Finally, we reduce the $k$-wise variant to the OV problem using a standard reduction.

\subsection{Our results for SETH and CNF-SAT}
There are many algorithms that solve $d$-CNF-SAT, the satisfiability problem on $d$-CNF formulas, in time $O^*\paren[\big]{(2-\eps_d)^n}$.
As $d$ grows to infinity, the constants~$\eps_d$ for all these algorithms tend to~$0$ in the limit.
SETH was conceived by this observation, and it asserts exactly this:
If~$\eps_d$ is the largest real such that $d$-CNF-SAT can be solved in time $O^*((2-\eps_d)^n)$, then $\lim_{d\to\infty} \eps_d=0$ holds.
Thus SETH is not about the hardness of an individual problem, but about a sequence of problems each of which we know to have a faster algorithm than exhaustive search.
This makes it easier prove lower bounds under SETH, but it is unfortunate if we want to have confidence that SETH and the implied lower bounds are true.

Indeed, there are  algorithms that get substantial $n^{\omega(1)}$ speed-ups
over $2^n$ for CNF formulas of unbounded width (see e.g.~\cite{DBLP:series/faia/2009-185, CIP06}).
If instead of CNF formulas, we consider more complex Boolean circuits, such as bounded-depth threshold circuits (\TC0-circuits), then we should get a computationally harder satisfiability problem.
For linear-size threshold circuits of depth two, there are satisfiability algorithms that run in time $O^*\paren[\big]{(2-\eps_c)^n}$; here too, the constants~$\eps_c$ tend to~$0$ as~$c$ grows~\cite{DBLP:conf/focs/ImpagliazzoPS13,DBLP:conf/sat/ChenS15}.
However, even for linear-size threshold circuits of depth~$4$, satisfiability algorithms with speed-up $n^{\omega(1)}$ are unknown.
Obtaining such algorithms for linear-size \TC0 would resolve Williams' question~\cite{DBLP:journals/jacm/Williams14} of whether his circuit lower bound framework can prove $\mathrm{NEXP}\nsubseteq\TC0$.
We show that a refutation of SETH would constitute progress on these questions.
\def\theoremTCzero{%
  If SETH fails, then there is an $\eps>0$ such that, for all constants $c$ and
  $d$, the satisfiability of depth-$d$ threshold circuits with~$cn$ wires
  can be determined in time $O^*\paren[\big]{(2-\eps)^n}$.%
}
\begin{theorem}\label{theorem: TC0}
  \theoremTCzero
\end{theorem}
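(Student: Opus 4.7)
The plan is to compile the sparse depth-$d$ threshold circuit into a $k$-CNF whose variable count exceeds $n$ by only a tiny factor, and then apply the SETH-breaking $k$-SAT algorithm guaranteed by the hypothesis. From SETH failing, we obtain a constant $\delta_0>0$ such that, for every constant $k$, $k$-SAT on $m$ variables is decidable in time $O^*((2-\delta_0)^m)$.

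For a parameter $\mu>0$ to be fixed at the end, we construct a $k$-CNF $\phi$ on $(1+\mu)n$ variables equivalent to the input circuit, where $k=k(c,d,\mu)$ is a constant. The compilation introduces auxiliary variables encoding the computations inside the threshold gates: for a gate of fanin $f_g$ we decompose the partial-sum computation into blocks of size $K=K(c,d,\mu)$, producing constraints of arity at most $k$. Using the sparsity bound $\sum_g f_g \le cn$ on total fanin, together with a careful choice of block parameters, the aggregate auxiliary-variable count stays below $\mu n$. Invoking the $k$-SAT algorithm on $\phi$ then runs in time $O^*((2-\delta_0)^{(1+\mu)n})$. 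Choosing $\mu$ small enough that $(2-\delta_0)^{1+\mu}<2$ --- any $\mu<\log 2/\log(2-\delta_0)-1$ suffices --- yields a constant $\eps=\eps(\delta_0)>0$ independent of $c$ and $d$, and hence the running time $O^*((2-\eps)^n)$ claimed by the theorem.

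The main obstacle is the compilation. A straightforward per-gate encoding introduces $\Omega(f_g)$ auxiliary variables for each gate of fanin $f_g$, summing to $\Omega(cn)$ total and producing a variable blowup of $(1+\Omega(c))n$; this would force $\eps$ to depend on $c$ and destroy the uniformity promised by the theorem. The technical heart of the proof must therefore be a more frugal compilation that trades off constraint arity $k$ against per-gate auxiliary-variable count in such a way that both can be simultaneously controlled --- the first as a constant, the second as $o(n)$ --- uniformly in $c$ and $d$. Handling arbitrary depth $d$ adds another layer: while each level of the circuit is compiled independently, balancing parameters through the $d$ layers is delicate, with $d$ ultimately entering only through the constant $k=k(c,d,\mu)$ and not through the variable count.
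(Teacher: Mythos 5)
Your high-level plan is the paper's: compile the circuit into an equisatisfiable $k$-CNF on $(1+\mu)n$ variables for a fixed $\mu$ independent of $c$ and $d$, apply the SETH-breaking $k$-SAT algorithm, and choose $\mu$ so that $(2-\delta_0)^{1+\mu}<2$. You also rightly observe the crucial structural constraint --- the auxiliary-variable overhead must be made $o(n)$ uniformly in $c,d$, while the arity $k$ may depend on them. But the compilation, which you yourself call the ``technical heart,'' is left as a wish-list: you say ``blocks of size $K$'' and ``a careful choice of block parameters'' without specifying either, and a naive reading of a block-based encoding promotes $\Theta(K)$ gates per block (one per running-sum state bit or per block element) to new variables, yielding no net savings over the $\Omega(cn)$ overhead you are trying to avoid.

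The two ideas you need and do not give are the content of the paper's Lemma~\ref{lem:rewrite} (Algorithm~C). First, store each block's popcount \emph{in binary}: a bounded-fan-in adder (Lemma~\ref{lem:adder}) computes a block of $\beta$ bits into $O(\log\beta)$ output bits, which together with a DeMorgan adder tree and a $\gate{BINTH}$ comparator (Lemma~\ref{lem:binth}) simulate the threshold gate; only the block-popcount \emph{outputs} and the downstream adder/comparator gates are promoted to CNF variables, giving $O\bigl((f\log\beta)/\beta\bigr)$ new variables per gate of fanin $f\ge\beta$ and hence $O\bigl((cn\log\beta)/\beta\bigr)\le\mu n$ once $\beta=\Theta\bigl((c/\mu)\log(c/\mu)\bigr)$. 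Second, the clause arity is bounded by a reachability argument: from any promoted gate one backtracks through the un-promoted gates --- which, after the replacement step, all have fan-in $<\beta$ --- for at most $d$ levels before reaching promoted or input gates, so the constraint touches at most $\beta^d$ variables. You assert that $d$ ``ultimately enter[s] only through the constant $k$,'' which is true, but this reachability bound is precisely where it enters and precisely where an argument is required. Without the binary encoding the variable count blows up; without the reachability bound the clause width is not a constant.
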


This theorem is the newest member in a sequence of increasingly general results of Santhanam and Srinivasan~\cite{San12}, Dantsin and Wolpert~\cite{DBLP:conf/ciac/DantsinW13}, and Cygan et al.~\cite{DBLP:journals/talg/CyganDLMNOPSW16}, who show that refuting SETH implies faster algorithms for the satisfiability of linear-size formulas, linear-size \AC0-circuits, and linear-size VSP-circuits, respectively.
Our class of linear-size \TC0-circuits contains the classes of linear-size formulas and \AC0-circuits, so we generalize these two results.
The class VSP is less understood and little is known about its complexity properties.
While algorithms with running time~$2^n/n^{\omega(1)}$ are known for the satisfiability of linear-size \AC0-circuits~\cite{DBLP:conf/soda/ImpagliazzoMP12}, such algorithms are not known for linear-size \TC0-circuits, even when the depth is $4$ and the number of wires is~$10n$.

As usual with reductions, a pious believer is biased to view Theorem~\ref{theorem: TC0} as another confirmation that SETH and all its logical implications are indeed true, which includes the moderate-dimension OV conjecture.
On the other hand, a skeptic who is hesitant to believe SETH or one of its implications is now invited to start their refutation attempt by providing faster algorithms for linear-size \TC0-circuits, since any refutation of SETH would have to do that implicitly.

\subsubsection*{Extension for CNF-SAT}
Much like most OV-based lower bounds in P can be based on the moderate-dimension OV conjecture rather than its low-dimension variant, many SETH-based lower bounds for exponential time and parameterized problems can be based on the weaker assumption that satisfiability cannot be solved in time~$O^*\paren[\big]{(2-\eps)^n}$ for CNF formulas of unbounded width.
This weaker assumption for CNF-SAT suffices, for example, in results for graph problems that have small treewidth or pathwidth~\cite{DBLP:conf/soda/LokshtanovMS11a,DBLP:conf/focs/CyganNPPRW11,DBLP:conf/stoc/CyganKN13}.
We add further weight to these hardness results by showing that sufficiently fast
algorithms for CNF-SAT imply improved satisfiability algorithms for linear-size threshold circuits of super-logarithmic depth, which is a larger class than \TC1-circuits.

\def\theoremTCone{%
	If CNF-SAT can be solved in $O^*(2^{(1-\eps)n})$ time for some~$\eps>0$, then
	there is an~$\eps'>0$ such that, for all~$c>0$, there is a $\delta > 0$ such that the
	satisfiability for threshold circuits of depth $(\log n)^{1+\delta}$ and at most~$cn$
	wires can be determined in time $O(2^{(1-\eps')n})$.
}
\begin{theorem}\label{theorem: TC1}
  \theoremTCone
\end{theorem}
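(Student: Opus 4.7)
The plan is to reduce satisfiability of the given depth-$D$ threshold circuit $C$, where $D = (\log n)^{1+\delta}$ and $C$ has at most $cn$ wires, to a CNF-SAT instance on essentially $n$ variables and then invoke the hypothesized algorithm. This follows the template of the analogous results for linear-size formulas, $\mathrm{AC}^0$ circuits, and VSP circuits mentioned above, but it must be extended to accommodate threshold gates and super-polylogarithmic depth.

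First I would normalize $C$ by replacing each threshold gate of fan-in $m$ with an equivalent Boolean sub-circuit of size $O(m)$ and depth $O(\log m)$, for instance via a sorting network or a binary counter. This preserves the wire count up to a constant and increases the depth by a $\log n$ factor. After absorbing this factor into a slightly larger $\delta$, we may assume $C$ is a Boolean circuit over AND, OR, and NOT gates of bounded fan-in, with size $O(cn)$ and depth $D$.

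The main step is to produce an equisatisfiable CNF on $n + o(n)$ variables. I would use a Tseitin-style encoding that introduces auxiliary variables only for a small ``separator'' set $S$ of gates: removing $S$ from $C$ leaves pieces of depth at most a parameter $t \le D$, each admitting a CNF description of size at most $2^{O(t)}$ over its local inputs and the $S$-variables. A counting argument exploiting both the wire bound $cn$ and the depth bound $D$ should allow one to choose $|S|=o(n)$ and $t$ such that $2^{O(t)}=2^{o(n)}$. The resulting CNF has $n+o(n)$ variables and size $2^{o(n)}$, so the hypothesized CNF-SAT algorithm takes time $2^{(1-\eps)(n+o(n))} \cdot 2^{o(n)} \le 2^{(1-\eps')n}$ for some $\eps'>0$ (e.g.\ $\eps'=\eps/2$).

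I expect the main obstacle to be the separator lemma: a general linear-size Boolean DAG does not automatically admit a small separator yielding shallow pieces, so the construction must carefully combine the wire budget $cn$ and the depth budget $D$, perhaps by iteratively inlining fan-out-one gates and then cutting along a topological ordering. This is precisely where the dependence of $\delta$ on $c$ enters: a larger $c$ shrinks the room available for the trade-off between $|S|$ and $t$, forcing $\delta$ to be tuned accordingly. A secondary subtlety is that threshold gates kept without the normalization step would yield pieces whose CNF representations are not bounded by $2^{O(t)}$; the normalization above is precisely what avoids this issue.
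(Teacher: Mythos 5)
Your decomposition step contains a bound that does not hold, and this is exactly where the technical weight of the theorem sits. You claim that after removing a separator~$S$, a piece of depth at most~$t$ ``admits a CNF description of size at most $2^{O(t)}$ over its local inputs and the $S$-variables.'' That would be true of a piece with $O(t)$ \emph{inputs}, but a depth-$t$ subcircuit of fan-in~$2$ can depend on up to $2^t$ gates of $S\cup I(C)$; its constraint therefore has width up to $2^t$ and naive CNF size up to $2^{2^t}$. So it is not enough to have $t\le D=(\log n)^{1+\delta}$ or even $t=o(n)$: one needs $t=o(\log n)$ in order for the resulting formula to be a $k$-CNF with $k=n^{o(1)}$ and at most $2^{o(n)}$ clauses. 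Reducing the depth from $(\log n)^{1+\delta}$ to $o(\log n)$ while deleting only $o(n)$ edges is precisely Valiant's depth-reduction lemma (for graphs with $m$ edges and depth $2^\delta$, removing $rm/\delta$ edges brings the depth to $2^{\delta-r}$); the paper uses it as Lemma~\ref{lem:valdep}, setting $r=\eps\delta/(2c)$, which is exactly where the requirement $\delta<\eps/(2c-\eps)$ comes from. Your proposed alternative (``iteratively inlining fan-out-one gates and then cutting along a topological ordering'') does not plausibly yield this trade-off, and a general linear-size DAG does not admit a small vertex separator into small \emph{size} pieces either, because of high fan-out gates.

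A secondary problem is your opening normalization, which replaces every threshold gate by a bounded-fan-in Boolean circuit of depth $O(\log m)$. This is not ``absorbing a factor into a slightly larger $\delta$'': it turns depth $(\log n)^{1+\delta}$ into $(\log n)^{2+\delta}$, and the Valiant trade-off then requires $(2+\delta)(1-\eps/(2c))<1$, i.e.\ roughly $\eps>c$, which generally fails. The paper's Lemma~\ref{lem:rewrite} instead expands only the gates of fan-in at least $\beta=\beta(c,\eps)$ and keeps the small-fan-in threshold gates intact; the width bound $\beta^d$ then exploits the \emph{original} depth $d$ (with branching factor $\beta$ rather than $2$), so the depth penalty you pay is avoided. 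Combining Valiant's lemma with that lemma (as in Lemma~\ref{lem:rewrite2}) is what closes the argument.
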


If SETH is false, then not every problem in $\mathrm{E^{NP}}$ can be computed by linear-size VSP-circuits~\cite{Wil13,JMV15}.
By Theorem~\ref{theorem: TC1}, solving CNF-SAT in time $O^*\paren[\big]{(2-\eps)^n}$ implies the perhaps more natural result that not every problem in $\mathrm{E^{NP}}$ has linear-size \TC1-circuits.

\paragraph*{Proof idea}
Similar to the analogous result by Cygan et al.~\cite{DBLP:journals/talg/CyganDLMNOPSW16} for VSP-circuits, we use a depth reduction technique introduced by Valiant~\cite{DBLP:conf/mfcs/Valiant77}, which shows that VSP-circuits embed nicely into CNF-formulas.
We use an additional trick that allows us to get rid of threshold gates.

  % !TEX root = ABDN07-making-SETH-great-again.tex

\section{Preliminaries}\label{sec:prel}

\paragraph*{Notation}
The $O^*(\cdot)$ and $\tilde O(\cdot)$ notations omit factors that are polynomial and polylogarithmic in the input size, respectively.
We write~$\Z$ for the integers and~$\N$ for the non-negative integers.
We let $[n]=\{1,\ldots,n\}$ for $n\in\N$.
If~$S$ is a set, we write~$\binom{S}{d}$ for the set of all subsets of~$S$ that
have size exactly~$d$, and $\binom{S}{\le d}$ for the set of all subsets of
size at most~$d$.
A $d$-hypergraph~$G$ for $d\in\N$ is a tuple $(V(G),E(G))$, where
$V(G)$ is a finite set of \emph{vertices} and
$E(G)\subseteq\binom{V(G)}{\le d}$ is a set of
\emph{edges}.
If $G$ is a $d$-hypergraph and $X \subseteq V(G)$, then $G[X]$ denotes the
\emph{subgraph induced by~$X$}, that is, $V(G[X])=X$ and
$E(G[X])=E(G)\cap\binom{X}{\le d}$.
A~set $S\subseteq V(G)$ is called a clique in~$G$ if $E(G[S]) = \binom{S}{\le d}$.
A $k$-clique
is a clique of size $k$.

A \emph{graph} is a $2$-hypergraph.
In contrast to usual graph notation,%
\footnote{We remark that there is an alternative definition of hypergraphs and cliques, where each edge of a $d$-hypergraph has size exactly $d$ instead of at most $d$, and a clique is a set $S$ such that $E(G[S]) = \binom{S}{d}$. The two variants are equivalent in terms of the algorithmic problem of deciding whether $G$ contains a $k$-clique. Indeed, if we want to detect a set $S$ with $E(G[S]) = \binom{S}{d}$, then we can add all sets of size at most $d-1$ to~$E(G)$ and then detect a set $S$ with $E(G[S]) = \binom{S}{\le d}$. Similarly, if we want to detect a set $S$ with $E(G[S]) = \binom{S}{\le d}$, then we can build a new hypergraph $G' = (V(G),E')$ where $E'$ contains all sets $e \subseteq V(G)$ of size $d$ whose every subset is in~$E(G)$, and then detect a set $S$ with $E(G'[S]) = \binom{S}{d}$. Similar equivalences hold for the weighted variants of the $k$-Clique problem. Thus, our choice of a variant is only for notational convenience.}
there are also edges~$\set{v}$ of size 1 and the edge $\emptyset$ of size 0; a $k$-clique is a set $\{v_1,\ldots,v_k\}$ for which every pair $\{v_i,v_j\}$ is in~$E(G)$, every singleton~$\{v_i\}$ is in $E(G)$, and $\emptyset \in E(G)$. This does not significantly change the problem of detecting whether $G$ contains a $k$-clique, since testing whether $\emptyset \in E(G)$ is in constant time, and we can assume without loss of generality that $\{v\} \in E(G)$ for all $v \in V(G)$, by deleting all other vertices.

\paragraph*{CNF-SAT}
The $d$-SAT problem is to determine whether a given $d$-CNF formula has a
satisfying assignment. We denote the number of variables by~$n$ and define~$s_d$ as the real number
\[
  \inf
  \setc[\big]
  {
  \delta>0
  }{
  \text{there is an } O\paren[\big]{2^{\delta n}} \text{ time algorithm for $d$-SAT}
  }
\,.
\]

Let $s_\infty = \lim_{d\rightarrow \infty}s_d$.
Impagliazzo and Paturi's \emph{Strong Exponential Time Hypothesis (SETH)}
postulates that $s_\infty=1$ holds~\cite{DBLP:journals/jcss/ImpagliazzoP01}.

\paragraph*{DAGs and Circuits} 
If $G$ is a directed acyclic graph (DAG), we let~$N^-_G(v)$ denote the set of
in-neighbors of $v$ and let $d^{-}_G(v)$ denote the \emph{in-degree} with $d^{-}_G(v)=|N^-_G(v)|$.
The \emph{depth} of $G$ is the length of the longest directed path in it.

A \emph{Boolean function} is any function~$f:\set{0,1}^d\to\set{0,1}$.
It is \emph{symmetric} if $f(x)=f(y)$ holds for all $x,y\in\set{0,1}^d$ whose
Hamming weight is the same.
Let $B$ be a set of symmetric Boolean functions.
A \emph{(Boolean) circuit~$C$ over a basis~$B$} is a pair $(G,\lambda)$ where~$G$ is a directed acyclic graph and $\lambda \in B^V$ is a labeling of its vertex set~$V$ with elements from~$B$.
We say that~$v$ is a \emph{$\lambda_v$-gate}, and we require that the
in-degree of~$v$ is equal to the arity of $\lambda_v$, that is, we have
$\lambda_v:\set{0,1}^{d^-_G(v)}\to\set{0,1}$.
The edges of $G$ are called \emph{wires}, the in-degree of a gate is called
its \emph{fan-in}, and we write $V(C)$ for~$V(G)$.
The set of \emph{input gates} $I(C)$ or $I(G)$ of~$C$ consists of the vertices
with in-degree~$0$,
and the set of \emph{output gates} $O(C)$ or $O(G)$ of~$C$ consists of the
vertices with out-degree~$0$.
If $x\in\set{0,1}^{I(G)}$ is a setting for the input gates, we define~$C_v(x)$ as the \emph{value of~$C$ at $v \in V$ on input~$x$} inductively: 
If $v\in I(C)$, let $C_v(x)=x_v$, and otherwise,
let $C_v(x)=\lambda_v(C_{v_1}(x),\ldots,C_{v_\ell}(x))$, where
$v_1,\ldots,v_\ell$ denotes the in-neighbors of $v$ in $G$; note that this is
well-defined since $G$ is acyclic and $\lambda_v$ is symmetric.
Slightly abusing notation, we may write~$C$ also for the
function~$C:\set{0,1}^{I(G)}\to\set{0,1}^{O(G)}$ with $C(x)=(C_{v})_{v\in
O(G)}$. Or we may view circuits as mapping integers to integers in a fixed
range $[r]$ for convenience while in fact this is implemented by storing the
binary representation of these values with $\lceil \lg r \rceil$ gates.

A $(u,v)$-path in $C$ is a directed path~$u_1,\dots,u_\ell$ in~$G$ with $u=u_1$ and $u_\ell=v$.
If $A \subseteq V(C)$, we let $R_{C}(A,v)$ denote the set of vertices
from which $v$ is reachable without using vertices of $A$, that
is,
\begin{equation} \label{eq:defRCAv}
  R_{C}(A,v)
  =
  \setc
  { u \in V }
{
  \text{
    $G[V\setminus A \cup \set{u,v}]$
    contains $(u,v)$-path
    }}
.
\end{equation}
Finally, for a circuit~$C$, a gate~$v\in V(C)$, and a set~$A\subseteq V(C)$, we
define $C_{v,A}$ as the subcircuit of~$C$ that is induced by the set
$R_C(A,v)$; note that~$v$ is the only output gate of~$C_{v,A}$ and its input gates are contained in~$A\cup I(C)$.

We use the Boolean functions $\gate{NEG}(x) = \neg x$, $\gate{AND}(x,y)=x \wedge
y$, $\gate{OR}(x,y)=x\vee y$ and $\gate{TH}_\theta: \{0,1\}^d \rightarrow
\{0,1\}$ which is, for every positive $\theta \leq n$ defined to be $1$ if
$\sum_{i=1}^d x_i \geq \theta$ and to be $0$ otherwise.
Note that $\gate{AND}(x,y)=\gate{TH}_{2}(x,y)$ and $\gate{OR}(x,y)=\gate{TH}_{1}(x,y)$.
We also use $\gate{MOD}_m(x_1,\ldots,x_d)$ for $m \leq d$ which is defined to be
$1$ if $m$ divides $\sum_{i=1}^d x_i$ and to be $0$ otherwise, and
$\gate{MAJ}(x_1,\ldots,x_d)=\gate{TH}_{d/2}(x_1,\ldots,x_d)$.

A Boolean circuit over the basis
$\set{\gate{NEG},\gate{AND},\gate{OR},\gate{TH}_\theta}$, where all gates
(except for $\gate{NEG}$) may have unbounded fan-in, is called a \emph{threshold
circuit} (TC); we use \gate{AND} and \gate{OR} only for syntactic convenience as
they can be simulated by $\gate{TH}_\theta$.
The problem \TC{}-SAT is, given a threshold circuit~$C$ with exactly one output
gate, to decide whether the circuit is satisfiable, that is, whether there
exists a setting~$x\in\set{0,1}^n$ for the $n$~input gates such that $C(x)=1$.
For $d\in\N$ and $c>0$, a \emph{$c$-sparse-$d$-depth-TC} is a
threshold circuit with $n$ variables, at most~$cn$ wires, and depth at most~$d$.
For each $i\in\N$, a \TC{i}-circuit is a family of threshold circuits of
depth $O(\log^i n)$ and size $\poly(n)$.

  % !TEX root = ABDN07-making-SETH-great-again.tex

\section{Weighted Cliques in Hypergraphs}\label{sec:weightedcliques}

Recall that in the Exact-Weight-$k$-Clique problem on $d$-hypergraphs we are given a $d$-hypergraph $G$ and a target value $t$, and the task is to decide whether some size-$k$ subset $S \subseteq V(G)$ forms a clique of total weight $\sum_{e \in E(G[S])} w(e) = t$.
We denote by~$M=M(w,t)$ the maximum weight in absolute value, that is, we have $M=\max(\{|t|\}\cup\{ |w(e)| :e\in E(G)\})$.
We write $n=|V(G)|$. Since in this section we will mostly deal with the Exact-Weight-$k$-Clique on $d$-hypergraphs problem, we will abbreviate it to ``Weighted $d$-Hypergraph $k$-Clique''.

\subsection{Preprocessing Reductions}
We rely on some basic reductions: The first turns the hypergraph into a complete $d$-hypergraph, which shows that the graph structure is immaterial for this problem; the second makes the hypergraph \mbox{$k$-partite}, which
will be useful in our constructions; the third reduces from ``exact weight
clique'' to ``zero weight clique'', that is, it sets the target value~$t$ to~$0$
by using negative edge weights; the fourth uses a non-negative target
value but removes negative weights.
In the following statement, $M'$ denotes the maximum weight~$M(w',t')$ of the respective output instance.
\begin{fact}\label{fact: clique basic preprocessing}
  Let $d,k\in \N$ with $1\le d\le k$.
  There are $O(n^d)$-time self-reductions for Weighted $d$-Hypergraph
  $k$-Clique with the following properties:
  \begin{enumerate}
    \item ``Make complete'': 
      maps an instance $(G,w,k,t)$ to $(G',w',k,t')$ with $V(G')=V(G)$, $E(G') = \binom{V(G)}{\le d}$, and $M' \le \binom{k}{\le d} M$.
    \item
      ``Make $k$-partite'':
      maps an instance $(G,w,k,t)$ to $(G',w',k,t)$ with $|V(G')|\le
      k|V(G)|$ and ${M=M'}$, such that
      $G'$ is $k$-partite in the sense that $V(G')$ is partitioned into $k$ parts and every edge intersects each part in at most one vertex.
    \item
      ``Make target zero'':
      maps a $k$-partite instance $(G,w,k,t)$ to $(G,w',k,t')$ with $t'=0$
      and $M'\le 2M$.
    \item
      ``Make weights non-negative'':
      maps an instance $(G,w,k,t)$ to $(G,w',k,t')$ with $w':E(G)\to\N$ and $M'\le 2 \binom{k}{\le d}^2 M$.
  \end{enumerate}
\end{fact}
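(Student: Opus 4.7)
My plan is to establish the four items separately by short explicit self-reductions; in each case the correctness and the weight bound follow directly from the construction.

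For Part~1 (``Make complete''), I would let $G'$ be the complete $d$-hypergraph on $V(G)$, retain $w'(e):=w(e)$ on every original edge, and assign each new edge $e\in\binom{V(G)}{\le d}\setminus E(G)$ the single large weight $W:=\binom{k}{\le d}M+1$, with $t':=t$. Any $k$-clique of $G'$ that uses at least one new edge then has total weight at least $W-(\binom{k}{\le d}-1)M=M+1>|t|$ and hence cannot equal $t$; so the weight-$t$ $k$-cliques of $G'$ and of $G$ coincide, and $M'=W=O(\binom{k}{\le d}M)$. For Part~2 (``Make $k$-partite''), assuming $d\ge 2$, I would take the standard blow-up $V(G'):=V(G)\times[k]$ with parts $V_i:=V(G)\times\{i\}$, and include $e'\subseteq V(G')$ of size $r\le d$ in $E(G')$ with weight $w(e)$ exactly when $e'$ meets each part at most once and its projection $e:=\{v:(v,i)\in e'\}$ satisfies $|e|=r$ and $e\in E(G)$. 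Every $k$-clique $S=\{(v_a,i_a)\}_{a=1}^k$ of $G'$ contains all $\binom{k}{2}$ pair-edges, which forces the $i_a$ and the $v_a$ to be pairwise distinct; the projection is then a weight-preserving bijection onto a $k$-clique of $G$, so $t':=t$ and $M'=M$ work.

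For Part~3 (``Make target zero''), I use that $\emptyset\in\binom{S}{\le d}=E(G[S])$ for any clique $S$, so $\emptyset\in E(G)$ holds whenever $G$ admits any $k$-clique. I would keep all weights unchanged except $w'(\emptyset):=w(\emptyset)-t$; every $k$-clique's weight then drops by exactly $t$, so $t':=0$ works, with $M'\le|w(\emptyset)-t|\le 2M$. For Part~4 (``Make weights non-negative''), I exploit that every $k$-clique of a $d$-hypergraph contains exactly $\binom{k}{\le d}$ edges (by the clique definition), so the uniform shifts $w'(e):=w(e)+M$ and $t':=t+\binom{k}{\le d}M$ place $w'$ into $\N$, preserve the set of weight-$t$ cliques, and give $w'(e)\in[0,2M]$ and $|t'|\le(\binom{k}{\le d}+1)M$, hence $M'\le 2\binom{k}{\le d}M\le 2\binom{k}{\le d}^2M$.

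All four constructions enumerate the $O(n^d)$ candidate edges once with $O(1)$ work per edge, so each runs in $O(n^d)$ time. The only place with any genuine quantitative content is Part~1, where $W$ must be large enough to rule out $k$-cliques using new edges while staying within the $O(\binom{k}{\le d}M)$ budget; the remaining three steps are essentially bookkeeping about how weights transform under the respective operations.
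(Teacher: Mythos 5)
Your proposal is correct and proves all four items; Parts~1 and~2 are essentially the paper's constructions, but Parts~3 and~4 take a cleaner route. For ``make target zero'', the paper builds on the $k$-partite structure: after making $G$ $k$-partite, it subtracts~$t$ from the weight of the unique edge of cardinality~$d$ meeting the first $d$ parts, so every $k$-clique loses exactly~$t$. You instead exploit the paper's (nonstandard) convention that $\emptyset$ is always an edge of any clique $S$, and shift $w(\emptyset)$ by $-t$; this is shorter, does not actually use $k$-partiteness, and degrades $M$ by the same factor. For ``make weights non-negative'', the paper first invokes Part~1 to make the hypergraph complete and then shifts by $L=\max\{0,-w(e)\}$, which is where its $\binom{k}{\le d}^2$ factor comes from. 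You skip the completion step entirely, arguing from the clique definition that every $k$-clique has exactly $\binom{k}{\le d}$ edges, so a uniform shift $w'(e)=w(e)+M$ and $t'=t+\binom{k}{\le d}M$ already preserves the instance; this is simpler and gives the tighter bound $M'\le 2\binom{k}{\le d}M$. Two small cosmetic points: in Part~1 you use $W=\binom{k}{\le d}M+1$, which gives $M'=\binom{k}{\le d}M+1$ rather than the literal bound $\binom{k}{\le d}M$ stated in the Fact (the paper uses $\binom{k}{\le d}M$, at the cost of a slightly less airtight ``new edges cannot be used'' argument); and your caveat ``assuming $d\ge 2$'' in Part~2 is fair but the same caveat implicitly applies to the paper's construction, since for $d=1$ no $2$-edges exist to enforce distinctness of the projected vertices.
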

\begin{proof}
  Let $(G,w,k,t)$ be an instance for the problem.
  
  For the first claim, we set $w(e)=\binom{k}{\le d}M$ for edges~$e$ that are supposed to be absent; such edges cannot be used by any solution. Hence, we can assume $E(G)=\binom{V(G)}{\le d}$ without loss of generality.

  For the second claim, we define $V(G')=\{1,\dots,k\}\times V(G)$.
  For every pairwise distinct $a_1,\dots,a_{d'}\in\set{1,\dots,k}$
  and every edge $\set{v_1,\dots,v_{d'}}\in E(G)$ of size~$d'$, we add an edge $f$ with 
  \[f= \set[\big]{(a_1,v_1),\dots,(a_{d'},v_{d'})}\subseteq V(G')\]
  to~$G'$.
  We set the weight $w'(f)=w(\set{v_1,\dots,v_{d'}})$.
  It is clear that this instance is equivalent to the input
  instance, and $k$-partite (the parts consist of vertices with equal first coordinate).

  For the third claim, we slightly modify the weights by setting $t'=0$ and
  subtracting~$t$ from certain edge weights. Specifically, we start with the construction used in the second claim. For any edge of cardinality $d$, denoted by $f=\{(a_1,v_1),\dots,(a_d,v_d)\}$, we set
  $w'(f)=w(\{v_1,\dots,v_d\})$ if $\{a_1,\dots,a_d\}\ne\{1,\dots,d\}$ and
  $w'(f)=w(\{v_1,\dots,v_d\})-t$ if $\{a_1,\dots,a_d\}=\{1,\dots,d\}$. Note that any $k$-clique in~$G'$ contains exactly one edge~$f$ that intersects the first~$d$ parts of the $k$-partition in exactly one vertex each.

  For the fourth claim, we first ensure that
  $E(G)=\binom{V(G)}{\le d}$ using the first claim, which increases $M$ by at most a factor $\binom{k}{\le d}$.
  Let $L=\max\{0, -w(e) : e\in E(G)\}$, that is, $L$
  is the absolute value of the smallest negative weight that occurs in the
  input, or $0$ if there is none.
  We set $w'(e)=w(e)+L$ for all $e$ and $t'=t+L\binom{k}{\le d}$.
  If $t'<0$ or $t'>\max\{w'(e)\}\cdot\binom{k}{\le d}$, the instance is a
  trivial no-instance.
  Otherwise the reduction outputs $(G,w',k,t')$.
\end{proof}

\subsection{Weight Reduction: From Arbitrary to Polynomial}

We proceed by reducing the weights of a given instance of the Weighted $d$-Hypergraph $k$-Clique problem.
By taking the numbers modulo a random prime, we reduce the maximum weight
from~$M$ to~$n^{O(k)}$ in the following way.
\begin{lemma}\label{lemma: random primes}
  Let $d,k\in\N$ with $1\le d\le k$.
  For some constant $f(k,d)\in\N$ there is a randomized $f(k,d)
  \cdot\polylog M$ time self-reduction for the Weighted $d$-Hypergraph $k$-Clique problem that, on input an instance $(G,w,k,t)$ with maximum weight~$M$, makes at most $f(k,d)$ queries to instances $(G,w',k,t')$ where $w':E(G)\to\N$, $t'\in\N$, $M'\le f(k,d)\cdot n^{O(k)}$, and the success probability of the
  reduction is at least $99\%$.
\end{lemma}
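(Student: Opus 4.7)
The plan is to apply the standard random-prime hashing trick. First, I would invoke Fact~\ref{fact: clique basic preprocessing}(4) to reduce to the case $w:E(G)\to\N$, $t\in\N$, at the cost of inflating $M$ by a factor $O(k^{2d})$. Set $K := \binom{k}{\le d}+1$; for any candidate $k$-clique $S$, the quantity $\Delta_S := \sum_{e\in E(G[S])} w(e) - t$ (whose vanishing we want to test) lies in $[-KM,KM]$.

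Next, sample a prime $p$ uniformly from the primes in the interval $[P,2P]$, where $P := C\cdot n^{k+c}$ for constants $c,C$ depending only on $k,d$ to be fixed below. Define $w'(e) := w(e)\bmod p$ and $t^* := t\bmod p$, and emit the $K$ output queries $(G,w',k,t_j)$ for $j=0,\ldots,K-1$, where $t_j := t^* + jp$. Each query has non-negative integer weights and a target bounded by $Kp = f(k,d)\cdot n^{O(k)}$, with setup cost $\polylog M$ per weight.

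For correctness, any $k$-clique $S$ with $\sum_e w(e)=t$ satisfies $\sum_e w'(e)\equiv t^*\pmod p$ and $\sum_e w'(e)\in[0,(K-1)(p-1)]$, so the sum equals $t_j$ for exactly one $j$ and that query returns yes. Conversely, a yes-answer to some query exhibits a clique $S$ with $p\mid\Delta_S$; the reduction fails only when $\Delta_S\ne 0$ yet $p\mid\Delta_S$ for some $S$. A nonzero integer of absolute value at most $KM$ has at most $\log_2(KM)$ distinct prime divisors, and by the Prime Number Theorem the interval $[P,2P]$ contains $\Theta(P/\log P)$ primes, so the failure probability for a fixed $S$ is $O(\log(KM)\log P/P)$; union-bounding over the $\le n^k$ candidate cliques bounds the total failure probability by $O(n^k\log(KM)\log P/P)$.

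The main obstacle is calibrating $c$ so that this bound falls below $1/100$ while keeping $M' = f(k,d)\cdot n^{O(k)}$ independent of~$M$. Under the standard fine-grained assumption $\log M = n^{O(1)}$, taking $c$ to be a sufficiently large constant depending only on this polynomial bound and on $k,d$ makes the $O(n^k\log(KM)\log P/P)$ bound an arbitrarily small constant; the finitely many small values of $n$ where the asymptotics fail are handled outright in $O_{k,d}(1)$ time by brute force. This yields the claimed reduction.
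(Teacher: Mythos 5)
Your plan follows the same outline as the paper's proof --- reduce to non-negative weights, hash the weights modulo a uniformly random prime, and emit $\binom{k}{\le d}+1$ oracle queries indexed by the possible carry --- but there is a genuine gap in how you calibrate the prime range. You want $M' \le f(k,d)\cdot n^{O(k)}$ with $f$ a function of $k,d$ alone, yet your failure bound carries a factor of $\log(KM)$, so the constant $c$ in $P=Cn^{k+c}$ must grow with $\log M$. You paper over this by invoking a ``standard fine-grained assumption $\log M = n^{O(1)}$,'' but no such hypothesis appears in the lemma statement: the input weight $M$ is arbitrary, and the only concession the lemma makes to large $M$ is the $\polylog M$ factor in the running time. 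A constant $c$ chosen to depend on the exponent in $\log M = n^{O(1)}$ is not a function of $(k,d)$ alone, so your argument does not deliver the claimed $f(k,d)$.

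The missing idea, which the paper uses, is a two-case split on $M$ itself. When $M \ge \exp(n^k)$, the trivial brute-force over all $O(n^k)$ candidate cliques takes time $O(n^k\log M) = O(\polylog M)$, which already fits inside the allowed time budget, so no hashing is needed at all. Only in the complementary regime $M < \exp(n^k)$, where $\log(k^d M) \le O(n^k)$, is the random-prime reduction invoked; there the number of candidate primes $Q \approx n^k\log(k^d M)$ is automatically $n^{O(k)}$, so the sampled prime $p$ and all resulting weights stay bounded by $f(k,d)\cdot n^{O(k)}$ with $f$ depending only on $k,d$. Your ``brute-force the finitely many small $n$'' remark does not save the argument, because the problem is not small $n$ but superexponentially large $M$ for fixed $n$. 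Add the case split on $M$ and your proof goes through.
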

\begin{proof}
  If $M\le n^k$ holds, we do not need to do anything.
  If $M \ge \exp(n^k)$ holds, then in time $O(n^k \log M) = \polylog(M)$ we brute-force the problem. 
  In the remaining case, we sample a prime~$p$ uniformly at random from a range specified
  later, and set $w'(e)=w(e)\bmod p$ for all~$e$.
  We query the oracle $(G,w',k,t')$ for all $t'$ with $t'=j p + (t\bmod p)$
  and $j\in \{0,\dots,\binom{k}{\le d} \}$, and we output \emph{yes} if and only
  if at least one oracle query returns \emph{yes}.
  To prove the completeness of this reduction, let~$S$ be a $k$-clique of weight~$t$
  with respect to~$w$.
  Then
  $
  \sum_{e \in E(G[S])} (w(e)\bmod p)
  =jp+\big((\sum_{e \in E(G[S])} w(e))\bmod p\big)
  =jp+ (t \bmod p)= jp+t'
  $
  holds for some $j$ in the specified range since $\binom{k}{\le d}$ is the
  number of terms in the sum.
  Hence yes-instances map to yes-instances with probability~$1$.
  Conversely, if such a $j$ exists, then the weight of $S$ modulo~$p$ is equal
  to $t'$ modulo~$p$.

  For the soundness of the reduction, we need to specify the sampling process for $p$.
  This is implemented as follows: let $Q=200 n^k \log (k^dM)$ and sample positive integers bounded by $O(Q \ln Q)$ uniformly at random until we have found a prime (which we can verify, for example, deterministically in
  time $O(\polylog Q) = O(\polylog M)$ since $M > n$). By the prime number theorem, with probability at least $99.5\%$, after $O(\ln Q) \le O(dk\log n)$ samples we have found a prime that is a uniform sample from a set of at least~$Q$ primes.

  The weight of each $k$-clique~$S$ in~$G$ is at most $k^d M$ in absolute value, and
  there are at most $n^k$ distinct sets~$S$, and so~$n^k$ is also an upper bound
  on the number of distinct weights that appear. For the soundness of the reduction, it is sufficient that $w(S)$ is not congruent to $t$ modulo $p$. As $|t-w(S)|$ is at most $k^d M$, it has at most $\log(k^d M)$ prime divisors. Therefore the probability that for some $S$ it holds that $w(S)$ is congruent to $t$ modulo $p$ is at most $n^k\log(k^d M)/Q = 1/200$.
  Overall, we succeed at finding a prime with the desired property with probability
  $(99.5\%)^2 \ge 99\%$.

  We indeed make at most $k^d$ queries to the oracle, and the largest weight in
  each query is bounded by $\binom{k}{\le d}\cdot p$.
  Since $p$ is bounded by~$Q$ and $M \le \exp(n^k)$, this is at most $f(k,d) n^{O(k)}$.
  For the running time, we need to worry about the bitlength of the involved
  weights.
  The input weights use at most $\log M$ bits, and so~$Q$ (and thus any $p$) uses at most $O(dk \log n + \log M) = \polylog M$ bits.
  Computing the weights modulo~$p$ can be done in time $\polylog M$.
\end{proof}

\subsection{Weight Reduction: From Polynomial to Unweighted}
We reduce the weights from $n^{O(k)}$ to $f(k,d)\cdot \log n$ using a deterministic argument, and then use exhaustive search to reduce to the unweighted case.

The \emph{$q$-expansion} of a number $N\in\N$ is the unique sequence
$N_0,N_1,\dots\in\set{0,\dots,q-1}$ with $N=\sum_{\ell\in\N} N_\ell q^{\ell}$.
After applying the $q$-expansion, all $N_\ell$ are bounded by~$q-1$.
However, the smaller we choose~$q$, the longer the relevant part of the encoding of~$N$ gets; the precise length of this encoding is $p=\lceil\log_q (N+1)\rceil$.
The following lemma uses carries to split the weight constraint along the $q$-expansions of the edge weights~$w$ and the target~$t$.
\begin{lemma}\label{lem: clique weight qexpansion}
  Let $G$ be a $d$-hypergraph with edge-weight function $w:E(G)\to\N$ and a set $S\subseteq V(G)$ with $\abs{S}=k$.
  Let $t,q,p\in\N$ with $q\ge 2$ and $p=\lceil\log_q (t+1)\rceil$.
  The following are equivalent:
  \begin{enumerate}[label=\roman*.]
    \item\label{it: sumtarget}
      (Sum has target value.)
      We have $t=\displaystyle\sum_{e \in E(G[S])} w(e)$.
    \item\label{it: explin}
      (Expansions and carries satisfy linear constraints.)
      There is a sequence $c_0,c_1,\dots\in\{0,\dots,2\binom{k}{\le d}\}$ such that $c_0=0$ and the following linear equations hold for all $\ell\in\N$:
      \begin{align*}
        q c_{\ell+1}
        +
        t_\ell
        =
        c_{\ell}+{\sum_{e \in E(G[S])} w_\ell(e)}
        \,.
      \end{align*}
    \item\label{it: expquad}
      (Expansions and carries satisfy a quadratic equation.)
      There is a sequence $c_0,c_1,\dots\in\{0,\dots,2\binom{k}{\le d}\}$ such that $c_0=0$ and the following quadratic equation holds:
      \begin{equation}\label{eq: weighted clique
          2norm}
        \sum_{\ell\in\N}
        \Big(
          c_{\ell}-t_\ell-q c_{\ell+1}+\sum_{e \in E(G[S])} w_\ell(e)
        \Big)^2=0\,.
      \end{equation}
  \end{enumerate}
\end{lemma}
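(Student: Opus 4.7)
The plan is to establish the chain (iii) $\Leftrightarrow$ (ii) $\Leftrightarrow$ (i).

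The equivalence (ii) $\Leftrightarrow$ (iii) is immediate: each summand on the left-hand side of~(\ref{eq: weighted clique 2norm}) is a squared integer and hence non-negative, so the total sum vanishes if and only if every summand vanishes, which is precisely the system of linear constraints in~(ii).

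For (i) $\Rightarrow$ (ii), the idea is to simulate $q$-ary addition with carries. I would put $W_\ell := \sum_{e \in E(G[S])} w_\ell(e)$, set $c_0 = 0$, and define $c_{\ell+1} := \lfloor (c_\ell + W_\ell)/q \rfloor$ inductively, together with $t'_\ell := (c_\ell + W_\ell) \bmod q$. By construction the identity $qc_{\ell+1} + t'_\ell = c_\ell + W_\ell$ holds for every $\ell$. Since $|E(G[S])| \le \binom{k}{\le d}$ and each digit $w_\ell(e) \le q-1$, one has $W_\ell \le (q-1)\binom{k}{\le d}$, and a one-line induction gives the bound $c_\ell \le \binom{k}{\le d}(q+1)/q \le 2\binom{k}{\le d}$. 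After the last nonzero digit of all $w(e)$, the recurrence reduces to $c_{\ell+1} = \lfloor c_\ell/q \rfloor$, which forces $c_\ell = 0$ for all sufficiently large~$\ell$; the same recurrence then yields $t'_\ell = 0$ past that point. Telescoping $qc_{\ell+1} + t'_\ell = c_\ell + W_\ell$ after multiplication by $q^\ell$ gives $\sum_\ell q^\ell t'_\ell = \sum_\ell q^\ell W_\ell = \sum_{e \in E(G[S])} w(e) = t$, so by uniqueness of the $q$-expansion $t'_\ell = t_\ell$ for all $\ell$, establishing~(ii).

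For (ii) $\Rightarrow$ (i), I would multiply the $\ell$-th linear equation by $q^\ell$ and sum over $\ell \in \N$; only finitely many terms are nonzero because $t_\ell$ and $w_\ell(e)$ have finite support and the same ``division-by-$q$'' argument as above shows that any valid carry sequence must terminate. The carry contributions telescope: $\sum_\ell (q^{\ell+1} c_{\ell+1} - q^\ell c_\ell) = -c_0 = 0$, so what remains is $\sum_\ell q^\ell t_\ell = \sum_\ell q^\ell W_\ell$, which is exactly $t = \sum_{e \in E(G[S])} w(e)$.

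The main obstacle is the bookkeeping in the (i) $\Rightarrow$ (ii) direction, namely verifying that the inductively defined carries are non-negative integers lying in $\{0,\ldots,2\binom{k}{\le d}\}$ and that they vanish for all sufficiently large~$\ell$; once those two properties are in hand, the remaining steps are routine algebraic manipulations and the uniqueness of the $q$-expansion.
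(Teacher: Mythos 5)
Your proof is correct and follows essentially the same approach as the paper: (ii)$\Leftrightarrow$(iii) by non-negativity of squares, (i)$\Rightarrow$(ii) by defining carries inductively to simulate $q$-ary addition and bounding them via the fact that the sum has at most $\binom{k}{\le d}$ terms, and (ii)$\Rightarrow$(i) by multiplying the $\ell$-th equation by $q^\ell$ and telescoping. The only cosmetic difference is that you define the carries directly via $\lfloor\cdot/q\rfloor$ and $\bmod q$ and then verify the resulting digit sequence equals the $q$-expansion of $t$, whereas the paper defines them so as to satisfy the linear recurrence (using $t_\ell$) and then checks integrality and the bound; these are equivalent.
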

\begin{proof}
  The equivalence between \ref{it: explin} and \ref{it: expquad} follows from the fact that a sum of squares is zero if and only if all summands are zero.
  To see that \ref{it: sumtarget} implies \ref{it: explin},
  suppose $t=\sum_e w(e)$, so the $q$-expansions are the same as well:
  $t_\ell = \paren[\big]{\sum_e w(e)}_\ell$
  for all $\ell$.
  We inductively set the carries so as to satisfy the linear equations; this choice is unique.
  It remains to argue that the~$c_\ell$ are integers between $0$ and~$2\binom k{\le d}$.
  The fact that they are non-negative integers is a standard property of $q$-expansions, so we only show the upper bound.
  We do so by induction: It clearly holds for~$c_0$.
  In general, we have
  \[
    c_{\ell+1}=-\frac{t_\ell}{q}+\frac{c_\ell}{q}+\sum_e \frac{w_\ell(e)}{q}\,.
  \]
  The first summand~$-t_\ell/q$ is at most~$0$, the second summand~$c_\ell/q$ is at most $2\binom k{\le d}/q\le \binom k{\le d}$ by induction and $q\ge 2$, and the third summand is at most $\binom k{\le d}$, because $w_\ell(e)<q$ holds and the sum has at most $\binom{k}{\le d}$ terms~$e$.

  To see that the second claim implies the first, we observe
  \begin{align*}
    &\sum_{\ell\in\N} q^\ell \sum_{e \in E(G[S])} w_\ell(e)
    =
    \sum_{\ell} q^\ell ( t_\ell +qc_{\ell+1}-c_\ell)\\
    &=
    t + \sum_{\ell>0} q^\ell c_\ell - \sum_\ell q^{\ell} c_\ell
    =
    t-c_0
    =
    t
    \,.
    \qedhere
  \end{align*}
\end{proof}

The following algorithm uses \eqref{eq: weighted clique 2norm} to reduce
weights; in particular, we use the binomial theorem $(a+b)^2=a^2+2ab+b^2$ in
\eqref{eq: weighted clique 2norm} (with $a=c_{\ell}-t_\ell-q c_{\ell+1}$) and then collect terms depending on which
vertices of~$G$ the weight terms depend on -- the terms not depending on edge
weights are collected into the target integer.
As discussed in the introduction, this approach was used before to reduce weights of cliques by Abboud et al.~\cite{AbboudLW14} in the more specific setting of \emph{node weights} in graphs (rather than hypergraps).

\begin{algor}{A}{Weight reduction for the weighted $k$-clique problem}{Given a
    $d$-hypergraph~$G$ with edge weights~$w:E(G)\to\Z$, a number~$k$, a
    weight target~$t\in\Z$, a parameter~$p\in\N$, and access to an oracle for weighted $k$-clique
    in $2d$-hypergraphs, the following algorithm finds a $k$-clique of weight $t$ in~$G$:}
  \item[A1] \emph{(Make $k$-partite and non-negative)}
    Apply Fact~\ref{fact: clique basic preprocessing} to make the instance complete and 
    $k$-partite and all weights non-negative.
  \item[A2] \emph{(Set parameters)}
    Let $M=\max(\{t\}\cup\{w(e):e\in E(G)\})$ and let $q\in\N$ be such that $p=\lceil\log_q M\rceil$.
  \item[A3] \emph{(Guess carries)}
    Exhaustively guess $c_\ell\in\{1,\dots,2\binom k{\le d}\}$ for each
    $\ell\in\{1,\dots,p\}$; set $c_{0}=0$. For each such guess do the following:
    \begin{description}
      \item[a] \emph{(Compute new weights)}
        For every set $f\in\binom{V(G)}{\le 2d}$, let
        \begin{align*}
          w'(f)
          &=
          \sum_{\ell=0}^p
          \Big(
            2\cdot[f\in E(G)]\cdot
            w_\ell(f) \cdot (c_\ell-t_\ell-q c_{\ell+1})
            \\
            &\qquad\qquad+
            \sum_{\substack{e_1,e_2\in E(G)\\e_1\cup e_2 = f}}
            w_\ell(e_1) \cdot w_\ell(e_2)
          \Big)
          \,,\\
          t'&=-\sum_{\ell=0}^{p-1} (c_\ell - t_\ell - q c_{\ell+1})^2
          \,.
        \end{align*}
      \item[b] \emph{(Call oracle)}
        If the oracle detects a $k$-clique~$S$ in $(G',w')$ of weight~$t'$, then halt and output \emph{yes}; otherwise continue guessing carries.
    \end{description}
  \item[A4]
    If no suitable carries were found, output \emph{no}.
\end{algor}

\begin{lemma}\label{lemma: clique weight reduction}
  Let $d,k\in\N$ with $1\le d\le k$.
  Algorithm A (with input parameter~$p\in\N$) is an oracle reduction from
  Weighted $d$-Hypergraph $k$-Clique to Weighted $2d$-Hypergraph $k$-Clique.
  The algorithm runs in time $O(p4^dn^{2d}k^{dp})$ and makes at most $k^{dp}$ oracle
  queries.
  Every query is a hypergraph on the same set of vertices.
  If $M$ is the maximal weight among $w$ and $t$, then the maximal weight~$M'$
  of all queries satisfies $M'\le O(k^{4d} M^{2/p} p)$.
\end{lemma}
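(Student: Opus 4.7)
The plan is to verify in sequence (i)~correctness, (ii)~the query and time counts, (iii)~that the vertex set is preserved, and (iv)~the weight bound.

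For correctness, I would apply the equivalence \ref{it: sumtarget}$\Leftrightarrow$\ref{it: expquad} of Lemma~\ref{lem: clique weight qexpansion} to the post-preprocessing instance. After Step~A1 the hypergraph is complete $k$-partite with non-negative weights, so a size-$k$ set $S$ meeting each part in one vertex is a $k$-clique of weight~$t$ precisely when $\sum_{e\in E(G[S])}w(e)=t$, which the Lemma equates with the existence of carries $c_0=0,c_1,\ldots,c_p$ in $\{0,\ldots,2\binom{k}{\le d}\}$ satisfying~\eqref{eq: weighted clique 2norm}. The core calculation is to expand $(a+b)^2$ inside~\eqref{eq: weighted clique 2norm} with $a_\ell=c_\ell-t_\ell-qc_{\ell+1}$ and $b_\ell=\sum_{e\in E(G[S])}w_\ell(e)$. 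The three resulting term types collect as follows: the $a_\ell^2$ terms do not depend on $S$ and collect into $-t'$; the cross-terms $2a_\ell b_\ell$ are linear in $w_\ell(e)$ and produce the single-edge ($f=e$) contribution to $w'(f)$; and the $b_\ell^2$ terms expand into $\sum_{e_1,e_2\in E(G[S])}w_\ell(e_1)w_\ell(e_2)$, which is reindexed by $f=e_1\cup e_2\in\binom{S}{\le 2d}$ to produce the quadratic contribution to $w'(f)$. Completeness of $G$ is what makes the reindexing valid, because it reduces $e_1,e_2\in E(G[S])$ to $e_1,e_2\subseteq S$. Hence the algorithm answers \emph{yes} on some guess iff $G$ has a $k$-clique of weight $t$.

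For the counts, the guesses in Step~A3 range over a set of size at most $(2\binom{k}{\le d}+1)^p=O(k^{dp})$, giving the query bound. Each guess computes $w'(f)$ for the $O(n^{2d})$ subsets $f$ of size at most $2d$; the inner sum over decompositions $e_1\cup e_2=f$ with $|e_i|\le d$ has at most $\binom{2d}{d}\le 4^d$ terms (parametrised by the partition of $f$ into $e_1\setminus e_2$, $e_2\setminus e_1$, $e_1\cap e_2$), and summing over $\ell\in\{0,\ldots,p\}$ adds a factor~$p$, giving $O(p\,4^d n^{2d})$ per guess and $O(p\,4^d n^{2d} k^{dp})$ overall. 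Preservation of the vertex set is immediate since $w'$ is defined on subsets of $V(G)$.

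For the weight bound, Step~A2 yields $q=O(\tilde M^{1/p})$ for the post-preprocessing maximum $\tilde M$. With $w_\ell(e),t_\ell<q$ and $|c_\ell|\le 2\binom{k}{\le d}=O(k^d)$, the quantity $|a_\ell|=|c_\ell-t_\ell-qc_{\ell+1}|$ is $O(qk^d)$. Summing over $p+1$ values of $\ell$ gives $|w'(f)|=O(pq^2k^d)$ and $|t'|=O(pq^2k^{2d})$. Folding in the factor $O(k^{2d})$ by which Fact~\ref{fact: clique basic preprocessing} inflates $M$ into $\tilde M$, and using $q^2=O(\tilde M^{2/p})\le O(k^{4d/p}M^{2/p})$ for $p\ge 2$, yields $M'=O(pk^{4d}M^{2/p})$. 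The main obstacle is the term-matching in the correctness step: one has to carefully verify that the binomial expansion of~\eqref{eq: weighted clique 2norm} decomposes exactly into the algorithm's definitions of $t'$ and $w'(f)$, with the reindexing $f=e_1\cup e_2$ correctly translating a sum over pairs of $d$-edges into a sum over hyperedges of size at most~$2d$.
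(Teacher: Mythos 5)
Your proof follows the same approach as the paper's: invoke Lemma~\ref{lem: clique weight qexpansion} to reduce the weight target to the existence of carries satisfying \eqref{eq: weighted clique 2norm}, expand the square with $a_\ell=c_\ell-t_\ell-qc_{\ell+1}$ and $b_\ell=\sum_e w_\ell(e)$, collect $a_\ell^2$ into $-t'$ and the cross and $b_\ell^2$ terms into $w'(f)$ via $f=e_1\cup e_2$, then count guesses, running time, and weights exactly as the paper does. The only cosmetic difference is that you are slightly more explicit about absorbing the preprocessing inflation $\tilde M\le O(k^{2d}M)$ into the $O(k^{4d}M^{2/p}p)$ bound via $q^2=\tilde M^{2/p}$, a step the paper leaves implicit.
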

\begin{proof}
  Let $G$ be the $d$-hypergraph with $E(G)=\binom{V(G)}{\le d}$, edge weight
  function $w:E(G)\to\N$, and target~$t\in\N$ after applying
  Fact~\ref{fact: clique basic preprocessing}.
  Let $G'$ be the $2d$-hypergraph with $E(G')=\binom{V(G)}{\leq 2d}$.

  We first prove the correctness of the reduction.
  By Lemma~\ref{lem: clique weight qexpansion}, the instance $(G,k,w,t)$
  has a $k$-clique~$S$ of total weight $t$ if and only if there
  exist $(c_\ell)_\ell$ satisfying~\eqref{eq: weighted clique 2norm}.
  Now consider the weight of~$S$ in~$G'$.
  We abbreviate the terms in the left side of \eqref{eq: weighted clique 2norm} with
  $a_\ell=c_\ell-t_\ell-qc_{\ell+1}$ and $b_\ell=\sum_e w_\ell(e)$, and have
  \[
    \sum_{\ell\in\N} (a_\ell + b_\ell)^2
    =
    \sum_{\ell\in\N} \paren[\big]{a^2_\ell + 2 a_\ell b_\ell+ b^2_\ell}
    =
    \sum_{\ell\in\N} a^2_\ell
    +
    \sum_{\ell\in\N} \paren[\big]{2 a_\ell b_\ell+ b^2_\ell}
    \,.
  \]
  The squares of~$b_\ell$ expand as follows:
  \begin{align*}
    b_\ell^2
    =
    \paren[\big]{\sum_{e} w_\ell(e)}^2
    =
    \sum_{\substack{e_1,e_2\in E(G)\\e_1\cup e_2 = f}}
    w_\ell(e_1) \cdot w_\ell(e_2)
    \,.
  \end{align*}
  Now we observe that $t'$ and $w'(f)$ were defined exactly as to satisfy
  \begin{align*}
    \sum_{\ell\in\N} (a_\ell + b_\ell)^2
    =
    -t' + \sum_{f \in E(G'[S])} w'(f)\,.
  \end{align*}
  By Lemma~\ref{lem: clique weight qexpansion},
  the set $S$ is a $k$-clique of weight~$t$ in~$(G,w)$ if and only if the right side of
  the latter equation is equal to~$0$, which in turn holds if and only if the
  weight of~$S$ with respect to~$w'$ is~$t'$.

  For the running time, note that the preprocessing takes $O(n^d)$ time.
  Exhaustive search for the carries takes $O(k^{dp})$ iterations, and each
  iteration takes time $O(n^{2d}p4^d)$ because of line A3a in which we need to compute $w'(f)$ for every edge; overall the reduction takes time
  $O(n^{2d}k^{dp})$ and makes at most $k^{dp}$ oracle queries.

  For the weights, note $c_\ell\le 2 k^d$ and so
  $|t'|\le O(k^{2d} q^2 p)$.
  To get the bound on $w'(f)$, observe that the term
  \[\sum_{e_1\cup e_1=f} w_\ell(e_1)w_\ell(e_2)\] is bounded by $4^dq^2$, and the term
  $w_\ell(f)\cdot(c_\ell-t_\ell-q c_{\ell+1})$ is bounded by $O(q^2k^d)$ in
  absolute value.
  The preprocessing step relying on Fact~\ref{fact: clique basic preprocessing}
  may have added an additional factor of~$k^{2d}$; overall, all weights are bounded
  by $O(k^{4d}q^2p)$.
\end{proof}

We apply Lemma~\ref{lemma: clique weight reduction} to reduce the maximum
weights from $\poly(n)$ to $O(\log n)$, which is small enough to allow for 
exhaustive enumeration to reduce to the problem without weights.
\begin{lemma}\label{lemma: clique weight loss}
  Let $d,k\in\N$ with $1\le d\le k$ and $f(d,k)\in\N$.
  There is an $n^{2d+o(1)}$-time oracle reduction from Weighted $d$-Hypergraph
  $k$-Clique with weights in $\{-n^{f(k,d)},\dots,n^{f(k,d)}\}$ to unweighted
  $k$-partite $2d$-hypergraph $k$-Clique.
  If the input has~$n$ vertices, every oracle query has~$n$ vertices and the reduction uses at most $n^{o(1)}$ queries. Here the $o(1)$ terms are of the form $g(k,d) / \sqrt{\log n}$.
\end{lemma}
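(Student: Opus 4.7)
The plan is to invoke Lemma~\ref{lemma: clique weight reduction} once with a carefully chosen parameter $p$ and then strip off the remaining weights by exhaustive enumeration over possible ``weight patterns'' on the partite edge-positions. First I apply Fact~\ref{fact: clique basic preprocessing} to make the input complete, $k$-partite, and non-negative; this only inflates the vertex count and~$M$ by constant-in-$(k,d)$ factors, so the weights remain in $n^{O(f(k,d))}$.

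Next I apply Lemma~\ref{lemma: clique weight reduction} with parameter $p=\lceil\sqrt{\log n}\rceil$. This yields at most $k^{dp}=2^{d\log k\cdot\sqrt{\log n}}=n^{O(1/\sqrt{\log n})}=n^{o(1)}$ oracle calls to weighted $2d$-hypergraph $k$-clique, each on the same $n$-vertex set, with maximum weight
\[
  M'\le O\paren[\big]{k^{4d}\cdot n^{2f(k,d)/p}\cdot p} \;=\; 2^{O(\sqrt{\log n})} \;=\; n^{o(1)},
\]
where the $O(\cdot)$ hides factors depending on $k$ and~$d$. The $k$-partiteness is preserved: Algorithm~A operates on subsets of the same partitioned vertex set, and any edge of~$G'$ not respecting the partition cannot occur in a $k$-partite $k$-clique and may be discarded. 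The running time of this phase is $O(p\cdot 4^d\cdot n^{2d}\cdot k^{dp})=n^{2d+o(1)}$, matching the target.

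To discharge the remaining weights, I reapply Fact~\ref{fact: clique basic preprocessing} to force non-negativity (losing only another constant factor in~$M'$) and exploit $k$-partiteness: every $k$-clique contains exactly one edge at each of the $\binom{k}{\le 2d}$ ``positions'' indexed by a subset of parts of size at most~$2d$. I enumerate all tuples $(w_P)_P$ of integers $w_P\in\{0,\dots,M'\}$ with $\sum_P w_P=t'$; there are at most $(M'+1)^{\binom{k}{\le 2d}}=n^{o(1)}$ such tuples for fixed~$k,d$. For each tuple I build an unweighted $k$-partite $2d$-hypergraph $G^*$ by keeping, at each position~$P$, only the edges whose weight equals $w_P$, and I feed $G^*$ to the unweighted oracle; I answer \emph{yes} iff some $G^*$ has a $k$-clique. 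Correctness is immediate: realizing weight~$t'$ in~$G'$ is equivalent to choosing a compatible tuple $(w_P)_P$ and then finding an unweighted $k$-clique in the corresponding~$G^*$.

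The main thing to verify is that the two $n^{o(1)}$ blow-ups (from the weight reduction and from the weight enumeration) are simultaneously controlled in the advertised form $g(k,d)/\sqrt{\log n}$; since both are bounded by $\exp(O(\sqrt{\log n}))$ for fixed~$k,d$, this is the binding constraint determining the choice $p=\lceil\sqrt{\log n}\rceil$ and sets the shape of the error term. Each $G^*$ is built in $O(n^{2d})$ time from its $O(n^{2d})$ potential edges, so with $n^{o(1)}$ queries overall we obtain the claimed $n^{2d+o(1)}$ running time.
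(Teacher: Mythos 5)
Your proof is correct and follows essentially the same route as the paper: apply Lemma~\ref{lemma: clique weight reduction} with $p$ on the order of $\sqrt{\log n}$ to bring the weights down to $n^{o(1)}$, then use $k$-partiteness to enumerate, over each of the $\binom{k}{\le 2d}$ ``color types,'' the weight contributed by the unique edge of that type in the clique, discarding edges that don't match; the number of weight tuples is $n^{o(1)}$ for fixed $k,d$. Your extra explicit note that the induced $2d$-hypergraph can be pruned of non-partition-respecting edges is a small refinement in presentation, but the argument is the same as the paper's.
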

\begin{proof}
  Let $G$ be a $k$-partite $d$-hypergraph with edge-weight function~$w:E(G)\to\Z$ and target value~$t\in\Z$.
  We apply Lemma~\ref{lemma: clique weight reduction}.
  In particular, setting $p = \sqrt{\log n}$,
  we get
  $k^{dp}=n^{o(1)}$ queries and maximum weight
  $M'\le O(k^{4d} M^{2/p} p) = n^{o(1)}$.

  Each query is now a $k$-partite instance $(G',w',k,t')$ with maximum
  weight~$M'$, where we treat $k$ and $d$ as constants.
  A solution~$S$ of $G'$ satisfies $\sum_{e \in E(G'[S])} w'(e)=t'$.
  Since~$G'$ is $k$-partite,~$S$ intersects each part in exactly one vertex, and
  for each set~$C\subseteq\{1,\dots,k\}$ with $1\le|C|\le d$, there is a unique
  edge $e_C \in E(G'[S])$ that intersects exactly the color classes in~$C$, and
  this edge contributes $w'(e_C)$ to the total weight of~$S$.
  We want to simulate these weights by exhaustively guessing the contribution $w'(e_C)$ of
  each~$C$. To do so, we only keep the edges of color type~$C$ that have the
  guessed weight.

  More precisely, for each $C\subseteq\{1,\dots,k\}$ with $1\le |C|\le d$, we
  exhaustively guess a weight $a_C\in[-M',M']$.
  In total, this requires iterating through at most $(2M'+1)^{k^d}=n^{o(1)}$ candidate weight vectors~$a=(a_C)_{C\subseteq\set{1,\dots,k}}$.
  If the sum $\sum_C a_C$ is not equal to~$t'$, we reject the candidate vector and move to the next one.
  Otherwise, for each~$C$ and each edge~$e$ intersecting exactly the color
  classes prescribed by~$C$, we keep $e$ in the graph if and only
  if~$w'(e)=a_C$.
  In this way, we obtain a~$k$-partite $2d$-hypergraph~$G_a$.
  For each candidate vector~$a$ of weight~$t'$, we query the (unweighted) $k$-clique oracle for $2d$-hypergraphs and output yes if and only if at least one query outputs yes.

  The claim on the running time follows, since there are only $n^{o(1)}$ candidate vectors when~$k$ is regarded as a constant, and each oracle query~$G_a$ is prepared in time~$n^{2d+o(1)}$, which is almost-linear in the description length of~$G_a$.
  For the correctness, note that $G$ has a solution~$S$ if and only if~$G'$ has a
  solution~$S$.
  If~$G'$ has a solution~$S$, then there is a setting of the~$a_C$ corresponding
  to the solution such that all edges in~$G[S]$ survive in~$G_a$, and the oracle
  finds a $k$-clique.
  On the other hand, if~$S$ is a $k$-clique in some~$G_a$, then the used edges
  have the desired weight in~$G'$.
  The correctness of the reduction follows.
\end{proof}

\subsection{Reduction to Orthogonal Vectors}

In this section, we reduce from $k$-Clique in $d$-hypergraphs via the $k$-OV problem to $2$-OV.
Recall that the $k$-OV problem is, given~$k$ sets $X_1,\dots,X_k\subseteq\{0,1\}^D$ of
Boolean vectors, to find $x_1\in X_1$, \ldots, $x_k\in X_k$ such that $\sum_{j=1}^D \prod_{i=1}^k x_{ij} = 0$ holds, where the sum and product are the usual operations over the integers.
\begin{lemma}\label{lemma: kclique to kOV}
  Let $d,k\in\N$ with $1\le d\le k$.
  There is a many-one reduction from (unweighted)
  $k$-partite $d$-hypergraph $k$-Clique to $k$-OV that runs in time $O(n^{d+1})\polylog n$;
  the number of produced vectors is $n$ and the dimension of the vectors
  is~$n^d$.
\end{lemma}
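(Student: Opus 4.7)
The plan is to build, for each vertex $v\in V_i$ in part $V_i$ of the $k$-partite $d$-hypergraph $G$, a Boolean vector $x_v$ whose coordinates enumerate ``potential bad witnesses'' by which a candidate transversal $(v_1,\dots,v_k)\in V_1\times\cdots\times V_k$ could fail to form a $k$-clique. The defining condition of $k$-OV, $\sum_j\prod_i x_{ij}=0$, will then exactly express that no bad witness survives, and hence that the transversal is a $k$-clique.

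Concretely, I index coordinates by pairs $(C,\vec u)$ with $C\subseteq[k]$, $1\le|C|\le d$, and $\vec u=(u_j)_{j\in C}\in\prod_{j\in C}V_j$, restricted to those $\vec u$ for which $\{u_j:j\in C\}\notin E(G)$ (a ``missing hyperedge'' of the induced type). For every $v\in V_i$, set $x_v[(C,\vec u)]=0$ iff $i\in C$ and $u_i\ne v$, and $x_v[(C,\vec u)]=1$ otherwise; let $X_i=\{x_v:v\in V_i\}$. Then for any transversal $(v_1,\dots,v_k)$, the product $\prod_{i=1}^k x_{v_i}[(C,\vec u)]$ evaluates to $1$ iff $v_j=u_j$ for every $j\in C$. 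Summing over coordinates counts exactly the number of subsets $C\subseteq[k]$ with $1\le|C|\le d$ for which $\{v_j:j\in C\}\notin E(G)$; this count is zero iff $\{v_1,\dots,v_k\}$ is a $k$-clique in the sense of the paper.

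For the resources, each $X_i$ contains $|V_i|\le n$ vectors, and the dimension is at most $\sum_{d'=1}^d\binom{k}{d'}n^{d'}=O(n^d)$, absorbing the combinatorial factor into constants depending on $k,d$. Producing each vector amounts to scanning the missing-edge witnesses and setting one bit each, taking $O(n^d)$ time per vector plus $\polylog n$ for bit operations; the total construction time is $O(kn\cdot n^d)\polylog n=O(n^{d+1})\polylog n$, matching the claim. No deep obstacle arises; the only point to watch is that a single witness $(C,\vec u)$ constrains only the parts indexed by $C$, which is exactly what lets one uniform encoding simultaneously enforce hyperedges of all sizes $1,\dots,d$ rather than requiring a separate reduction per size.
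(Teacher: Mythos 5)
Your construction is essentially identical to the paper's: the paper indexes coordinates by the set $\overline E(G)$ of missing hyperedges (restricted to those meeting each part in at most one vertex), and your pairs $(C,\vec u)$ are exactly a parametrization of that same set, with the same bit-assignment rule ($0$ iff the vertex's part is hit by the witness but the vertex is not the designated one) and the same correctness argument. The resource analysis also matches, so this is the same proof stated with slightly different bookkeeping.
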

\begin{proof}
  Let $G$ be a $k$-partite $d$-hypergraph with parts $V_1,\ldots,V_k$.
  Let $v_1,\dots,v_k$ be vertices with $v_i\in V_i$ for all $i\in\set{1,\dots,k}$.
  Then $\set{v_1,\dots,v_k}$ forms a $k$-clique in~$G$ if and only if all non-edges~$h\in\overline E(G)$ satisfy $h\not\subseteq\{v_1,\dots,v_k\}$.
  Here $\overline E(G)$ denotes the set
  \[\setc*{e \in \binom{V(G)}{\leq d}}{\forall i. |e \cap V_i| \leq 1}\setminus E(G)\,.\]

  We construct the instance $X_1,\dots,X_k$ of~$k$-OV as follows.
  For each $v\in V_i$, we create a vector $x_v\in X_i\subseteq\set{0,1}^{\overline E(G)}$ as follows:
  If $h\in\overline E(G)$ is disjoint from $V_i$, we set $x_{v,h}=1$.
  If $h\cap V_i = \set{v}$, we set $x_{v,h}=1$.
  Otherwise we have $h\cap V_i =\set{u}\ne \set{v}$ for some~$u$, and we set $x_{v,h}=0$.
  Clearly the sets~$X_1,\dots,X_k$ contain a total of~$n$ Boolean vectors, each with $\abs{\overline E(G)}\le n^d$ dimensions.
  Moreover, the sets are easily computed in $O(n^{d+1}\polylog n)$ time.
  It remains to prove the correctness of the reduction.

  To this end, let $v_1,\dots,v_k$ with $v_i\in V_i$ for all~$i$ be vertices that form a $k$-clique $\set{v_1,\dots,v_k}$ in the $k$-partite $d$-hypergraph~$G$.
  We claim that $\set{x_{v_i}}$ is a solution to the~$k$-OV instance, that is,
  we claim $\sum_{h\in\overline E(G)} \prod_{i=1}^k x_{{v_i},h}=0$.
  To see this, let $h\in\overline E(G)$ be arbitrary.
  Since $\set{v_1,\dots,v_k}$ is a $k$-clique in~$G$ and~$h$ is a non-edge of~$G$, there exists a part~$V_i$ that satisfies $V_i\cap h\ne\emptyset$ and $v_i\not\in h$.
  By definition, we have $x_{v_i,h}=0$.
  Thus the entire sum is indeed zero.

  For the reverse direction, let $x_{v_1},\dots,x_{v_k}$ with $x_{v_i}\in X_i$ for all $i$ be vectors that form a solution to the $k$-OV instance.
  This means that for all $h\in\overline E(G)$, there exists some $i\in\set{1,\dots,k}$ such that $x_{v_i,h}=0$ holds.
  By definition, this implies that $h\cap V_i = \set{u}\ne\set{v_i}$ for some~$u$ holds.
  Thus in particular, $h\not\subseteq\set{v_1,\dots,v_k}$ and so the set $\set{v_1,\dots,v_k}$ does not contain any non-edges of~$G$ and must be a clique.
\end{proof}

The last step of our reduction is reminiscent of the classic SETH-based lower bound for the 2-OV problem~\cite{Wil05}.

\begin{lemma}\label{lemma: kOV to 2OV}
  Let $k\in\N$.
  There is an $O(n^{\lceil k/2\rceil}D)$ time many-one reduction from $k$-OV 
  to $2$-OV that maps instances with $n$ vectors in dimension $D$ to instances with
  $O(n^{\lceil k/2 \rceil})$ vectors in dimension~$D$.
\end{lemma}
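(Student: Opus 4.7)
The plan is to use the standard split-and-combine trick reminiscent of Williams's SETH-to-OV reduction. Partition the index set $\{1,\dots,k\}$ into $I_1 = \{1,\dots,\lceil k/2 \rceil\}$ and $I_2 = \{\lceil k/2 \rceil + 1,\dots,k\}$. For each tuple $(x_i)_{i \in I_1} \in \prod_{i \in I_1} X_i$, I would create one combined vector $y \in \{0,1\}^D$ with $y_j = \prod_{i \in I_1} x_{ij}$, and collect all such vectors into a set $Y$. Analogously, form a set $Z$ by ranging over tuples in $\prod_{i \in I_2} X_i$. The output $2$-OV instance is $(Y, Z)$.

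For correctness, observe that for any two combined vectors $y \in Y$ and $z \in Z$, corresponding to tuples $(x_i)_{i \in I_1}$ and $(x_i)_{i \in I_2}$, we have $y_j \cdot z_j = \prod_{i \in I_1} x_{ij} \cdot \prod_{i \in I_2} x_{ij} = \prod_{i=1}^{k} x_{ij}$ for every coordinate $j$. Summing over $j$ gives $\sum_{j=1}^D y_j z_j = \sum_{j=1}^D \prod_{i=1}^k x_{ij}$, so $y$ and $z$ are orthogonal if and only if the original $k$-tuple $(x_1,\dots,x_k)$ certifies a solution to the $k$-OV instance. Thus the constructed $2$-OV instance has a solution if and only if the input $k$-OV instance does.

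For the parameter bounds, $|Y| \le n^{\lceil k/2 \rceil}$ and $|Z| \le n^{\lfloor k/2 \rfloor}$, so the total number of vectors is $O(n^{\lceil k/2 \rceil})$, and the dimension stays at~$D$. Constructing each combined vector requires taking coordinate-wise products over at most $\lceil k/2 \rceil = O(1)$ input vectors, which takes $O(D)$ time per combined vector; summed over all vectors, the total construction time is $O(n^{\lceil k/2 \rceil} D)$, matching the claim.

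There is no real obstacle here: this is a direct product-coding argument whose proof amounts to the coordinate-wise identity noted above. The only minor bookkeeping is to verify that both halves of the partition give vector set sizes dominated by $n^{\lceil k/2 \rceil}$, which is immediate since $\lfloor k/2 \rfloor \le \lceil k/2 \rceil$.
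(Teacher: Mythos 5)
Your proof is correct and follows essentially the same split-and-combine argument as the paper: partition the $k$ factor sets into two halves, precompute coordinate-wise products of all tuples from each half, and observe that orthogonality of a pair of combined vectors encodes orthogonality of the original $k$-tuple. The only cosmetic difference is that you assign the larger half (size $\lceil k/2\rceil$) to the first side while the paper uses $\lfloor k/2\rfloor$; both give $O(n^{\lceil k/2\rceil})$ output vectors.
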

\begin{proof}
  Let $X_1,\dots,X_k \subseteq \{0,1\}^D$ be the input for $k$-OV with $n=\sum_{i=1}^k|X_i|$.
  The idea is to split the instance into two halves and list all candidate solutions in each half.
  For each candidate solution $S\subseteq X_1\cup\dots\cup X_{\lfloor k/2 \rfloor}$ with
  $|S\cap X_i|=1$ for all $i\in\{1,\dots,\lfloor k/2\rfloor\}$, we create a
  vector $v^S\in X'_1\subseteq\{0,1\}^D$ by setting $v^S_i=\prod_{u\in S} u_i$.
  Similarly, for each $S'\subseteq X_{\lfloor k/2 \rfloor+1}\cup\dots\cup X_k$ with
  $|S'\cap X_i|=1$ for all $i\in\{\lfloor k/2\rfloor+1,\dots,k\}$, we create a
  vector $v^S\in X'_2\subseteq\{0,1\}^D$ by setting $v^S_i=\prod_{u\in S} u_i$.
  We obtain an instance $X'_1,X'_2\subseteq\{0,1\}^D$ of $2$-OV.

  We claim that $X_1,\dots,X_k$ is a yes-instance of $k$-OV if and only if $X'_1,X'_2$ is a yes-instance of~$2$-OV.
  Suppose that $v_1,\dots,v_k$ are orthogonal, that is, $\prod_{i=1}^k (v_i)_j=0$
  holds for all $j\in\{1,\dots,D\}$.
  We set $S=\{v_1,\dots,v_{\lfloor k/2 \rfloor}\}$
  and $S'=\{v_{\lfloor k/2 \rfloor+1},\dots,v_k\}$.
  Clearly $v^S_j\cdot v^{S'}_j=0$ holds for all $j$, so $v^S,v^{S'}\in V'$ are
  indeed orthogonal.
  Conversely, if $v^S_j \cdot v^{S'}_j=0$ holds for all $j$, then the $k$ vectors in
  $S\cup S'$ are orthogonal.
\end{proof}

\subsection{Tying Things Together}

We now formally prove Theorem~\ref{theorem: OV main}.

\begin{reptheorem}{theorem: OV main}
  \theoremOVmain
\end{reptheorem}
\begin{proof} Let ($G,w,k,t)$ be an instance of Min-Weight-$k$-Clique on $d$-hypergraphs. We summarize the lemmas of this section as follows:
  
\begin{enumerate}
	\item Lemma~\ref{lemma: random primes} randomly reduces in $\polylog(M)$ time from Exact-Weight-$k$-Clique with
  weights up to $M$ to a constant (which depends on $k$ and $d$) number of instances of Exact-Weight-$k$-Clique on $G$ with weights up to
  $n^{O(k)}$. 
	\item Lemma~\ref{lemma: clique weight loss} reduces this in $n^{2d+o(1)}$ time to $n^{o(1)}$ instances of $k$-Clique on $2d$-hypergraphs.
	\item Lemma~\ref{lemma: kclique to kOV} reduces in $n^{2d+1+o(1)}$ time any instance of $k$-Clique on $2d$-hypergraphs to an instance of $k$-OV with $n$ vectors in $n^{2d}$ dimensions.
	\item Lemma~\ref{lemma: kOV to 2OV} reduces in time $n^{\lceil k/2 \rceil+2d+o(1)}$ any such an instance of $k$-OV to an instance of $2$-OV with $O(n^{\lceil k/2 \rceil})$ vectors in $n^{2d}$ dimensions.
\end{enumerate}
  Composing the reductions gives a randomized $O(n^{\lceil k/2 \rceil+2d+o(1)})+ \polylog(M)$ time oracle reduction from Exact-Weight-$k$-Clique on $d$-hypergraphs with $n$ vertices and largest weight $M$ to 2-OV on~$n^{\lceil k/2 \rceil}$ vectors of dimension $n^{2d}$ using $n^{o(1)}$ oracle calls. 
  
  To reduce from Min-Weight-$k$-Clique to Exact-Weight-$k$-Clique we use
  \cite[Theorem~1]{DBLP:conf/mfcs/NederlofLZ12}, which allows us to perform a
  binary search for the minimum-weight-$k$-clique by making few queries to
  Exact-Weight-$k$-Clique. As the domain $E(G)$ of our weight function is of size $O(n^d)$ and the maximum weight is upper bounded by $M$, this reduction requires $O(n^{d} \log M)$ oracle calls. 
  This yields a
  randomized
  oracle reduction from Min-Weight-$k$-Clique on $d$-hypergraphs with~$n$ vertices and largest weight $M$ to 2-OV,
  which
  runs in time $O(n^{\lceil k/2 \rceil+2d+o(1)}) \polylog(M)$
  and
  makes $n^{d+o(1)} \log M$ oracle calls.
  To reduce from $k$-Clique on $d$-hypergraphs with~$n$ vertices to 2-OV, we only use steps 3 and 4, which takes time $O(n^{\lceil k/2 \rceil+2d+o(1)})$.

  All three reductions produce instances of 2-OV with $N$ vectors in~$D$ dimensions, where $N=n^{\lceil k/2 \rceil}$ and $D=n^{2d}=N^{\delta}$ for some $\delta=\delta(k,d)>0$.
  If $k$ is large enough compared to $d$, then $\delta$ is arbitrarily close to zero.
  If the moderate-dimension OV conjecture is false, there exist $\delta,\eps'>0$ such that 2-OV with $D=N^\delta$ has an $O(N^{2-\eps'})$ time algorithm.
  Combined with any of the three reductions, we obtain three algorithms that run in time at most
  \[
    O\paren[\Big]{n^{\lceil k/2 \rceil+2d+o(1)} +n^{\lceil k/2 \rceil(2-\eps')+d+o(1)}}\polylog(M)
    \,.
  \]
  When $k$ is large enough compared to~$d$, this is $O(n^{k(1-\eps)}) \polylog(M)$ for some $\eps=\eps(\eps') > 0$.
  This proves the claim.
\end{proof}

And we have the following corollary to Theorem~\ref{theorem: OV main}.

\begin{repcorollary}{cor: maxsat}
  \corollarymaxsat
\end{repcorollary}    
    
The corollary follows from Theorem~\ref{theorem: OV main} with a reduction from Max-$d$-SAT to $k$-Clique on $d$-hypergraphs that was already sketched in e.g.~\cite{Wil05}.
We formally state and prove this reduction next. 

\begin{lemma} \label{lem:cliquemaxdsat}
  Let $d,k\in\N$ with $1\le d\le k$.
  There is an $O^*(2^{dn/k})$ time reduction from Max-$d$-SAT to Min-Weight-$k$-Clique on $d$-hy\-per\-graphs
  that maps $d$-CNF formulas with $n$ variables and $m$ clauses
  to $d$-hypergraphs with at most $k2^{n/k}$ vertices and integer edge weights between $-2m$ and $2m$.
\end{lemma}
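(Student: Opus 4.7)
The plan is to use a standard partition-and-clique construction, adapted to hypergraphs and to the weighted variant. First, partition the $n$ variables of the input $d$-CNF formula $\varphi$ into $k$ groups $V_1,\dots,V_k$ of at most $\lceil n/k\rceil$ variables each. For every group $V_i$ and every partial assignment $\alpha\colon V_i\to\{0,1\}$, introduce a vertex $v_{i,\alpha}$; this yields $|V(G)|\le k\cdot 2^{n/k}$ vertices (up to a negligible ceiling), arranged into $k$ parts so that any $k$-clique picks exactly one partial assignment per group, jointly describing a complete assignment to~$\varphi$.

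The key observation is that since each clause $c$ of $\varphi$ has at most $d$ literals, its variables lie in some set $S_c\subseteq\{1,\dots,k\}$ of at most $d$ groups, and whether $c$ is satisfied depends only on the partial assignments chosen in the groups indexed by $S_c$. I therefore set the weight of every hyperedge $e=\{v_{i_1,\alpha_{i_1}},\dots,v_{i_{d'},\alpha_{i_{d'}}}\}$ with $d'\le d$ and pairwise distinct $i_1,\dots,i_{d'}$ to
\[
w(e) \;=\; \bigl|\{c : S_c=\{i_1,\dots,i_{d'}\} \text{ and $c$ is unsatisfied by } \alpha_{i_1}\cup\dots\cup\alpha_{i_{d'}}\}\bigr| \,,
\]
which takes values in $\{0,1,\dots,m\}\subseteq\{-2m,\dots,2m\}$, as required.

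For a $k$-clique $\{v_{1,\alpha_1},\dots,v_{k,\alpha_k}\}$ in $G$, each clause $c$ is charged to a unique edge of the clique, namely the one indexed by the parts in $S_c$, and contributes $1$ to the clique's weight precisely when the joint assignment $\alpha=\alpha_1\cup\dots\cup\alpha_k$ leaves $c$ unsatisfied. Hence the total weight of the $k$-clique equals the number of clauses of $\varphi$ left unsatisfied by $\alpha$, and minimizing the weight is equivalent to maximizing the number of satisfied clauses, which is exactly Max-$d$-SAT. For the running time, I would iterate over each of the $m$ clauses $c$ and, for each, over all $(2^{\lceil n/k\rceil})^{|S_c|}\le 2^{dn/k}$ joint partial assignments on $S_c$, incrementing the weight of the associated edge; the total construction time is $O(m\cdot 2^{dn/k})=O^*(2^{dn/k})$. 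I do not anticipate any real obstacle: the only step that merits care is the verification that each clause is charged to exactly one hyperedge of any $k$-clique, which is immediate since within a clique the groups in $S_c$ select a unique sub-tuple of vertices.
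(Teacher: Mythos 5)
Your construction is essentially identical to the paper's: both partition the variables into $k$ blocks, introduce one vertex per (block, partial assignment) pair, and charge each clause to the unique hyperedge whose color set is exactly the set of blocks touched by the clause. The only difference is that you count unsatisfied clauses with positive weight (target $m-t$) while the paper counts satisfied clauses with weight $-1$ each (target $-t$); these are dual and both stay well inside the stated weight range, so the argument goes through.
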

\begin{proof}
Given an instance of Max-$d$-SAT consisting of a $d$-CNF formula $\varphi$ on variable set $V$ of size $n$ and $m$ clauses, and an integer~$t$ indicating the required number of satisfied clauses, partition $V$ into sets $V_1,\ldots,V_k$ where $|V_k| \leq n/k$.
The reduction computes from~$\varphi$ an instance~$H$ of Min-Weight-$k$-Clique.

We build a complete $k$-partite $d$-hypergraph $H$ with vertices $\bigcup_{i=1}^k P_i$ where $P_i$ contains a vertex $p^{i}_x$ for every vector $x \in \{0,1\}^{V_i}$.
Create an edge $f=\{x_1,\ldots,x_\ell\}$ for every set $\{i_1,\ldots,i_\ell\} \in \binom{[k]}{\leq d}$ and tuple $(x_1,\ldots,x_\ell) \in P_{i_1} \times \ldots \times P_{i_\ell}$.
Define the weight of~$f$ to be $-1$ times the number of clauses that
\begin{enumerate}
	\item are contained in $\bigcup_{j=1}^{\ell}V_{i_j}$,
	\item contain a variable in $V_{i_j}$ for every $j=1,\ldots,\ell$, and
	\item are satisfied by the partial assignment obtained by setting the variables in $V_{i_j}$ according to $x_{j}$.
\end{enumerate}
The target instance is $H$, and the goal is to decide whether the minimum weight of any $k$-clique is at most $-t$. As the number of edges of $H$ is at most $(k2^{n/k})^d$ and we compute their weights in polynomial time, the running time of this reduction is bounded by $O^*(2^{dn/k})$. 

To see that this is a correct reduction, let $X \subseteq V(H)$ be a $k$-clique of $H$ of weight at most $-t$. We see that $X$ contains at most one vertex from every $P_{i}$, and as $|X|=k$ we have that $X$ intersects in exactly one vertex with every $P_i$. By definition of the sets $P_i$, the set $X$ thus corresponds to an assignment $x$ of the variables of $\varphi$. We claim that the weight of $X$ is $-1$ times the number of clauses satisfied by $x$ and therefore $\varphi$ has an assignment satisfying at least $t$ clauses. To see the claim, let $C$ be a clause of~$\varphi$ and $\{i : C \text{ contains variables from } V_i \} = \{i_1,\ldots,i_\ell\}$ be the set of variable groups intersecting~$C$. Let $x_{i_1},\ldots,x_{i_\ell}$ be the corresponding partial assignments that $x$ induces to $V_{i_1},\ldots,V_{i_\ell}$. We see that~$C$ contributes~$-1$ to the weight of the hyperedge $(x_1,\ldots,x_\ell)$ and $0$ to all other edges.

For the reverse direction, suppose $x$ is an assignment satisfying at least $t$ clauses of $\varphi$ and let $x_i$ be the projecting of $x$ onto $V_i$. Then by the above claim the weight of $X := \{p^1_{x_1},\ldots,p^k_{x_k}\}$ is $-t$.
\end{proof}

\begin{proof}[Proof of Corollary~\ref{cor: maxsat}]
  For a sufficiently large constant~$k$, we combine the reduction in Lemma~\ref{lem:cliquemaxdsat} with an $O(n^{(1-\eps)k})\polylog M$ time algorithm for Min-Weight-$k$-Clique in $d$-hypergraphs.
  This yields an $O^*(2^{(1-\eps)n})$ time algorithm for Max-$d$-SAT.
  Together with Theorem~\ref{theorem: OV main} this proves the claim.
\end{proof}

  % !TEX root = ABDN07-making-SETH-great-again.tex

\section{Reducing Sparse Satisfiability Problems to CNF-SAT}

A dream theorem would be to reduce the sparse circuit satisfiability problem
over the De Morgan basis to the CNF-SAT problem in such a way that a violation
of SETH implies that faster algorithms for sparse circuit satisfiability exist.
We demonstrate how to do this in Section~\ref{sec: formulas} for sparse formulas as a warm-up, reproving a result of~\cite{DBLP:conf/ciac/DantsinW13}.
In Section~\ref{sec: TC0}, we prove Theorem~\ref{theorem: TC0}, the extension of this result to sparse \TC0-circuits.
We also prove that Theorem~\ref{theorem: TC1}) for sparse \TC1-circuits and CNF-SAT in Section~\ref{sec: TC1}.

\subsection{Sparse Formulas}\label{sec: formulas}
\emph{Formulas} are circuits that become a tree when the input gates are
removed.
We consider formulas over the De Morgan basis
$\set{\gate{NEG},\gate{AND},\gate{OR}}$; in particular, we assume the
corresponding trees to be binary, that is, the fan-in of every gate is at most
two.
We use the following simple decomposition lemma for binary trees:
\begin{lemma}%
[Impagliazzo, Meka, and Zuckerman {\cite[Claim 4.4]{DBLP:conf/soda/ImpagliazzoMP12}}]
  \label{lemma: formula decomposition}
  Let $T$ be a binary tree with $m$ nodes and let $\ell\in\N$ with
  $\ell\le m$.
  There exists a set $A\subseteq V(T)$ with $\abs{A}\le 6m/\ell$ such that every
  connected component~$C\subseteq V(T)$ of $T-A$ has at most $\ell$ nodes and
  at most three vertices of~$A$ are adjacent to vertices of~$C$.
  Moreover, such a set~$A$ can be computed in polynomial time.
\end{lemma}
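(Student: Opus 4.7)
The plan is to construct $A$ via a bottom-up greedy sweep. Processing $T$ in post-order, maintain for each vertex $v$ a \emph{fragment} $F(v)$ --- the connected component of $v$ in the subtree of $v$ after removing the $A$-vertices added so far --- together with its size $|F(v)|$ and its \emph{boundary count} $\alpha(v)$, namely the number of $A$-vertices adjacent to $F(v)$ from within the subtree rooted at~$v$. On processing~$v$, tentatively merge $v$ with the fragments of its non-$A$ children; if this merge would push the size past~$\ell$ or the boundary past~$2$, instead add $v$ to $A$ and finalize each non-$A$ child fragment as a component of $T - A$.

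Every finalized component then satisfies the conclusion of the lemma: its size is at most~$\ell$, since it equals some $F(c)$ that was accepted at $c$'s processing, and its $A$-boundary is at most $\alpha(c) + 1 \le 3$, with the $+1$ accounting for the newly added~$v$ immediately above. The top-most fragment rooted at the root of~$T$ is never cut and so becomes a single component of size at most~$\ell$ and boundary at most~$2$.

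To bound $|A| \le 6m/\ell$, classify each addition to $A$ as a \emph{size-trigger} or a \emph{boundary-trigger}. Each size-trigger absorbs at least $\ell/2$ distinct vertices of~$T$ into just-finalized pieces, giving at most $2m/\ell$ size-triggers. For boundary-triggers, each $v \in A$ contributes at most one \emph{boundary unit} --- the $1$ it adds to its parent's fragment boundary, if that parent is not itself in~$A$ --- and each boundary-trigger absorbs at least $3$ such units, since its tentative boundary count is $\ge 3$. Thus there are at most $|A|/3$ boundary-triggers, and solving $|A| \le 2m/\ell + |A|/3$ yields $|A| \le 3m/\ell \le 6m/\ell$. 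The main technical subtlety is tracking how boundary units propagate upward through merges before being absorbed; the cleanest way to make this precise is via a potential function equal to the total boundary count across all currently active fragments, and arguing that this potential drops by at least $3$ at every boundary-trigger. Polynomial running time is immediate from the linearity of the post-order sweep.
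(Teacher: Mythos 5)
The paper does not actually prove Lemma~\ref{lemma: formula decomposition}; it imports it as a black box from \cite[Claim 4.4]{DBLP:conf/soda/ImpagliazzoMP12}. So there is no internal proof here to compare against, and I can only judge your argument on its own terms.

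Your bottom-up greedy sweep is correct, and the accounting goes through once two points are made completely explicit. First, the \emph{tentative boundary} of the candidate merged fragment at $v$ must be $\sum_{c_i\notin A}\alpha(c_i)+\abs{\set{c_j\in A : c_j \text{ child of } v}}$ rather than just the sum of the $\alpha(c_i)$: if $v$ has one non-$A$ child $c$ with $\alpha(c)=2$ and one $A$-child, failing to count the $A$-child would accept the merge with $\alpha(v)=3$, and a later finalization above $v$ would then produce a component with four $A$-neighbors. Your phrase ``merge $v$ with the fragments of its non-$A$ children'' reads as if only non-$A$ children matter; the definition of $\alpha(v)$ you give does include $A$-children, but the trigger test should be stated against that merged $\alpha$, not against the merge of the non-$A$ pieces alone. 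Second, for the finalized components' boundary bound you rely on the fact that the only $A$-vertex adjacent to $F(c_i)$ from outside the subtree of $c_i$ is its parent $v$ (the vertex just added), so the boundary is exactly $\alpha(c_i)+1\le 3$; this is true because $F(c_i)$ lives entirely inside the subtree of $c_i$, and is worth saying. Your counting of triggers is sound: each size-trigger finalizes pieces of total size at least $\ell$ (slightly stronger than your $\ell/2$), and each $A$-vertex has a unique parent and hence contributes at most one upward boundary unit, each of which is ``used'' at at most one trigger, so the $3b\le\abs{A}$ inequality holds. Solving $\abs{A}\le m/\ell+\abs{A}/3$ in fact yields $\abs{A}\le\tfrac{3}{2}\,m/\ell$, well within the stated $6m/\ell$. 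The polynomial running time is immediate. So the proposal is a valid proof of the cited claim, with the one caveat that the trigger condition on the boundary must be spelled out to include $A$-children.
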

Using this lemma, the satisfiability of sparse Boolean
formulas reduces to the satisfiability of $k$-CNF formulas with only a small
overhead in the running time.

\begin{algor}{B}{Transform sparse formula to $k$-CNF}{Given a Boolean
  formula~$F$ and an positive integer~$k\in\N$, this algorithm computes
  an equivalent $k$-CNF formula~$F'$.}
  \item[B1] \emph{(Compute decomposition)} Let $A\subseteq V(F)$ be the set
    guaranteed by Lemma~\ref{lemma: formula decomposition} where $\ell=k/4$ and
    $T=F-I(F)$ is the tree obtained from~$F$ by removing its input gates.
  \item[B2] \emph{(Create variables)}
    Let $x_1,\dots,x_n$ be the input variables of~$F$; for each $a\in A$, create
    a variable~$y_a$.
    Also create a variable~$y_o$ where~$o\in V(F)$ is the output gate of~$F$. 
  \item[B3] \emph{(Compute $k$-CNFs for small subcircuits)}
    For each $v\in A\cup\set{o}$, do the following:
    \begin{description}
      \item [a] Let $F_{v,A}$ be the subcircuit of~$F$ induced by the
        set~$R_C(A,v)$ (see (\ref{eq:defRCAv}) in the preliminaries) and interpret the gates~$a\in A$ as input
        variables~$y_a$.
      \item [b] Since $F_{v,A}$ depends on at most $2\ell$ variables, we can compute
        a $k$-CNF formula $F'_v$ with at most $2\ell+1\le k$ variables that
        expresses the constraint ``$y_v = F_{v,A}(x,y)$''.
    \end{description}
  \item[B4] \emph{(Output)}
    Let $F'= y_o\wedge \bigwedge_{v\in A} F'_v$, and output~$F'$.
\end{algor}

We prove the correctness and properties of this algorithm.
\begin{lemma}
  \label{lemma: formula reduction}
  Let $c,\eps>0$.
  There exists a $k\in\N$ such that algorithm B is a polynomial-time
  many-one reduction for the satisfiability problem that maps formulas with $n$
  variables and at most $cn$ gates to a $k$-CNF formula with at most
  $(1+\eps)\cdot n$ variables.
\end{lemma}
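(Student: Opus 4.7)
The plan is to pick $k$ as a constant depending only on $c$ and $\eps$ (concretely, $k = \lceil 48 c / \eps \rceil$ suffices) and then verify in turn that Algorithm B (i) runs in polynomial time, (ii) outputs a $k$-CNF, (iii) uses at most $(1+\eps)n$ variables, and (iv) is satisfiability-preserving. The first three are bookkeeping; the last is the main content.

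For the variable count, applying Lemma~\ref{lemma: formula decomposition} with $\ell = k/4$ to the tree $T = F - I(F)$ (which has at most $cn$ nodes) produces a set $A$ with $|A| \le 6|V(T)|/\ell \le 24 c n / k \le \eps n / 2$. The formula $F'$ uses the $n$ original input variables together with one new variable per element of $A \cup \set{o}$, for a total of at most $n + \eps n / 2 + 1 \le (1+\eps) n$ once $n \ge 2/\eps$; the finitely many smaller instances can be handled by brute force and produce a trivial $k$-CNF. The algorithm clearly runs in polynomial time, since the decomposition is computable in polynomial time, $|A| = O(n)$, and each $F'_v$ has size depending only on $k$.

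To verify the $k$-CNF structure, observe that each subcircuit $F_{v,A}$ consists of a single connected component $C$ of $T - A$ together with $v$ and the $A$-neighbors of $C$. Since $|C| \le \ell$ and each gate has fan-in $\le 2$, the subcircuit depends on at most $2\ell$ variables drawn from $I(F) \cup A$. The constraint $y_v = F_{v,A}(x,y)$ therefore involves at most $2\ell + 1 \le k$ variables and can be written as a $k$-CNF of $O(1)$ clauses of width $\le k$. The conjunction $F' = y_o \wedge \bigwedge_{v \in A} F'_v$ is thus a $k$-CNF.

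The main substantive step is showing $F \in \mathrm{SAT} \iff F' \in \mathrm{SAT}$. Let $F_v(x)$ denote the value computed at gate $v$ in $F$ on input $x$. For the forward direction, given $x$ with $F(x) = 1$, set $y_v := F_v(x)$ for each $v \in A \cup \set{o}$; since $F_{v,A}$ is by construction the piece of $F$ feeding into $v$ up to the $A$-cut, we have $F_{v,A}(x, (F_a(x))_{a \in A}) = F_v(x)$, so every clause of $F'$ is satisfied and $y_o = 1$. For the converse, given $(x,y)$ satisfying $F'$, process $A \cup \set{o}$ in a topological order inherited from $F$: any $a \in A$ appearing among the inputs of $F_{v,A}$ strictly precedes $v$ in this order, so by induction $y_a = F_a(x)$, and then the constraint $y_v = F_{v,A}(x,y)$ forces $y_v = F_v(x)$; in particular $y_o = F(x) = 1$. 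The delicate point, and the place where I would focus attention, is the semantic identity $F_{v,A}(x, (F_a(x))_{a \in A}) = F_v(x)$: this expresses that cutting at $A$ and then re-evaluating returns the original gate value, and it is here that the definition of $R_C(A, v)$ must be used carefully to confirm that no relevant dependency has been dropped.
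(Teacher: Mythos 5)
Your proposal is correct and follows essentially the same route as the paper: the same Algorithm~B, the same application of the Impagliazzo--Meka--Zuckerman decomposition (Lemma~\ref{lemma: formula decomposition}) with $\ell = k/4$, the same variable count, and the same width analysis. You actually supply more detail than the paper does on the equisatisfiability direction (the topological induction), which the paper leaves implicit; the paper also contains a small typo in its choice of $k$ (it writes $k=c/(24\eps)$ where $k=24c/\eps$ is intended), and your slightly more generous $k=\lceil 48c/\eps\rceil$ sidesteps similar bookkeeping issues, so the only differences are constant factors that do not affect the statement.
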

\begin{proof}
  We set $k=c/(24\eps)$.
  Let $F$ be the input formula with~$n$ variables~$x_1,\dots,x_n$ and $m\le cn$ gates.
  We claim that $F'$ has a satisfying assignment if and only if~$F$ does,
  and~$F'$ is a $k$-CNF formula with at most $(1+\eps)n$ variables.

  Let $T$ be the tree obtained when removing the input gates, and let
  $A\subseteq V(T)$ be the vertex set guaranteed by Lemma~\ref{lemma: formula
  decomposition} for $\ell=k/4$.
  Since $m\le cn$, we have $\abs{A}\le 24 cn/k \le \eps n$, so indeed~$F'$ has
  at most $(1+\eps)n$ variables.
  By Lemma~\ref{lemma: formula decomposition}, $F_{v,A}$ contains at most~$\ell$
  non-input gates and, since the fan-in is at most two, $F_{v,A}$ contains most
  $2\ell$ gates overall.
  So the constraint ``$y_v = F_{v,A}(x,y)$'' indeed depends on at most $2\ell+1$
  variables and can be expressed trivially by a $(2\ell+1)$-CNF formula.
  It is clear that~$F'$ can be computed in polynomial time.
  Moreover, $F(x)=1$ holds if and only if there is a setting for~$y$ such that
  $F'(x,y)=1$ holds, so $F$ and $F'$ are equisatisfiable.
  We obtain the claimed reduction.
\end{proof}

Using this lemma, we prove that SETH is implied by an analogue of SETH for sparse formulas.
\begin{theorem}[Dantsin and Wolpert \cite{DBLP:conf/ciac/DantsinW13}]\label{theorem: sparse formula SETH}
  If SETH is false, then there is an $\eps>0$ such that, for all $c$, the
  satisfiability of Boolean formulas of size at most $cn$ can be solved in time
  $O\paren[\bi]{(2-\eps)^n}$.
\end{theorem}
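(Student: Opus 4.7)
The plan is to reduce satisfiability of sparse formulas to $k$-SAT via Algorithm B (Lemma \ref{lemma: formula reduction}) and then invoke the assumed improved $k$-SAT algorithm coming from the failure of SETH. The key structural point is that Lemma \ref{lemma: formula reduction} absorbs the dependence on the formula-size parameter $c$ entirely into the CNF width $k$, while keeping the number of variables of the output $k$-CNF at only $(1+\eta)n$ for any $\eta>0$ we fix in advance.

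First, I would unpack the hypothesis. SETH asserts $s_\infty=1$, so its failure means $s_\infty<1$. Since $s_d$ is non-decreasing in $d$ (every $d$-CNF is also a $(d+1)$-CNF), there is a constant $\mu>0$ with $s_d\le 1-\mu$ for every $d\in\N$. By definition of $s_d$, for every $d$ and every $\gamma>0$ there is then an algorithm solving $d$-SAT on $m$ variables in time $O(2^{(1-\mu+\gamma)m})$.

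Next, fix parameters depending only on $\mu$: choose $\eta>0$ small enough that $(1-\mu/2)(1+\eta)\le 1-\eps$ holds for some $\eps>0$ (any $\eta<\mu/(2-\mu)$ works). Given a constant $c$ and an input formula $F$ on $n$ variables with at most $cn$ gates, apply Lemma \ref{lemma: formula reduction} with parameters $c$ and $\eta$ to obtain in polynomial time an equisatisfiable $k$-CNF formula $F'$ on at most $(1+\eta)n$ variables, where $k=\lceil c/(24\eta)\rceil$ is a fixed integer once $c$ is fixed. Run the assumed $k$-SAT algorithm with $\gamma=\mu/2$ on $F'$; its running time is $O(2^{(1-\mu/2)(1+\eta)n})\le O(2^{(1-\eps)n})=O((2-\eps')^n)$ for some $\eps'>0$ depending only on $\mu$, which is exactly the conclusion.

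The main conceptual point, rather than a hard technical obstacle, is the separation of concerns delivered by Lemma \ref{lemma: formula reduction}: the cost of compiling the sparse formula into a CNF is paid in the clause width $k$ and not in the variable count, so the uniform SETH-violating savings $\mu$ survive the blow-up. The only quantitative care required is to pick $\eta$ strictly below $\mu/(2-\mu)$ so that $(1-\mu/2)(1+\eta)$ remains bounded away from $1$ by a constant independent of $c$.
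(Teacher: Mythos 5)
Your proof is correct and takes essentially the same approach as the paper: apply Lemma~\ref{lemma: formula reduction} to compile the sparse formula into a $k$-CNF on $(1+\eta)n$ variables and then run the improved $k$-SAT algorithm guaranteed by the failure of SETH. Your version is in fact slightly more careful about quantifier order, fixing $\eta$ as a function of the SETH savings~$\mu$ alone so that the constant $\eps$ in the conclusion is manifestly independent of~$c$; the paper's own proof is terser on this point but intends the same argument.
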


\begin{proof}
  Suppose that SETH is false.
  Then there is some $\delta>0$ such that $k$-CNF-SAT can be solved in time
  $O^*((2-\delta)^n)$ for all $k\in\N$.
  Let $c>0$.
  In order to solve Formula-SAT for $cn$-size formulas, we apply
  Lemma~\ref{lemma: formula reduction} to reduce to a $k$-CNF formula with
  $n'=(1+\alpha)n$ variables.
  We can solve this instance using the assumed algorithm in time
  $O((2-\delta)^{n'}) = O((2-\delta)^{(1+\alpha)n}) = O((2-\eps)^n)$ for some suitable
  $\eps,\alpha>0$.
\end{proof}

\subsection{Sparse TC0-circuits}\label{sec: TC0}
The goal of this section is to prove the following:

\begin{lemma}\label{lem:rewrite}
  There is a polynomial-time many-one reduction from TC-SAT to CNF-SAT that,
  given $\eps \in (0,1)$ and a depth-$d$ threshold circuit with at most $cn$ wires, with $c \ge 1$,
  produces a $k$-CNF formula $\varphi$ with at most $(1+\eps)n$ variables and width
  $k \leq \paren[\big]{2000(c/\eps)\log(2c / \eps)}^d$.
\end{lemma}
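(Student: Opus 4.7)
The plan is to generalize Algorithm~B from sparse formulas to sparse threshold circuits. Given a depth-$d$ threshold circuit $C$ with at most $cn$ wires, I aim to find a cut set $A \subseteq V(C)$ with $|A| \le \eps n$ such that, for every $v \in A \cup \{o\}$ (where $o$ is the output gate), the induced subcircuit $C_{v,A}$ depends on at most $k := \paren[\big]{2000(c/\eps)\log(2c/\eps)}^d$ variables of $A \cup I(C)$. I would then introduce a fresh Boolean variable $y_v$ for each $v \in A$ and define $\varphi := y_o \wedge \bigwedge_{v \in A \cup \{o\}} F_v$, where $F_v$ is a width-$k$ CNF expressing the constraint $y_v = C_{v,A}(x, y|_A)$. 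Such an $F_v$ exists because any Boolean function on at most $k$ variables admits a (possibly exponentially-sized) CNF of width $k$. As in the formula case, $C$ is satisfiable iff $\varphi$ is, and $\varphi$ has at most $n + |A| + 1 \le (1+\eps)n$ variables.

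The core technical step is constructing the cut $A$. The DAG structure and unbounded fan-in of threshold gates make Lemma~\ref{lemma: formula decomposition} inapplicable, so I would proceed layer by layer along the $d$ depth-layers of $C$. Let $f_0 := 2000(c/\eps)\log(2c/\eps)$. Inductively on $\ell = 1, \ldots, d$, having already committed $A_{<\ell}$, examine each depth-$\ell$ gate $v$ and, if its current dependency set (the set of input variables and previously cut gates that can reach $v$ via uncut gates) has size exceeding $f_0^\ell$, add $v$ to $A_\ell$. By the sparsity bound $\sum_v d^-(v) \le cn$, a counting argument limits how many gates can accumulate excessive dependency counts at any layer, and a geometric choice of the per-layer thresholds yields $\sum_\ell |A_\ell| \le \eps n$ and $f_0^d \le k$ simultaneously. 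By construction, every uncut gate $v$ has a dependency set of size at most $k$, so $C_{v,A}$ has at most $k$ inputs.

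The main obstacle is handling a cut threshold gate $v \in A$ whose fan-in already exceeds $k$: the raw constraint $y_v = \gate{TH}_\theta(\text{predecessors of }v)$ has width equal to the fan-in of $v$, which can be much larger than $k$. I plan to handle this by preprocessing such high fan-in threshold gates, replacing each of them by a balanced tree of $O(f_0)$-fan-in threshold/sum nodes that computes the count of ones among the inputs bit by bit; each internal sum node is encoded by $O(\log(\text{fan-in}))$ new counter variables with a local constraint of width $O(f_0 \log f_0)$. This is precisely where the $\log(2c/\eps)$ factor enters: turning a fan-in-$f_0$ partial count into a width-bounded constraint costs an extra $\log f_0 = O(\log(c/\eps))$ per layer, giving base $O((c/\eps)\log(c/\eps))$ and hence $k = O((c/\eps)\log(c/\eps))^d$ after $d$ layers. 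The accounting that remains is the delicate part: one must verify that the new counter variables, together with the cut variables $y_v$, still sum to at most $\eps n$ — this uses the fact that the total fan-in across high fan-in gates is bounded by $cn$, so the total number of counter variables is $O((cn/f_0)\log f_0)$, which remains $\le \eps n$ with the chosen $f_0$.
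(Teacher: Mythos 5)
Your high-level plan is essentially the paper's: identify the high fan-in threshold gates, replace each by a small counting circuit, introduce auxiliary variables for the intermediate values, and then cut so that every remaining subcircuit depends on at most $k$ variables. Your accounting for the number of auxiliary variables ($O((cn/f_0)\log f_0)$, forcing $f_0 = \Theta((c/\eps)\log(c/\eps))$) also matches the paper exactly. Moreover, the layer-by-layer threshold cut you describe is a detour: a gate at depth $\ell$ has dependency $\leq d^-(v)\cdot f_0^{\ell-1}$ on earlier cut gates and inputs, so the cut set you produce is precisely the set of gates with fan-in $> f_0$, which you could have named directly.

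There is, however, a genuine gap in your replacement construction. You propose a \emph{balanced tree of $O(f_0)$-fan-in sum nodes} all the way up, with each node ``encoded by $O(\log(\text{fan-in}))$ counter variables'' and ``local constraint of width $O(f_0\log f_0)$.'' These two claims cannot both hold at the upper levels of the tree: a sum node at tree-depth $p$ represents a count up to $f_0^p$, and the root must represent a count up to the full original fan-in (possibly $\Theta(n)$), so it needs $\Theta(\log n)$ bits. With $O(f_0)$-fan-in interior nodes whose inputs are $\Theta(\log n)$-bit counters, the local constraint has width $\Theta(f_0 \log n)$, which is not bounded by $k = f_0^d$ when $n$ is large. (If you instead truncate all counters to $O(\log f_0)$ bits, the count overflows at the top.) The paper's Algorithm C avoids this by using the $\beta$-sized block aggregation only once at the bottom: each block $B_i$ of $\beta$ original wires is summed by $C_{add}(B_i)$ into an $O(\log\beta)$-bit counter $b_i$, the $b_i$ are placed in $A$, and then \emph{everything above} them, including all internal gates of $C_{add}(b_1,\dots,b_\ell)$ and the $\gate{BINTH}_\theta$ comparator, is a fan-in-$2$ circuit that is entirely placed in $A$, so each such gate's constraint has width $O(1)$. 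The only constraints of width $\Theta(\beta^d)$ arise at the $b_i$-gates, and these are bounded by $\sum_{u\in B_i} \beta^{d_u} < \beta\cdot\beta^{d-1}$ via an induction on depth in the \emph{original} circuit — which is where the exponent $d$ comes from, not from compounding widths across tree levels. Adopting that two-tier structure (arity-$\beta$ block once at the bottom, arity-$2$ and fully cut above) closes the gap.
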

Our proof of Lemma~\ref{lem:rewrite} relies on a linear-size adder circuit as
provided by the following lemma.
\begin{lemma}[Adder circuit]\label{lem:adder}
  Let $b,\ell\in\N$.
  There is a circuit $C_{add}:\set{0,1}^{b \ell}\to\set{0,1}^{b+\lceil \log \ell \rceil}$ over
  $\set{\gate{NEG},\gate{AND},\gate{OR}}$ with at most $40b \ell$ gates and maximum
  fan-in $2$ such that $C_{add}$ computes the binary representation of the sum
  of $\ell$ given $b$-bit integers.
\end{lemma}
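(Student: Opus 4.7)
}
My plan is the classical construction: build $C_{add}$ as a near-balanced binary tree of ripple-carry adders over a full-adder gadget. First I would fix a standard full adder: the three-bit map $(a,b,c)\mapsto(s,c')$ with $s=a\oplus b\oplus c$ and $c'=(a\wedge b)\vee(a\wedge c)\vee(b\wedge c)$, implemented over $\set{\gate{NEG},\gate{AND},\gate{OR}}$ with fan-in $2$ using a fixed constant number of gates (at most $13$ suffices: five gates per XOR and three more for the majority, with some sharing). Chaining $b$ such full adders along the bit positions yields a ripple-carry adder that takes two $b$-bit inputs to a $(b{+}1)$-bit sum using at most $13b$ fan-in-$2$ gates.

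To sum the $\ell$ inputs, I would arrange them at the leaves of a near-balanced binary tree of depth $L:=\lceil\log \ell\rceil$ and combine pairs bottom-up. At level $i\ge 0$ (leaves at level $0$), each partial sum fits in $b+i$ bits, since the sum of at most $2^i$ numbers of $b$ bits is strictly less than $2^{b+i}$; and there are at most $\lceil \ell/2^i\rceil$ partial sums, so going from level $i$ to level $i{+}1$ uses at most $\lceil \ell/2^{i+1}\rceil$ adders, each costing at most $13(b+i)$ gates. Correctness is by induction on $i$: the ripple-carry adder is exact and the chosen bit widths suffice to represent every partial sum without overflow, so the root outputs the required $(b+L)$-bit binary representation of the total.

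Finally, I would bound the total gate count using the standard identities $\sum_{i\ge 0}2^{-(i+1)}=1$ and $\sum_{i\ge 0} i\cdot 2^{-(i+1)}=1$. The dominant term is
\[
\sum_{i=0}^{L-1} 13(b+i)\cdot \frac{\ell}{2^{i+1}} \;\le\; 13\ell(b+1),
\]
and the ceiling slack from replacing $\ell/2^{i+1}$ by $\lceil \ell/2^{i+1}\rceil$ contributes at most an additive $13\sum_{i=0}^{L-1}(b+i)\le 13(bL+L^2)$. Using $L\le \log\ell$, $b\ge 1$, and $\log^2\ell \le \ell$, the grand total is comfortably at most $40b\ell$.

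The construction is classical and I foresee no conceptual obstacle; the only routine step is the gate-count bookkeeping. The one minor nuisance is when $\ell$ is not a power of two, which I would handle by carrying any unpaired partial sum at each level up to the next level unchanged, which costs zero gates and only mildly affects the counts above.
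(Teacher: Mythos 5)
Your proof takes essentially the same route as the paper's: sum the $\ell$ inputs along a near-balanced binary tree of ripple-carry full adders, let the bit width grow by one per level, and bound the gate count via the geometric series $\sum_{i} i\,2^{-i}$. The paper uses the identical construction (with a $20$-gate-per-bit full adder rather than your $13$, and the same tree), and it is similarly loose about the non-power-of-two bookkeeping that you flag as the ``routine'' nuisance, so the two arguments coincide.
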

\begin{proof}
  The proof of this lemma is standard.
  It is well known that there is a circuit $C_{FA}$ (the full adder) that adds $b'$-bit numbers using $20b'$ gates and constant fan-in.
  Describing the computation of $C_{add}$ using parentheses, the circuit $C_{add}$ computes the sum $\sum_{i=1}^\ell b_i$ in a binary-tree-like way as 
  \[
    (((b_1+b_2)+(b_3+b_4))+((b_5+b_6)+(b_7+b_8))) + \ldots
  \] 
  Using $C_{FA}$ for every addition, the number of gates needed is at most 
  \[
    \sum_{i=1}^{\lceil \lg \ell \rceil } 20(b+i-1) \ell / 2^i \leq 20b \ell \sum_{i=1}^{\infty} i/2^i = 40b \ell \;. \qedhere
  \]
\end{proof}

Our proof also needs a circuit computing the threshold function for binary inputs.

\begin{lemma}[BINTH circuit]\label{lem:binth}
  Let $r,\theta\in\N$.
  There is a circuit $\gate{BINTH}_{\theta}:\set{0,1}^{r}\to\set{0,1}$ over
  $\set{\gate{NEG},\gate{AND},\gate{OR}}$ with at most $2r$ gates and maximum
  fan-in $2$ such that $\gate{BINTH}_{\theta}(x_0,\ldots,x_{r-1})$ computes whether $\sum_{i=0}^{r-1} x_i 2^{i}$ is at least $\theta$.
\end{lemma}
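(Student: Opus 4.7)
My plan is to reduce the comparison $x := \sum_{i=0}^{r-1} x_i 2^i \geq \theta$ to the computation of the high-order carry-out of a hard-wired binary addition. Setting $k = 2^r - \theta$, the trivial identity
\[ x \geq \theta \iff x + k \geq 2^r \]
tells me the answer is exactly the $r$-th bit of $x + k$, which in a ripple-carry addition equals the carry $c_r$ produced at position $r-1$. I handle the corner cases $\theta \leq 0$ and $\theta > 2^r-1$ by outputting the constants $1$ and $0$ respectively (zero gates); in the remaining range $1 \leq \theta \leq 2^r$ the constant $k$ satisfies $0 \leq k \leq 2^r - 1$ and has a clean $r$-bit binary expansion $k_{r-1}\cdots k_0$ that I hard-wire into the circuit.

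Next I would unfold the ripple-carry recursion $c_0 = 0$ and $c_{i+1} = \gate{MAJ}(x_i, c_i, k_i)$. Since $k_i$ is a constant, the majority collapses to
\[ c_{i+1} \;=\; \begin{cases} x_i \wedge c_i & \text{if } k_i = 0, \\ x_i \vee c_i & \text{if } k_i = 1, \end{cases} \]
so each position contributes at most one $\gate{AND}$ or $\gate{OR}$ gate of fan-in $2$. Moreover, because $c_0$ is the constant $0$, position $i = 0$ uses no gate at all ($c_1$ is either the constant $0$ or the wire $x_0$), so the total gate count is at most $r - 1 \leq 2r$, matching the required bound. The circuit's sole output is $c_r$, which is produced by the last gate of the chain, so no additional wiring is needed.

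The correctness reduces to a one-line induction on $i$ showing that $c_{i+1}$ equals the carry generated by the true integer addition of the $i{+}1$ low bits of $x$ and $k$; then $c_r$ is the $r$-th bit of the $(r{+}1)$-bit integer $x + k$ and thus equals the indicator $[x+k \geq 2^r] = [x \geq \theta]$. I do not foresee a substantive obstacle; the only things to be careful about are the corner cases in $\theta$ (isolated above) and using $c_0 = 0$ to ensure that position $0$ is simplified away rather than spending a gate on it, which is what keeps the total strictly below $2r$.
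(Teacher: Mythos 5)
Your argument is correct, and it takes a genuinely different route from the paper's. The paper constructs $\gate{BINTH}_\theta$ by a recursion on the most significant bit $t-1$ of $\theta$: with $t=\lceil\lg\theta\rceil$, it writes
\[
  \gate{BINTH}_{\theta}(x_0,\ldots,x_{r-1})
  =
  \Big(\bigvee_{i=t}^{r-1} x_i\Big)
  \vee
  \Big(x_{t-1}\wedge\gate{BINTH}_{\theta-2^{t-1}}(x_0,\ldots,x_{t-2})\Big),
\]
and the $2r$ bound follows from summing the per-level costs across the recursion, which telescopes to $r$ disjunction gates plus at most one \gate{AND} per level. You instead hard-wire $k = 2^r - \theta$ and compute the carry-out $c_r$ of a ripple-carry addition, exploiting that $\gate{MAJ}(x_i, c_i, k_i)$ specializes to $x_i\wedge c_i$ or $x_i\vee c_i$ when $k_i$ is constant. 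Your route is arguably cleaner: the gate count is immediate (at most one gate per bit position after position $0$, hence at most $r-1$), avoiding any telescoping, and it actually gives a constant of $1$ rather than $2$ in front of $r$. Both proofs yield a correct circuit over $\{\gate{NEG},\gate{AND},\gate{OR}\}$ of fan-in $2$ within the stated budget, and your treatment of the corner cases $\theta=0$ and $\theta\ge 2^r$ is sound (the paper leaves those implicit as well). One tiny nit: your ``remaining range'' should be stated as $1\le\theta\le 2^r-1$ (or, equivalently, adjust the second corner case to $\theta\ge 2^r$), since as written the intervals overlap at $\theta=2^r$; this does not affect correctness because the ripple-carry construction with $k=0$ also outputs the constant~$0$.
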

\begin{proof}
        The circuit $\gate{BINTH}_\theta$ is constructed by
        setting~$t=\lceil \lg \theta \rceil$ and converting the following
        expression to a circuit:\\
        $
        \gate{BINTH}_{\theta}(x_0,\ldots,x_{r-1})
        =
        $
        \begin{align*}
          \paren[\Big]{
          \bigvee^{r-1}_{i=t} x_i
          }
          \vee
          \paren[\Big]{
            x_{t-1} \wedge \gate{BINTH}_{\theta-2^{t-1}}(x_0,\ldots,x_{t-2})
            }
          \,.
          \mbox{}&
          \qedhere
        \end{align*}
\end{proof}

We now have all tools needed to prove Lemma~\ref{lem:rewrite}.

\begin{proof}[Proof of Lemma~\ref{lem:rewrite}.]
  Without loss of generality, threshold circuits only have input, $\gate{NEG}$,
  and $\gate{TH}_\theta$ gates, since
  $\gate{TH}_\theta$ can directly simulate $\gate{AND}$, $\gate{OR}$, and
  $\gate{MAJ}$ gates.
  The algorithm to transform threshold circuits into $k$-CNF formulas is
  implemented in Algorithm~C.
  Intuitively, it replaces threshold gates of large fan-in by a circuit of
  bounded fan-in in such a way that the circuit can be simulated by a~$k$-CNF
  formula without introducing too many new variables.

\begin{algor}{C}{Reduce sparse threshold circuit to $k$-CNF}{Given a threshold
    circuit~$C$ of depth~$d$ and a positive integer~$\beta\in\N$, this
    algorithm computes an equivalent $k$-CNF formula~$F$.}
  \item[C1] \emph{(Initialize gates to be replaced with variables)}
    Let $A=\set{o}$ where $o\in V(C)$ is the output gate of~$C$.
  \item[C2] \emph{(Replace large threshold gates by the circuit in Figure~\ref{fig:thgate})}
    For each $v\in V(C)$ of with fanin~$d^-_C(v) \ge \beta$:
    \begin{description}
      \item[a]
        Let $\theta\in\N$ such that $v$ is a $\gate{TH}_\theta$-gate.
      \item[b]
        Partition the children $N_C^-(v)$ of~$v$ into blocks~$B_1,\dots,B_\ell$
        of size at most~$\beta$ with $\ell\le \lceil d^-_C(v)/\beta \rceil$ and
        remove all wires leading into~$v$.
      \item[c] \emph{(Construct adder circuit for each block)}
        For each $i\in\set{1,\dots,\ell}$, create a circuit~$C_{add}(B_i)$ that
        uses the gates of~$B_i$ as input gates and has $\log\beta+1$ output
        gates~$b_i$ such that $b_i$ represents the number of~$1$s in $B_i$ in
        binary. Here, $C_{add}$ is the circuit from Lemma~\ref{lem:adder}.
      \item[d] \emph{(Simulate threshold gate by circuit of fan-in two)}
        Add circuit~$\gate{BINTH}_\theta(C_{add}(b_1,\dots,b_\ell))$ with
        inputs~$b_1,\dots,b_\ell$ and output gate~$v$, that is, the inputs $b_1,\dots,b_\ell$ 
        are fed into~$C_{add}$ (from Lemma~\ref{lem:adder}), whose outputs are fed 
        into~$\gate{BINTH}_\theta$ (from Lemma~\ref{lem:binth}).
        This concatenated circuit 
        takes the binary representation of~$\ell$
        integers~$(b_1,\dots,b_\ell)\in\set{0,1,\ldots,\beta}^\ell$ on~$b=\log\beta+1$ bits each and it
        outputs true if and only if the sum of the given integers is at
        least~$\theta$.
        The circuit~\[\gate{BINTH}_\theta(C_{add}(b_1,\dots,b_\ell))\] has at
        most~$40b\ell + 2(b + \lceil \log \ell \rceil ) \le 44 b \ell$ gates and fan-in at most~$2$.
        We add all of its
        gates, including the~$b_i$'s, to~$A$.
    \end{description}
  \item[C3] \emph{(Compute $k$-CNFs)}
    For all $v\in A$, add a new variable~$y_v$ and do the following:
    \begin{description}
      \item [a] Let $C_{v,A}$ be the subcircuit of~$C$ induced by the
        set~$R_C(A,v)$ (see (\ref{eq:defRCAv}) in the preliminaries) and interpret the gates~$a\in A$ as input
        variables~$y_a$.
      \item [b] We will show that $C_{v,A}$ depends on at most $\beta^d$
        variables, so we can compute a $k$-CNF formula $F_v$ with at most
        $k=\beta^d$ variables that expresses the constraint ``$y_v =
        C_{v,A}(x,y)$'', where $x_1,\dots,x_n$ are the input variables of~$C$.
    \end{description}

  \item[C4] \emph{(Output)} Let $F=y_o \wedge \bigwedge_{v\in A} F_v$, and output~$F$.
  \end{algor}

\begin{figure}
	\centering
	\includegraphics[scale=0.6]{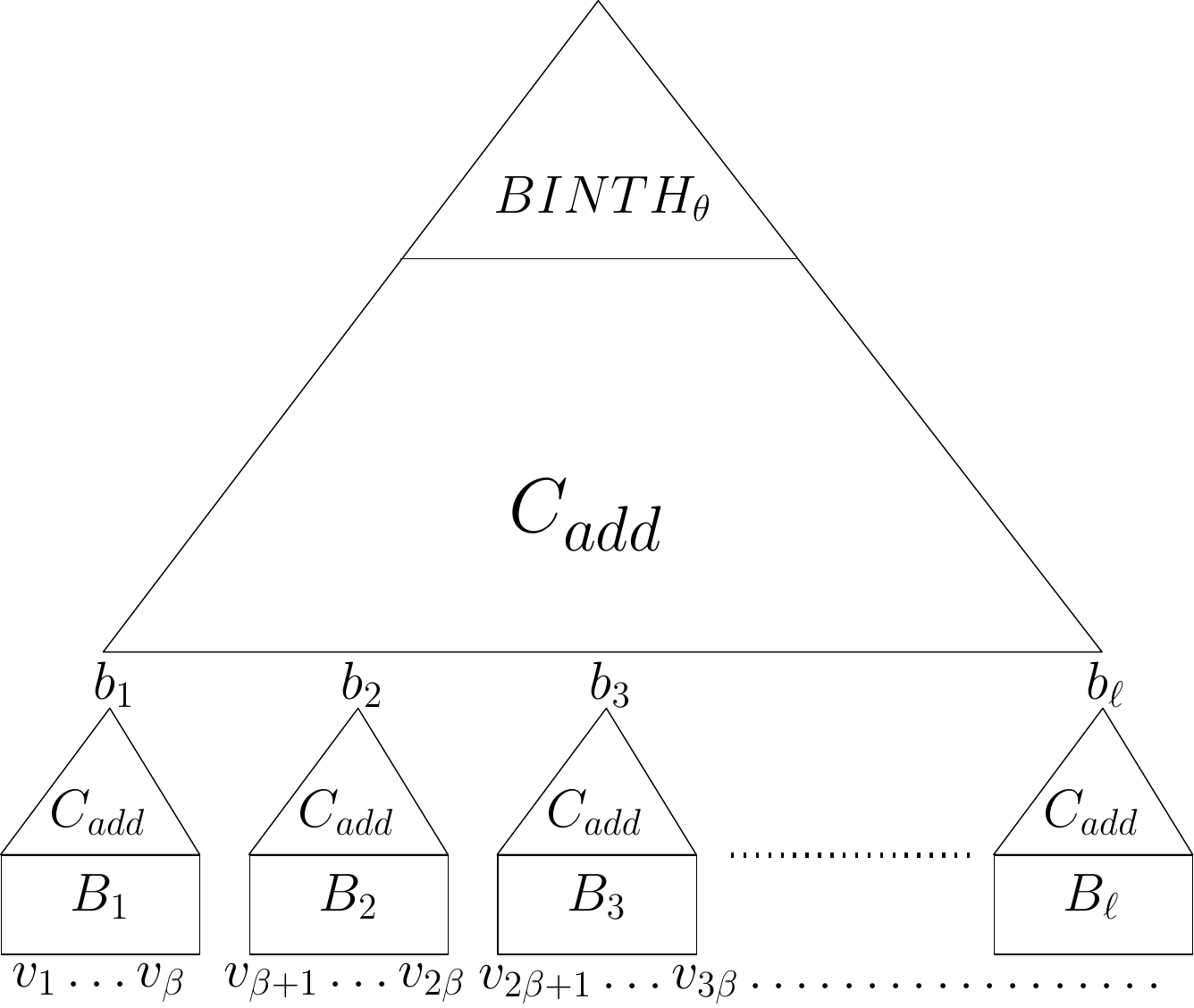}
  \caption{\label{fig:thgate}%
  Overview of replacement of $\gate{TH}_\theta$ gate.}
\end{figure}

  \paragraph*{Correctness of rewriting}
  To prove correctness, let us first observe that the transformation in C2
  does not change the functionality of~$C$ since we just explicitly simulate
  threshold gates with large fan-in by constructing a small Boolean circuit over
  the De Morgan basis in a straightforward way.
  We do this transformation in a block-wise fashion in order to save the number of
  additional variables we add in step C3.
  So let~$C$ be the circuit after its transformation in C2 and consider step C3.
  Let $x$ be a setting for the $n$ input gates~$I(C)$.
  We claim that $C(x)=1$ holds if and only if there is a setting for the
  $y$-variables such that $F(x,y)=1$.
  Indeed, if $C(x)=1$, we set~$y$ such that $y_a=C_a(x)$ holds for all $a\in A$.
  This setting for~$y$ then satisfies the constraint~``$y_v=C_{v,A}(x,y)$'' and thus
  $F_v(x,y)=1$.
  Moreover, $y_o=C(x)=1$ holds as well, and so the formula~$F$ constructed in C4 is
  satisfied by~$(x,y)$.
  For the reverse direction, let $(x,y)$ be such that $F(x,y)=1$.
  We claim that $C(x)=1$ holds as well.
  Indeed, by construction of~$F$, we know that~$y_o=1$ and $C_{v,A}(x)=y_v$
  holds for all $v\in A$.
  We can see by induction on the depth of~$v$ (starting at the bottom) that
  $C_{v}(x)=C_{v,A}(x,y)$ holds.
  In the base case, the only input gates of $C_{v,A}$ are the original $x$-input
  gates from~$I(C)$, and thus $y_v=C_{v,A}(x,y)=C_{v}(x)$ holds.
  In the inductive case, $C_{v,A}$ may depend on variables~$y_a$.
  However, for each such variable, we know by the induction hypothesis that
  $y_a=C_{a,A}(x,y)=C_{a}(x)$ holds, and thus we have $C_v(x)=C_{v,A}(x,y)$ by
  the definition of~$C_v$.
  In particular $C(x)=y_o=1$ holds and so~$x$ satisfies~$C$.
  This establishes the correctness of the reduction, except for proving that width $k = \beta^d$ is sufficient (in step C3b), which we will show next.

  \paragraph*{Bounding the width}
  In C3a, note that by the definition of $C_{v,A}$ and $R_{C}(A,v)$
  (see Section~\ref{sec:prel}), the value of gate $v$ on input $x$ is determined by the set of
  values of the gates $u \in (I(C) \cup A) \cap R_{C}(A,v)$.
  Therefore, in C3b we can ensure the value $y_v$ equals the value of gate $v$ on input $x$ by adding
  clauses on $y_v$ and the variables corresponding to the gates in $(I(C) \cup A) \cap R_{C}(A,v)$.
  We need to prove that the set $(I(C) \cup A) \cap R_{C}(A,v)$ has size less than
  $k=\beta^d$ so that this can be done in~$k$-CNF.
  For most gates~$v$ added to~$A$ this is clear because there are only two gates
  feeding into~$v$ after the replacement step C2d and both of these gates are
  in~$A$ as well.
  The only exceptions are the $b_1,\dots,b_\ell$-gates.
  Note that any $b_i$-gate $v$ is determined by the $B_i$-gates below it, so we can bound $|(I(C) \cup A) \cap R_{C}(A,v)| \le \sum_{u \in B_i}| (I(C) \cup A) \cap R_{C}(A,u)|$. Any gate $u \in B_i$ already existed in the original circuit. If $u$ has degree at least $\beta$ in the original circuit, then we ran step C2 on $u$ and thus $u$ belongs to $A$. Otherwise, $u$ has less than $\beta$ children, which already existed in the original circuit, and on which we recurse. It follows that if gate $u$ has depth $d_u$ in the original circuit, then it can be reached from less than $\beta^{d_u}$ nodes in $A$ without going through any other node in~$A$, i.e., $|(I(C) \cup A) \cap R_{C}(A,u)| < \beta^{d_u}$. Since $u$ is a descendant of $v$, we have $d_u < d$. In total, we obtain
  $|(I(C) \cup A) \cap R_{C}(A,v)| < |B_i| \cdot \beta^{d-1} < \beta^d$.
  It follows that the constraints in step C3b indeed consider at most
  $k=\beta^d$ variables and can thus be expressed in $k$-CNF.
  
  \paragraph*{Bounding the number of variables}
  It remains to set $\beta$ in such a way that $|A|$ is at most $\eps n$, which
  implies that~$F$ has at most $(1+\eps)n$ variables.
  Recall that the loop at C2 iterates over all gates $v$ with fan-in $d^-_C(v) \ge \beta$.
  For any such gate $v$, we add at most $50 b \ell$
  gates to~$A$ (see step C2d), where $b = \log \beta + 1$ and $\ell\le 1+d^-_C(v) / \beta \le 2 d^-_C(v) / \beta$ since $d^-_C(v) \ge \beta$. Hence, overall the number of gates ever added to~$A$ is at most
  \begin{align*}
    \sum_{\substack{v\in V(G)\\d^-_C(v)\ge\beta}}
    100 \, d^{-}_C(v)\cdot\frac{\log\beta + 1}{\beta}
    &\le 100\, cn \frac{\log\beta + 1}{\beta}
    \,.
  \end{align*}
  Thus we can set $\beta=\beta(c,\epsilon)\in\N$ as a function of~$c$ and
  $\epsilon$ such that the size of~$A$ is at most~$\epsilon n$.
  One can check that $\beta \le 2000 (c/\eps) \log(2c/\eps)$ suffices for $\eps \le 1 \le c$, where all logs are base 2. Thus, we get the claimed
  upper bound on~$k$.
  This concludes the proof of the lemma.
\end{proof}

Let us remark that in Lemma~\ref{lem:rewrite} we can also handle several other
gates, such as $\gate{MOD}_m$ gates, by replacing the $\gate{BINTH}_\theta$ circuit in the proof with a circuit that checks whether a given integer is a multiple of a given $m$. In fact, we can handle any symmetric gate $f(x_1,\ldots,x_d) = g(\sum_{i=1}^d x_i)$ where $g(s)$ can be expressed as a $o(d)$-size DeMorgan circuit when given $s \in \{0,\ldots,d\}$ in binary.

Using Lemma~\ref{lem:rewrite} for an $\eps$ with $(1+\eps)s_\infty < 1$ we obtain:

\begin{reptheorem}{theorem: TC0}
  \theoremTCzero
\end{reptheorem}

\subsection{Improving the Dependence in Depth to Sub-exponential}\label{sec: TC1}

In this section we improve the dependence of $k$ on $c,\eps$ and $d$ in Lemma~\ref{lem:rewrite}. As this dependence is exponential in $d$, it is natural to employ existing techniques for depth reduction of circuits, such as the following result due to Valiant~\cite{DBLP:conf/mfcs/Valiant77}. We use the following variant~\cite[Lemma 1.4]{Jukna:2012:BFC:2190632} (see also~\cite[Section 4.2]{TCS-033}).
\begin{lemma}\label{lem:valdep}
  In any directed graph with $m$ edges and depth $2^\delta$ (where $\delta$ is integral), it is possible to remove in polynomial time a set $R$ of $rm/\delta$ edges so that the depth of the resulting graph does not exceed $2^{\delta-r}$.
\end{lemma}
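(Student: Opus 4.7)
The plan is to use the classical Valiant labeling argument. First I would assign to each vertex $v$ the integer label $\ell(v)\in\{0,1,\ldots,2^{\delta}-1\}$ given by the length of the longest directed path ending at $v$; this is well-defined since the depth is at most $2^{\delta}$, and along any edge $(u,v)$ we have $\ell(u)<\ell(v)$. I would then write each label as a $\delta$-bit binary string and, for each edge $e=(u,v)$, define $h(e)\in\{0,1,\ldots,\delta-1\}$ to be the position of the most significant bit at which $\ell(u)$ and $\ell(v)$ differ. Because $\ell(u)<\ell(v)$, at bit $h(e)$ the label $\ell(u)$ has a $0$ and $\ell(v)$ has a $1$, while all higher bits agree.

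Next, I would partition the edge set as $E=E_0\cup E_1\cup\cdots\cup E_{\delta-1}$ according to the value of $h(\cdot)$. Since $\sum_i |E_i|=m$, by averaging there exist $r$ indices $S\subseteq\{0,\ldots,\delta-1\}$ with $|S|=r$ such that $\sum_{i\in S}|E_i|\le rm/\delta$. Let $R=\bigcup_{i\in S} E_i$ be the removed edge set; clearly $|R|\le rm/\delta$, which matches the required bound, and $R$ can be identified in polynomial time by computing all labels via topological sort and sorting the classes $E_i$.

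The main step is to show that after removing $R$ the depth drops to at most $2^{\delta-r}$. Let $T=\{0,\ldots,\delta-1\}\setminus S$ be the $\delta-r$ kept bit positions, and for each vertex $v$ let $\pi(v)$ be the integer obtained by keeping only the bits of $\ell(v)$ at positions in $T$. For any surviving edge $e=(u,v)$ we have $h(e)\in T$; the bits of $\ell(u)$ and $\ell(v)$ at positions strictly higher than $h(e)$ agree (by definition of $h(e)$), so they also agree at the positions in $T$ above $h(e)$, while at position $h(e)\in T$ itself $\pi(u)$ has a $0$ and $\pi(v)$ a $1$. Hence $\pi(u)<\pi(v)$ for every surviving edge, so $\pi$ strictly increases along any directed path in the remaining graph. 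Since $\pi$ takes values in $\{0,1,\ldots,2^{\delta-r}-1\}$, the length of any such path is at most $2^{\delta-r}$, as required.

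The only real subtlety, and what I would treat as the main obstacle, is the projection step: one has to argue that the highest \emph{differing} bit of $\ell(u),\ell(v)$ being a kept position forces $\pi(u)<\pi(v)$ in the kept coordinates, rather than merely $\pi(u)\neq\pi(v)$. This comes down to the observation above that all bits strictly above $h(e)$ agree in $\ell(u)$ and $\ell(v)$ and therefore also in $\pi(u)$ and $\pi(v)$, so the comparison is decided by the single bit at $h(e)$. Everything else (computing longest-path labels, binary expansion, picking the $r$ lightest classes) is routine polynomial-time bookkeeping.
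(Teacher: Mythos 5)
Your proof reconstructs the standard Erd\H{o}s--Graham--Szemer\'edi / Valiant argument, which is exactly what the paper relies on; the paper itself does not prove Lemma~\ref{lem:valdep} but cites it as Lemma~1.4 of Jukna's book, and your write-up matches that argument: label by longest-path length, bucket edges by the most significant differing bit of the endpoint labels, delete the $r$ lightest buckets, and observe that the projection $\pi$ onto the surviving bit positions still strictly increases along every remaining edge because the highest differing bit is one of the kept positions.

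One small bookkeeping slip worth flagging: if a DAG has depth $2^\delta$ (longest path has $2^\delta$ edges), then the labels $\ell(v)$ range over $\{0,\dots,2^\delta\}$, not $\{0,\dots,2^\delta-1\}$ as you wrote, so $\delta+1$ bit positions are in play rather than $\delta$. The usual fix is either to restrict attention to positions $0,\dots,\delta-1$ when forming the $E_i$ (still $\le m$ edges spread over $\delta$ classes, so the averaging bound is unchanged) and then note that any vertex whose label has bit $\delta$ set equals $2^\delta$ and hence has no out-edge, so it can only appear as the final vertex of a surviving path, adding at most one more edge and giving depth $\le 2^{\delta-r}$; or simply to adopt the convention that depth counts vertices rather than edges. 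This is a cosmetic patch and does not affect the substance of your argument, which is correct.
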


We use Lemma~\ref{lem:valdep} to improve the dependence of $k$ in Lemma~\ref{lem:rewrite}.

\begin{lemma}\label{lem:rewrite2}
  There is an algorithm that, given $\eps>0$ and a depth-$d$ threshold circuit with at most $cn$ wires, produces at most $2^{\eps n/2}$ $k$-CNF formulas $\varphi_1,\ldots,\varphi_z$ on $(1+\eps/2)n$ variables such that the circuit $C$ is satisfiable if and only if $\varphi_i$ is satisfiable for some $i \leq z$, and we have $k \leq (4000(c/\eps)\lg(4c / \eps))^{(2d)^{1-\eps/(2c)}}$.
\end{lemma}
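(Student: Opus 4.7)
The plan is to combine Valiant's depth-reduction (Lemma~\ref{lem:valdep}) with the rewriting of Lemma~\ref{lem:rewrite} via a ``guess and rewrite'' pattern. Let $C$ have $m\le cn$ wires and depth $d$, and set $\delta=\lceil\lg d\rceil$, so the DAG underlying $C$ has depth at most $2^\delta\le 2d$. Choose the integer $r=\lfloor\eps\delta/(2c)\rfloor$ and apply Lemma~\ref{lem:valdep} to obtain a set $R$ of at most $rm/\delta\le\eps n/2$ edges whose deletion shrinks the depth to at most $2^{\delta-r}$.

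For each assignment $y\in\set{0,1}^R$ (of which there are at most $2^{|R|}\le 2^{\eps n/2}$), form a circuit $C_y$ by deleting every edge $(u,v)\in R$ and feeding the constant $y_{(u,v)}$ into the corresponding input slot of $v$. This alone does not preserve satisfiability, because the guess $y_{(u,v)}$ need not agree with the value actually taken by $u$ on a given input. I enforce consistency by routing the original output of $C_y$, together with, for every $(u,v)\in R$, the literal $u$ (when $y_{(u,v)}=1$) or $\neg u$ (when $y_{(u,v)}=0$), into one large-fan-in \gate{AND} gate that serves as the new output. This costs only $O(1)$ extra depth and $O(|R|)$ extra wires. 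A short induction on gates in topological order shows that $C$ is satisfiable iff this modified $C_y$ is satisfiable for some $y$: the forward direction sets $y_{(u,v)}:=C_u(x)$ for a satisfying $x$, while the backward direction uses the consistency literals to force every gate in $C_y$ to take its true value from $C$.

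Finally, I apply Lemma~\ref{lem:rewrite} to each modified $C_y$ with parameter $\eps/2$. Since $C_y$ has at most $(c+O(\eps))n$ wires and depth at most $2^{\delta-r}+O(1)\le (2d)^{1-\eps/(2c)}$ (by our choice of $r$, after absorbing the constant-factor overheads into the base), Lemma~\ref{lem:rewrite} produces a $k$-CNF $\varphi_y$ on at most $(1+\eps/2)n$ variables with $k\le\paren[\big]{4000(c/\eps)\lg(4c/\eps)}^{(2d)^{1-\eps/(2c)}}$. The family $\{\varphi_y:y\in\set{0,1}^R\}$ then has at most $2^{\eps n/2}$ members, each on at most $(1+\eps/2)n$ variables, and $C$ is satisfiable iff at least one $\varphi_y$ is, as required.

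The main obstacle is arranging the consistency check so that it does not wipe out the depth savings from Valiant's lemma: a balanced binary tree of \gate{AND}s over the $|R|$ literals would cost $\Theta(\log|R|)$ extra depth, which could dominate $2^{\delta-r}$. Because Lemma~\ref{lem:rewrite} accepts unbounded-fan-in threshold gates, a single top-level \gate{AND} suffices, so the extra depth stays $O(1)$ and is absorbed into the final bound $(2d)^{1-\eps/(2c)}$.
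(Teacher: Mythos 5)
Your proposal takes essentially the same route as the paper's proof: apply Valiant's Lemma~\ref{lem:valdep} with $\delta=\lceil\lg d\rceil$ and $r\approx\eps\delta/(2c)$ to cut out at most $\eps n/2$ edges, branch over all $2^{|R|}$ guesses for the values carried on those edges, and feed the resulting low-depth circuit into Lemma~\ref{lem:rewrite} with parameter $\eps/2$. The one place you go beyond the paper's (quite terse) writeup is in spelling out \emph{how} to enforce consistency of the guesses: the paper merely says ``we require that $C_x(v)=a_v$,'' whereas you actually construct a top-level unbounded-fan-in \gate{AND} over the $|R|$ consistency literals, correctly noting that a balanced binary tree would reintroduce $\Theta(\log|R|)$ depth but a single threshold-basis \gate{AND} adds only one level. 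This does add $O(|R|)$ wires and one unit of depth, so the constants in your final bound on $k$ are marginally worse than the paper's $(4000(c/\eps)\lg(4c/\eps))^{(2d)^{1-\eps/(2c)}}$, but as you indicate this is absorbed by a constant-factor adjustment of the base in Lemma~\ref{lem:rewrite} and does not affect the theorem it feeds into.
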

\begin{proof}
  Let $C=(G=(V,E),\lambda)$ be the given circuit. Apply Lemma~\ref{lem:valdep}
  to $G$ with $\delta = \lceil \lg d \rceil$ and $r=\eps \delta/(2c)$. We obtain
  a set $R$ of size at most $\eps n/2$ such that $(V,E\setminus R)$ has depth at
  most $2^{\delta(1-\eps/(2c))}\leq (2d)^{1-\eps/(2c)}$. For every assignment $a
  \in \{0,1\}^R$ we create a circuit where we require that $C_x(v)=a_v$ for
  every $v \in R$, remove their outgoing wires, and update their incident gate
  accordingly (i.e. if $a_v=1$ and the incident gate is a $\gate{TH}_\theta$ it
  becomes a $\gate{TH}_{\theta-1}$ gate). The obtained circuit has depth at most $(2d)^{1-\eps/(2c)}$ and applying Lemma~\ref{lem:rewrite} gives the claimed result.	
\end{proof}

The improved dependence of $k$ in Lemma~\ref{lem:rewrite2} allows for the following consequence, yielding an exponential speedup for sparse threshold circuits of any depth $(\log n)^{1+o(1)}$.

\begin{reptheorem}{theorem: TC1}
  \theoremTCone
\end{reptheorem}
\begin{proof}
  Apply Lemma~\ref{lem:rewrite2} with $\eps/2$ to obtain $2^{\eps n/2}$
  $k$-CNF formulas
  on $(1+\eps/2)$ variables and
  with $k = \exp\paren*{O((\lg n)^{(1+\delta)(1-\eps/(2c))})}$.
  For sufficiently small $\delta = \delta(\eps,c) > 0$ we have $k = 2^{o(\log n)} = o(n/\log n)$ and thus the number of clauses $m$ is at most $(2n)^k = n^{o(n / \log n)} = 2^{o(n)}$. Therefore the assumed algorithm for CNF-SAT determines the satisfiability of the produced CNF formula in time $2^{\eps n/2} \cdot m^{O(1)} \cdot 2^{(1-\eps)(1+\eps/2)n}$, which is $O(2^{(1-\eps')n})$ for any $\eps' < \eps^2/2$.
\end{proof}

\paragraph*{Open Question.}
It is known that Lemma~\ref{lem:valdep} cannot be significantly improved
(see~\cite{SCHNITGER198289}). However, this does not stop us from using the
power of branching to get improvements. Specifically, when we try an assignments
of the truth-value on edges in $R$ in Lemma~\ref{lem:rewrite}, all gates that
are not connected to inputs are constant so these and their wires can already be
computed and removed from the circuit. A natural question is whether
this can be exploited more: Given a DAG $G=(V,E)$ of depth $2^{\delta}$ on $m$ edges and a real number $0 <
\alpha < 0$. For a set $R$ of edges, denote $l(R)$ as the length of the longest
path in $(V,E \setminus R)$ starting at a vertex $v$ which is a source in $G$. Give an upper bound on $\min\{ l(R): |R| \leq
\eps m \}$ better than $2^{(1-\eps)\delta}$ (which is implied by
Lemma~\ref{lem:valdep}).

  \paragraph*{Acknowledgments}
  Jesper Nederlof is supported by NWO Veni grant 639.021.438.
  We thank Ivan Mikhailin for a helpful comment on an earlier version of this manuscript.

	\bibliographystyle{plainurl}
	\bibliography{refs}

\begin{thebibliography}{10}

\bibitem{ABBK17}
Amir Abboud, Arturs Backurs, Karl Bringmann, and Marvin K{\"{u}}nnemann.
\newblock Fine-grained complexity of analyzing compressed data: Quantifying
  improvements over decompress-and-solve.
\newblock In Chris Umans, editor, {\em 58th {IEEE} Annual Symposium on
  Foundations of Computer Science, {FOCS} 2017, Berkeley, CA, USA, October
  15-17, 2017}, pages 192--203. {IEEE} Computer Society, 2017.
\newblock \href {http://dx.doi.org/10.1109/FOCS.2017.26}
  {\path{doi:10.1109/FOCS.2017.26}}.

\bibitem{ABV15a}
Amir Abboud, Arturs Backurs, and Virginia~Vassilevska Williams.
\newblock Tight hardness results for {LCS} and other sequence similarity
  measures.
\newblock In Venkatesan Guruswami, editor, {\em {IEEE} 56th Annual Symposium on
  Foundations of Computer Science, {FOCS} 2015, Berkeley, CA, USA, 17-20
  October, 2015}, pages 59--78. {IEEE} Computer Society, 2015.
\newblock \href {http://dx.doi.org/10.1109/FOCS.2015.14}
  {\path{doi:10.1109/FOCS.2015.14}}.

\bibitem{AGV15}
Amir Abboud, Fabrizio Grandoni, and Virginia~Vassilevska Williams.
\newblock Subcubic equivalences between graph centrality problems, {APSP} and
  diameter.
\newblock In Piotr Indyk, editor, {\em Proceedings of the Twenty-Sixth Annual
  {ACM-SIAM} Symposium on Discrete Algorithms, {SODA} 2015, San Diego, CA, USA,
  January 4-6, 2015}, pages 1681--1697. {SIAM}, 2015.
\newblock \href {http://dx.doi.org/10.1137/1.9781611973730.112}
  {\path{doi:10.1137/1.9781611973730.112}}.

\bibitem{AbboudLW14}
Amir Abboud, Kevin Lewi, and Ryan Williams.
\newblock Losing weight by gaining edges.
\newblock In Andreas~S. Schulz and Dorothea Wagner, editors, {\em Algorithms -
  {ESA} 2014 - 22th Annual European Symposium, Wroclaw, Poland, September 8-10,
  2014. Proceedings}, volume 8737 of {\em Lecture Notes in Computer Science},
  pages 1--12. Springer, 2014.
\newblock \href {http://dx.doi.org/10.1007/978-3-662-44777-2_1}
  {\path{doi:10.1007/978-3-662-44777-2_1}}.

\bibitem{ARW17}
Amir Abboud, Aviad Rubinstein, and R.~Ryan Williams.
\newblock Distributed {PCP} theorems for hardness of approximation in {P}.
\newblock In Chris Umans, editor, {\em 58th {IEEE} Annual Symposium on
  Foundations of Computer Science, {FOCS} 2017, Berkeley, CA, USA, October
  15-17, 2017}, pages 25--36. {IEEE} Computer Society, 2017.
\newblock \href {http://dx.doi.org/10.1109/FOCS.2017.12}
  {\path{doi:10.1109/FOCS.2017.12}}.

\bibitem{AV14}
Amir Abboud and Virginia~Vassilevska Williams.
\newblock Popular conjectures imply strong lower bounds for dynamic problems.
\newblock In {\em 55th {IEEE} Annual Symposium on Foundations of Computer
  Science, {FOCS} 2014, Philadelphia, PA, USA, October 18-21, 2014}, pages
  434--443. {IEEE} Computer Society, 2014.
\newblock \href {http://dx.doi.org/10.1109/FOCS.2014.53}
  {\path{doi:10.1109/FOCS.2014.53}}.

\bibitem{AVW14}
Amir Abboud, Virginia~Vassilevska Williams, and Oren Weimann.
\newblock Consequences of faster alignment of sequences.
\newblock In Javier Esparza, Pierre Fraigniaud, Thore Husfeldt, and Elias
  Koutsoupias, editors, {\em Automata, Languages, and Programming - 41st
  International Colloquium, {ICALP} 2014, Copenhagen, Denmark, July 8-11, 2014,
  Proceedings, Part {I}}, volume 8572 of {\em Lecture Notes in Computer
  Science}, pages 39--51. Springer, 2014.
\newblock \href {http://dx.doi.org/10.1007/978-3-662-43948-7_4}
  {\path{doi:10.1007/978-3-662-43948-7_4}}.

\bibitem{AVY15}
Amir Abboud, Virginia~Vassilevska Williams, and Huacheng Yu.
\newblock Matching triangles and basing hardness on an extremely popular
  conjecture.
\newblock In Rocco~A. Servedio and Ronitt Rubinfeld, editors, {\em Proceedings
  of the Forty-Seventh Annual {ACM} on Symposium on Theory of Computing, {STOC}
  2015, Portland, OR, USA, June 14-17, 2015}, pages 41--50. {ACM}, 2015.
\newblock \href {http://dx.doi.org/10.1145/2746539.2746594}
  {\path{doi:10.1145/2746539.2746594}}.

\bibitem{BDT16}
Arturs Backurs, Nishanth Dikkala, and Christos Tzamos.
\newblock Tight hardness results for maximum weight rectangles.
\newblock In Ioannis Chatzigiannakis, Michael Mitzenmacher, Yuval Rabani, and
  Davide Sangiorgi, editors, {\em 43rd International Colloquium on Automata,
  Languages, and Programming, {ICALP} 2016, July 11-15, 2016, Rome, Italy},
  volume~55 of {\em LIPIcs}, pages 81:1--81:13. Schloss Dagstuhl -
  Leibniz-Zentrum fuer Informatik, 2016.
\newblock \href {http://dx.doi.org/10.4230/LIPIcs.ICALP.2016.81}
  {\path{doi:10.4230/LIPIcs.ICALP.2016.81}}.

\bibitem{BI15}
Arturs Backurs and Piotr Indyk.
\newblock Edit distance cannot be computed in strongly subquadratic time
  (unless {SETH} is false).
\newblock In {\em Proc. of the 47th Annual {ACM} {SIGACT} Symposium on Theory
  of Computing ({STOC})}, pages 51--58, 2015.

\bibitem{BI16}
Arturs Backurs and Piotr Indyk.
\newblock Which regular expression patterns are hard to match?
\newblock In Irit Dinur, editor, {\em {IEEE} 57th Annual Symposium on
  Foundations of Computer Science, {FOCS} 2016, 9-11 October 2016, Hyatt
  Regency, New Brunswick, New Jersey, {USA}}, pages 457--466. {IEEE} Computer
  Society, 2016.
\newblock \href {http://dx.doi.org/10.1109/FOCS.2016.56}
  {\path{doi:10.1109/FOCS.2016.56}}.

\bibitem{BIL17}
Arturs Backurs, Piotr Indyk, and Ludwig Schmidt.
\newblock On the fine-grained complexity of empirical risk minimization: Kernel
  methods and neural networks.
\newblock In Isabelle Guyon, Ulrike von Luxburg, Samy Bengio, Hanna~M. Wallach,
  Rob Fergus, S.~V.~N. Vishwanathan, and Roman Garnett, editors, {\em Advances
  in Neural Information Processing Systems 30: Annual Conference on Neural
  Information Processing Systems 2017, 4-9 December 2017, Long Beach, CA,
  {USA}}, pages 4311--4321, 2017.

\bibitem{BT16}
Arturs Backurs and Christos Tzamos.
\newblock Improving viterbi is hard: Better runtimes imply faster clique
  algorithms.
\newblock In Doina Precup and Yee~Whye Teh, editors, {\em Proceedings of the
  34th International Conference on Machine Learning, {ICML} 2017, Sydney, NSW,
  Australia, 6-11 August 2017}, volume~70 of {\em Proceedings of Machine
  Learning Research}, pages 311--321. {PMLR}, 2017.
\newblock URL: \url{http://proceedings.mlr.press/v70/backurs17a.html}.

\bibitem{BRSV17a}
Marshall Ball, Alon Rosen, Manuel Sabin, and Prashant~Nalini Vasudevan.
\newblock Average-case fine-grained hardness.
\newblock In Hamed Hatami, Pierre McKenzie, and Valerie King, editors, {\em
  Proceedings of the 49th Annual {ACM} {SIGACT} Symposium on Theory of
  Computing, {STOC} 2017, Montreal, QC, Canada, June 19-23, 2017}, pages
  483--496. {ACM}, 2017.
\newblock \href {http://dx.doi.org/10.1145/3055399.3055466}
  {\path{doi:10.1145/3055399.3055466}}.

\bibitem{DBLP:series/faia/2009-185}
Armin Biere, Marijn Heule, Hans van Maaren, and Toby Walsh, editors.
\newblock {\em Handbook of Satisfiability}, volume 185 of {\em Frontiers in
  Artificial Intelligence and Applications}. {IOS} Press, 2009.

\bibitem{Bring14}
Karl Bringmann.
\newblock Why walking the dog takes time: Fr\'echet distance has no strongly
  subquadratic algorithms unless {SETH} fails.
\newblock In {\em 55th {IEEE} Annual Symposium on Foundations of Computer
  Science, {FOCS} 2014, Philadelphia, PA, USA, October 18-21, 2014}, pages
  661--670. {IEEE} Computer Society, 2014.
\newblock \href {http://dx.doi.org/10.1109/FOCS.2014.76}
  {\path{doi:10.1109/FOCS.2014.76}}.

\bibitem{BGMW17}
Karl Bringmann, Pawel Gawrychowski, Shay Mozes, and Oren Weimann.
\newblock Tree edit distance cannot be computed in strongly subcubic time
  (unless {APSP} can).
\newblock In Artur Czumaj, editor, {\em Proceedings of the Twenty-Ninth Annual
  {ACM-SIAM} Symposium on Discrete Algorithms, {SODA} 2018, New Orleans, LA,
  USA, January 7-10, 2018}, pages 1190--1206. {SIAM}, 2018.
\newblock \href {http://dx.doi.org/10.1137/1.9781611975031.77}
  {\path{doi:10.1137/1.9781611975031.77}}.

\bibitem{BGL16}
Karl Bringmann, Allan Gr{\o}nlund, and Kasper~Green Larsen.
\newblock A dichotomy for regular expression membership testing.
\newblock In Chris Umans, editor, {\em 58th {IEEE} Annual Symposium on
  Foundations of Computer Science, {FOCS} 2017, Berkeley, CA, USA, October
  15-17, 2017}, pages 307--318. {IEEE} Computer Society, 2017.
\newblock \href {http://dx.doi.org/10.1109/FOCS.2017.36}
  {\path{doi:10.1109/FOCS.2017.36}}.

\bibitem{BK15}
Karl Bringmann and Marvin K{\"{u}}nnemann.
\newblock Quadratic conditional lower bounds for string problems and dynamic
  time warping.
\newblock In Venkatesan Guruswami, editor, {\em {IEEE} 56th Annual Symposium on
  Foundations of Computer Science, {FOCS} 2015, Berkeley, CA, USA, 17-20
  October, 2015}, pages 79--97. {IEEE} Computer Society, 2015.
\newblock \href {http://dx.doi.org/10.1109/FOCS.2015.15}
  {\path{doi:10.1109/FOCS.2015.15}}.

\bibitem{CIP06}
Chris Calabro, Russell Impagliazzo, and Ramamohan Paturi.
\newblock A duality between clause width and clause density for {SAT}.
\newblock In {\em 21st Annual {IEEE} Conference on Computational Complexity
  {(CCC} 2006), 16-20 July 2006, Prague, Czech Republic}, pages 252--260.
  {IEEE} Computer Society, 2006.
\newblock \href {http://dx.doi.org/10.1109/CCC.2006.6}
  {\path{doi:10.1109/CCC.2006.6}}.

\bibitem{CW16}
Timothy~M. Chan and Ryan Williams.
\newblock Deterministic apsp, orthogonal vectors, and more: Quickly
  derandomizing razborov-smolensky.
\newblock In Robert Krauthgamer, editor, {\em Proceedings of the Twenty-Seventh
  Annual {ACM-SIAM} Symposium on Discrete Algorithms, {SODA} 2016, Arlington,
  VA, USA, January 10-12, 2016}, pages 1246--1255. {SIAM}, 2016.
\newblock \href {http://dx.doi.org/10.1137/1.9781611974331.ch87}
  {\path{doi:10.1137/1.9781611974331.ch87}}.

\bibitem{DBLP:conf/sat/ChenS15}
Ruiwen Chen and Rahul Santhanam.
\newblock Improved algorithms for sparse {MAX-SAT} and {MAX-k-CSP}.
\newblock In {\em Theory and Applications of Satisfiability Testing - {SAT}
  2015 - 18th International Conference, Austin, TX, USA, September 24-27, 2015,
  Proceedings}, pages 33--45, 2015.
\newblock \href {http://dx.doi.org/10.1007/978-3-319-24318-4_4}
  {\path{doi:10.1007/978-3-319-24318-4_4}}.

\bibitem{DBLP:journals/talg/CyganDLMNOPSW16}
Marek Cygan, Holger Dell, Daniel Lokshtanov, D{\'{a}}niel Marx, Jesper
  Nederlof, Yoshio Okamoto, Ramamohan Paturi, Saket Saurabh, and Magnus
  Wahlstr{\"{o}}m.
\newblock On problems as hard as {CNF-SAT}.
\newblock {\em {ACM} Trans. Algorithms}, 12(3):41:1--41:24, 2016.
\newblock \href {http://dx.doi.org/10.1145/2925416}
  {\path{doi:10.1145/2925416}}.

\bibitem{DBLP:books/sp/CyganFKLMPPS15}
Marek Cygan, Fedor~V. Fomin, Lukasz Kowalik, Daniel Lokshtanov, D{\'{a}}niel
  Marx, Marcin Pilipczuk, Michal Pilipczuk, and Saket Saurabh.
\newblock {\em Parameterized Algorithms}.
\newblock Springer, 2015.
\newblock \href {http://dx.doi.org/10.1007/978-3-319-21275-3}
  {\path{doi:10.1007/978-3-319-21275-3}}.

\bibitem{DBLP:conf/stoc/CyganKN13}
Marek Cygan, Stefan Kratsch, and Jesper Nederlof.
\newblock Fast {Hamiltonicity} checking via bases of perfect matchings.
\newblock In Dan Boneh, Tim Roughgarden, and Joan Feigenbaum, editors, {\em
  Symposium on Theory of Computing Conference, STOC'13, Palo Alto, CA, USA,
  June 1-4, 2013}, pages 301--310. {ACM}, 2013.
\newblock \href {http://dx.doi.org/10.1145/2488608.2488646}
  {\path{doi:10.1145/2488608.2488646}}.

\bibitem{DBLP:conf/focs/CyganNPPRW11}
Marek Cygan, Jesper Nederlof, Marcin Pilipczuk, Michal Pilipczuk, Johan M.~M.
  van Rooij, and Jakub~Onufry Wojtaszczyk.
\newblock Solving connectivity problems parameterized by treewidth in single
  exponential time.
\newblock In Rafail Ostrovsky, editor, {\em {IEEE} 52nd Annual Symposium on
  Foundations of Computer Science, {FOCS} 2011, Palm Springs, CA, USA, October
  22-25, 2011}, pages 150--159. {IEEE} Computer Society, 2011.
\newblock \href {http://dx.doi.org/10.1109/FOCS.2011.23}
  {\path{doi:10.1109/FOCS.2011.23}}.

\bibitem{Dahlgaard16}
S{\o}ren Dahlgaard.
\newblock On the hardness of partially dynamic graph problems and connections
  to diameter.
\newblock 55:48:1--48:14, 2016.
\newblock \href {http://dx.doi.org/10.4230/LIPIcs.ICALP.2016.48}
  {\path{doi:10.4230/LIPIcs.ICALP.2016.48}}.

\bibitem{DBLP:conf/ciac/DantsinW13}
Evgeny Dantsin and Alexander Wolpert.
\newblock Exponential complexity of satisfiability testing for linear-size
  boolean formulas.
\newblock In Paul~G. Spirakis and Maria~J. Serna, editors, {\em Algorithms and
  Complexity, 8th International Conference, {CIAC} 2013, Barcelona, Spain, May
  22-24, 2013. Proceedings}, volume 7878 of {\em Lecture Notes in Computer
  Science}, pages 110--121. Springer, 2013.
\newblock \href {http://dx.doi.org/10.1007/978-3-642-38233-8_10}
  {\path{doi:10.1007/978-3-642-38233-8_10}}.

\bibitem{EG04}
Friedrich Eisenbrand and Fabrizio Grandoni.
\newblock On the complexity of fixed parameter clique and dominating set.
\newblock {\em Theoretical Computer Science}, 326(1):57--67, 2004.

\bibitem{DBLP:journals/comgeo/GajentaanO95}
Anka Gajentaan and Mark~H. Overmars.
\newblock On a class of {$O(n^2)$} problems in computational geometry.
\newblock {\em Comput. Geom.}, 5:165--185, 1995.
\newblock \href {http://dx.doi.org/10.1016/0925-7721(95)00022-2}
  {\path{doi:10.1016/0925-7721(95)00022-2}}.

\bibitem{DBLP:conf/focs/Gall12}
Fran{\c{c}}ois~Le Gall.
\newblock Faster algorithms for rectangular matrix multiplication.
\newblock In {\em 53rd Annual {IEEE} Symposium on Foundations of Computer
  Science, {FOCS} 2012, New Brunswick, NJ, USA, October 20-23, 2012}, pages
  514--523. {IEEE} Computer Society, 2012.
\newblock \href {http://dx.doi.org/10.1109/FOCS.2012.80}
  {\path{doi:10.1109/FOCS.2012.80}}.

\bibitem{GIKW17}
Jiawei Gao, Russell Impagliazzo, Antonina Kolokolova, and Ryan Williams.
\newblock Completeness for first-order properties on sparse structures with
  algorithmic applications.
\newblock In {\em Proc. of the 28th Annual {ACM-SIAM} Symposium on Discrete
  Algorithms ({SODA})}, pages 2162--2181, 2017.

\bibitem{DBLP:conf/soda/ImpagliazzoMP12}
Russell Impagliazzo, William Matthews, and Ramamohan Paturi.
\newblock A satisfiability algorithm for {AC}$^0$.
\newblock In {\em Proceedings of the Twenty-Third Annual {ACM-SIAM} Symposium
  on Discrete Algorithms, {SODA} 2012, Kyoto, Japan, January 17-19, 2012},
  pages 961--972, 2012.
\newblock URL:
  \url{http://portal.acm.org/citation.cfm?id=2095193&CFID=63838676&CFTOKEN=79617016}.

\bibitem{DBLP:journals/jcss/ImpagliazzoP01}
Russell Impagliazzo and Ramamohan Paturi.
\newblock On the complexity of k-{SAT}.
\newblock {\em J. Comput. Syst. Sci.}, 62(2):367--375, 2001.
\newblock \href {http://dx.doi.org/10.1006/jcss.2000.1727}
  {\path{doi:10.1006/jcss.2000.1727}}.

\bibitem{DBLP:conf/focs/ImpagliazzoPS13}
Russell Impagliazzo, Ramamohan Paturi, and Stefan Schneider.
\newblock A satisfiability algorithm for sparse depth two threshold circuits.
\newblock In {\em 54th Annual {IEEE} Symposium on Foundations of Computer
  Science, {FOCS} 2013, 26-29 October, 2013, Berkeley, CA, {USA}}, pages
  479--488, 2013.
\newblock \href {http://dx.doi.org/10.1109/FOCS.2013.58}
  {\path{doi:10.1109/FOCS.2013.58}}.

\bibitem{JMV15}
Hamid Jahanjou, Eric Miles, and Emanuele Viola.
\newblock Local reductions.
\newblock In Magn{\'{u}}s~M. Halld{\'{o}}rsson, Kazuo Iwama, Naoki Kobayashi,
  and Bettina Speckmann, editors, {\em Automata, Languages, and Programming -
  42nd International Colloquium, {ICALP} 2015, Kyoto, Japan, July 6-10, 2015,
  Proceedings, Part {I}}, volume 9134 of {\em Lecture Notes in Computer
  Science}, pages 749--760. Springer, 2015.
\newblock \href {http://dx.doi.org/10.1007/978-3-662-47672-7_61}
  {\path{doi:10.1007/978-3-662-47672-7_61}}.

\bibitem{Jukna:2012:BFC:2190632}
Stasys Jukna.
\newblock {\em Boolean Function Complexity: Advances and Frontiers}.
\newblock Springer Publishing Company, Incorporated, 2012.

\bibitem{KPP14}
Tsvi Kopelowitz, Seth Pettie, and Ely Porat.
\newblock Higher lower bounds from the 3sum conjecture.
\newblock In Robert Krauthgamer, editor, {\em Proceedings of the Twenty-Seventh
  Annual {ACM-SIAM} Symposium on Discrete Algorithms, {SODA} 2016, Arlington,
  VA, USA, January 10-12, 2016}, pages 1272--1287. {SIAM}, 2016.
\newblock \href {http://dx.doi.org/10.1137/1.9781611974331.ch89}
  {\path{doi:10.1137/1.9781611974331.ch89}}.

\bibitem{KT17}
Robert Krauthgamer and Ohad Trabelsi.
\newblock Conditional lower bounds for all-pairs max-flow.
\newblock In Ioannis Chatzigiannakis, Piotr Indyk, Fabian Kuhn, and Anca
  Muscholl, editors, {\em 44th International Colloquium on Automata, Languages,
  and Programming, {ICALP} 2017, July 10-14, 2017, Warsaw, Poland}, volume~80
  of {\em LIPIcs}, pages 20:1--20:13. Schloss Dagstuhl - Leibniz-Zentrum fuer
  Informatik, 2017.
\newblock \href {http://dx.doi.org/10.4230/LIPIcs.ICALP.2017.20}
  {\path{doi:10.4230/LIPIcs.ICALP.2017.20}}.

\bibitem{KPS17}
Marvin K{\"{u}}nnemann, Ramamohan Paturi, and Stefan Schneider.
\newblock On the fine-grained complexity of one-dimensional dynamic
  programming.
\newblock In Ioannis Chatzigiannakis, Piotr Indyk, Fabian Kuhn, and Anca
  Muscholl, editors, {\em 44th International Colloquium on Automata, Languages,
  and Programming, {ICALP} 2017, July 10-14, 2017, Warsaw, Poland}, volume~80
  of {\em LIPIcs}, pages 21:1--21:15. Schloss Dagstuhl - Leibniz-Zentrum fuer
  Informatik, 2017.
\newblock \href {http://dx.doi.org/10.4230/LIPIcs.ICALP.2017.21}
  {\path{doi:10.4230/LIPIcs.ICALP.2017.21}}.

\bibitem{DBLP:conf/soda/LokshtanovMS11a}
Daniel Lokshtanov, D{\'{a}}niel Marx, and Saket Saurabh.
\newblock Known algorithms on graphs on bounded treewidth are probably optimal.
\newblock In Dana Randall, editor, {\em Proceedings of the Twenty-Second Annual
  {ACM-SIAM} Symposium on Discrete Algorithms, {SODA} 2011, San Francisco,
  California, USA, January 23-25, 2011}, pages 777--789. {SIAM}, 2011.
\newblock \href {http://dx.doi.org/10.1137/1.9781611973082.61}
  {\path{doi:10.1137/1.9781611973082.61}}.

\bibitem{MPS16}
Daniel Moeller, Ramamohan Paturi, and Stefan Schneider.
\newblock Subquadratic algorithms for succinct stable matching.
\newblock In Alexander~S. Kulikov and Gerhard~J. Woeginger, editors, {\em
  Computer Science - Theory and Applications - 11th International Computer
  Science Symposium in Russia, {CSR} 2016, St. Petersburg, Russia, June 9-13,
  2016, Proceedings}, volume 9691 of {\em Lecture Notes in Computer Science},
  pages 294--308. Springer, 2016.
\newblock \href {http://dx.doi.org/10.1007/978-3-319-34171-2_21}
  {\path{doi:10.1007/978-3-319-34171-2_21}}.

\bibitem{DBLP:conf/mfcs/NederlofLZ12}
Jesper Nederlof, Erik~Jan van Leeuwen, and Ruben van~der Zwaan.
\newblock Reducing a target interval to a few exact queries.
\newblock In Branislav Rovan, Vladimiro Sassone, and Peter Widmayer, editors,
  {\em Mathematical Foundations of Computer Science 2012 - 37th International
  Symposium, {MFCS} 2012, Bratislava, Slovakia, August 27-31, 2012.
  Proceedings}, volume 7464 of {\em Lecture Notes in Computer Science}, pages
  718--727. Springer, 2012.
\newblock \href {http://dx.doi.org/10.1007/978-3-642-32589-2_62}
  {\path{doi:10.1007/978-3-642-32589-2_62}}.

\bibitem{NP85}
Jaroslav Ne{\v{s}}et{\v{r}}il and Svatopluk Poljak.
\newblock On the complexity of the subgraph problem.
\newblock {\em Commentationes Mathematicae Universitatis Carolinae},
  26(2):415--419, 1985.

\bibitem{Pat10}
Mihai P\v{a}tra\c{s}cu.
\newblock Towards polynomial lower bounds for dynamic problems.
\newblock In Leonard~J. Schulman, editor, {\em Proceedings of the 42nd {ACM}
  Symposium on Theory of Computing, {STOC} 2010, Cambridge, Massachusetts, USA,
  5-8 June 2010}, pages 603--610. {ACM}, 2010.
\newblock \href {http://dx.doi.org/10.1145/1806689.1806772}
  {\path{doi:10.1145/1806689.1806772}}.

\bibitem{DBLP:conf/soda/PatrascuW10}
Mihai P\v{a}tra\c{s}cu and Ryan Williams.
\newblock On the possibility of faster {SAT} algorithms.
\newblock In Moses Charikar, editor, {\em Proceedings of the Twenty-First
  Annual {ACM-SIAM} Symposium on Discrete Algorithms, {SODA} 2010, Austin,
  Texas, USA, January 17-19, 2010}, pages 1065--1075. {SIAM}, 2010.
\newblock \href {http://dx.doi.org/10.1137/1.9781611973075.86}
  {\path{doi:10.1137/1.9781611973075.86}}.

\bibitem{RV13}
Liam Roditty and Virginia~Vassilevska Williams.
\newblock Fast approximation algorithms for the diameter and radius of sparse
  graphs.
\newblock In Dan Boneh, Tim Roughgarden, and Joan Feigenbaum, editors, {\em
  Symposium on Theory of Computing Conference, STOC'13, Palo Alto, CA, USA,
  June 1-4, 2013}, pages 515--524. {ACM}, 2013.
\newblock \href {http://dx.doi.org/10.1145/2488608.2488673}
  {\path{doi:10.1145/2488608.2488673}}.

\bibitem{RZ04}
Liam Roditty and Uri Zwick.
\newblock On dynamic shortest paths problems.
\newblock In Susanne Albers and Tomasz Radzik, editors, {\em Algorithms - {ESA}
  2004, 12th Annual European Symposium, Bergen, Norway, September 14-17, 2004,
  Proceedings}, volume 3221 of {\em Lecture Notes in Computer Science}, pages
  580--591. Springer, 2004.
\newblock \href {http://dx.doi.org/10.1007/978-3-540-30140-0_52}
  {\path{doi:10.1007/978-3-540-30140-0_52}}.

\bibitem{Saha15}
Barna Saha.
\newblock Language edit distance and maximum likelihood parsing of stochastic
  grammars: Faster algorithms and connection to fundamental graph problems.
\newblock In Venkatesan Guruswami, editor, {\em {IEEE} 56th Annual Symposium on
  Foundations of Computer Science, {FOCS} 2015, Berkeley, CA, USA, 17-20
  October, 2015}, pages 118--135. {IEEE} Computer Society, 2015.
\newblock \href {http://dx.doi.org/10.1109/FOCS.2015.17}
  {\path{doi:10.1109/FOCS.2015.17}}.

\bibitem{San12}
Rahul Santhanam and Srikanth Srinivasan.
\newblock On the limits of sparsification.
\newblock In Artur Czumaj, Kurt Mehlhorn, Andrew~M. Pitts, and Roger
  Wattenhofer, editors, {\em Automata, Languages, and Programming - 39th
  International Colloquium, {ICALP} 2012, Warwick, UK, July 9-13, 2012,
  Proceedings, Part {I}}, volume 7391 of {\em Lecture Notes in Computer
  Science}, pages 774--785. Springer, 2012.
\newblock \href {http://dx.doi.org/10.1007/978-3-642-31594-7_65}
  {\path{doi:10.1007/978-3-642-31594-7_65}}.

\bibitem{SCHNITGER198289}
Georg Schnitger.
\newblock A family of graphs with expensive depth reduction.
\newblock {\em Theoretical Computer Science}, 18:89--93, 1982.
\newblock \href {http://dx.doi.org/10.1016/0304-3975(82)90113-X}
  {\path{doi:10.1016/0304-3975(82)90113-X}}.

\bibitem{DBLP:conf/mfcs/Valiant77}
Leslie~G. Valiant.
\newblock Graph-theoretic arguments in low-level complexity.
\newblock In Jozef Gruska, editor, {\em Mathematical Foundations of Computer
  Science 1977, 6th Symposium, Tatranska Lomnica, Czechoslovakia, September
  5-9, 1977, Proceedings}, volume~53 of {\em Lecture Notes in Computer
  Science}, pages 162--176. Springer, 1977.
\newblock \href {http://dx.doi.org/10.1007/3-540-08353-7_135}
  {\path{doi:10.1007/3-540-08353-7_135}}.

\bibitem{TCS-033}
Emanuele Viola.
\newblock On the power of small-depth computation.
\newblock {\em Foundations and Trends in Theoretical Computer Science},
  5(1):1--72, 2009.
\newblock \href {http://dx.doi.org/10.1561/0400000033}
  {\path{doi:10.1561/0400000033}}.

\bibitem{Wil05}
Ryan Williams.
\newblock A new algorithm for optimal 2-constraint satisfaction and its
  implications.
\newblock {\em Theoretical Computer Science}, 348(2-3):357--365, 2005.
\newblock \href {http://dx.doi.org/10.1016/j.tcs.2005.09.023}
  {\path{doi:10.1016/j.tcs.2005.09.023}}.

\bibitem{Wil13}
Ryan Williams.
\newblock Improving exhaustive search implies superpolynomial lower bounds.
\newblock {\em {SIAM} J. Comput.}, 42(3):1218--1244, 2013.
\newblock \href {http://dx.doi.org/10.1137/10080703X}
  {\path{doi:10.1137/10080703X}}.

\bibitem{Wil14}
Ryan Williams.
\newblock Faster all-pairs shortest paths via circuit complexity.
\newblock In David~B. Shmoys, editor, {\em Symposium on Theory of Computing,
  {STOC} 2014, New York, NY, USA, May 31 - June 03, 2014}, pages 664--673.
  {ACM}, 2014.
\newblock \href {http://dx.doi.org/10.1145/2591796.2591811}
  {\path{doi:10.1145/2591796.2591811}}.

\bibitem{DBLP:journals/jacm/Williams14}
Ryan Williams.
\newblock Nonuniform {ACC} circuit lower bounds.
\newblock {\em J. {ACM}}, 61(1):2:1--2:32, 2014.
\newblock \href {http://dx.doi.org/10.1145/2559903}
  {\path{doi:10.1145/2559903}}.

\bibitem{Vass15}
Virginia~Vassilevska Williams.
\newblock Hardness of easy problems: Basing hardness on popular conjectures
  such as the strong exponential time hypothesis (invited talk).
\newblock In Thore Husfeldt and Iyad~A. Kanj, editors, {\em 10th International
  Symposium on Parameterized and Exact Computation, {IPEC} 2015, September
  16-18, 2015, Patras, Greece}, volume~43 of {\em LIPIcs}, pages 17--29.
  Schloss Dagstuhl - Leibniz-Zentrum fuer Informatik, 2015.
\newblock \href {http://dx.doi.org/10.4230/LIPIcs.IPEC.2015.17}
  {\path{doi:10.4230/LIPIcs.IPEC.2015.17}}.

\bibitem{DBLP:conf/focs/WilliamsW10}
Virginia~Vassilevska Williams and Ryan Williams.
\newblock Subcubic equivalences between path, matrix and triangle problems.
\newblock In {\em 51th Annual {IEEE} Symposium on Foundations of Computer
  Science, {FOCS} 2010, October 23-26, 2010, Las Vegas, Nevada, {USA}}, pages
  645--654. {IEEE} Computer Society, 2010.
\newblock \href {http://dx.doi.org/10.1109/FOCS.2010.67}
  {\path{doi:10.1109/FOCS.2010.67}}.

\bibitem{VW09}
Virginia~Vassilevska Williams and Ryan Williams.
\newblock Finding, minimizing, and counting weighted subgraphs.
\newblock volume~42, pages 831--854, 2013.
\newblock \href {http://dx.doi.org/10.1137/09076619X}
  {\path{doi:10.1137/09076619X}}.

\end{thebibliography}

	\appendix
  \vfill
\section{Schematic Overview of our Results}\label{app}
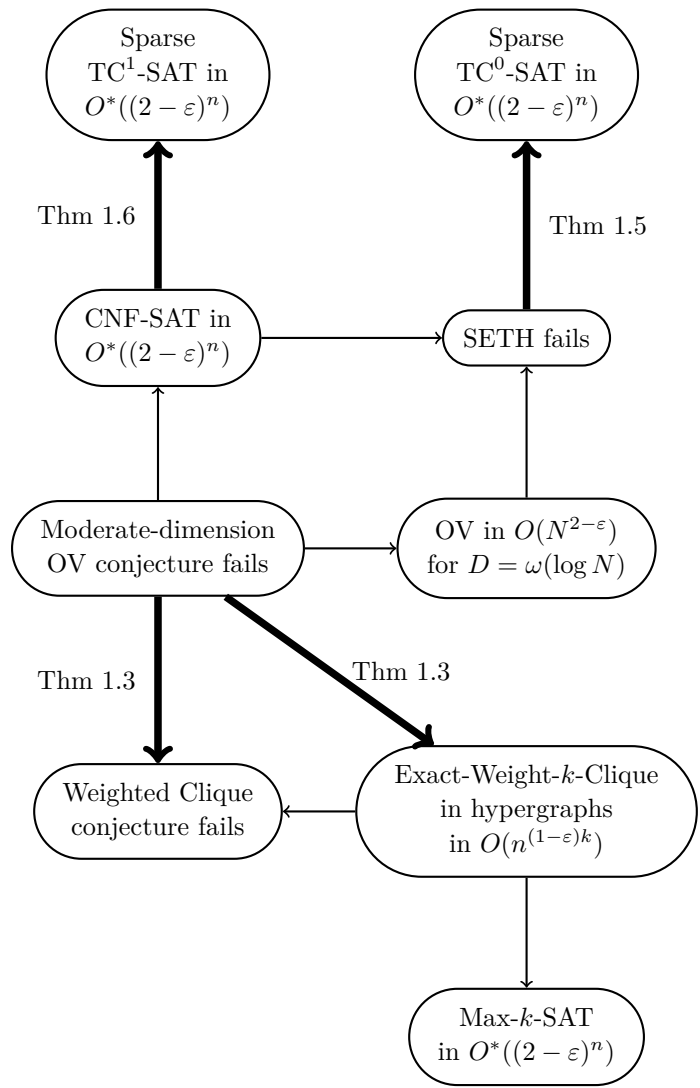
\begin{figure}[h]
  \centering
\tikzstyle{vecArrow} = [thick, decoration={markings,mark=at position
   1 with {\arrow[semithick]{open triangle 60}}},
   double distance=1.4pt, shorten >= 5.5pt,
   preaction = {decorate},
   postaction = {draw,line width=1.4pt, white,shorten >= 4.5pt}]
\tikzstyle{innerWhite} = [semithick, white,line width=1.4pt, shorten >= 4.5pt]

\noindent\begin{tikzpicture}[thick,every node/.style={scale=0.85},node distance=2.8cm,inner sep = 7pt,every node/.style=on grid]
\begin{scope}
  \node[draw,rounded rectangle,align=center] (a) {CNF-SAT in\\$O^*((2-\eps)^n)$};
  \node[draw,rounded rectangle,below of=a,align=center] (b)
    {Moderate-dimension\\OV conjecture fails};
  \node[draw,rounded rectangle,right of=b,align=center,xshift=2.1cm] (d) {OV in $O(N^{2-\eps})$ \\ for $D = \omega(\log N)$};
  \node[draw,rounded rectangle,above of=d,align=center] (e) {SETH fails};
  \node[draw,rounded rectangle,above of=e,align=center,yshift=0.7cm] (f) {Sparse \\ \TC0-SAT in \\ $O^*((2-\eps)^n)$};
  \node[draw,rounded rectangle,above of=a,align=center,yshift=0.7cm] (g) {Sparse \\ \TC1-SAT in \\ $O^*((2-\eps)^n)$};
  \node[draw,rounded rectangle,below of=b,align=center,yshift=-0.7cm] (h) {Weighted Clique \\ conjecture fails};
  \node[draw,rounded rectangle,below of=d,align=center,yshift=-0.7cm] (i) {Exact-Weight-$k$-Clique \\ in hypergraphs \\ in $O(n^{(1-\eps)k})$};
  \node[draw,rounded rectangle,below of=i,align=center,yshift=-0.2cm] (j) {Max-$k$-SAT \\ in $O^*((2-\eps)^n)$};
  
  \draw[->] (b) to (a);
  \draw[->] (a) to (e);
  \draw[->] (d) to (e);
  \draw[->] (b) to (d);
  \draw[->,line width=1mm] (e) -- (f) node[midway,right] {Thm~\ref{theorem: TC0}};
  \draw[->,line width=1mm] (a) -- (g) node[midway,left] {Thm~\ref{theorem: TC1}};
  \draw[->] (i) to (j);
  \draw[->] (i) to (h);
  \draw[->,line width=1mm] (b) -- (h) node[midway,left] {Thm~\ref{theorem: OV main}};
  \draw[->,line width=1mm] (b) -- (i) node[midway,right] {Thm~\ref{theorem: OV main}};
\end{scope}
\end{tikzpicture}
  \caption{\label{fig2}%
	An overview of relevant implications. New implications presented in this paper are displayed with bold arcs and labeled with the theorem number.}
\end{figure}

\end{document}